\documentclass{scrartcl}
\usepackage[margin=1in]{geometry}
\usepackage{amsmath}
\usepackage{amsfonts}
\usepackage{amssymb}
\usepackage{amsthm}
\usepackage{authblk}
\usepackage{enumitem}
\usepackage{mathtools}
\usepackage{thmtools}
\usepackage{xcolor}

\usepackage{subcaption}

\usepackage[T1]{fontenc}

\usepackage[colorlinks]{hyperref}
\usepackage[capitalize]{cleveref}
\hypersetup{linkcolor=blue,citecolor=blue,urlcolor=black}

\newtheorem{theorem}{Theorem}[section]
\newtheorem{lemma}[theorem]{Lemma}
\newtheorem{corollary}[theorem]{Corollary}
\newtheorem{conjecture}[theorem]{Conjecture}
\newtheorem{observation}[theorem]{Observation}

\newcommand{\N}{\mathbb{N}}

\newcommand{\fO}{\mathcal{O}}
\newcommand{\fF}{\mathcal{F}}

\DeclareMathOperator{\Time}{\mathbb{T}}
\DeclareMathOperator{\OPT}{OPT}
\DeclareMathOperator{\OPTZ}{\OPT_0}
\DeclareMathOperator{\OPTCT}{\OPT_{\mathrm{CT}}}
\DeclareMathOperator{\MID}{MID}
\newcommand{\alg}[1]{\textup{\texttt{#1}}}
\DeclareMathOperator{\id}{\alg{id}}

\DeclareMathOperator{\depth}{depth}
\DeclareMathOperator{\height}{height}

\DeclareMathOperator{\ub}{ub}
\DeclareMathOperator{\lb}{lb}
\newcommand{\fC}{\mathcal{C}} %

\newcommand{\logit}[1]{\log^{[#1]}}

\title{Fast decremental tree sums in forests}
\author[1,2]{Benjamin Aram Berendsohn}
\author[1,3]{Marek Sokołowski}
\affil[1]{\normalsize Max Planck Institute for Informatics, Saarbrücken, Germany}
\affil[2]{\url{benjamin.berendsohn@fu-berlin.de}}
\affil[3]{\url{msokolow@mpi-inf.mpg.de}}
\date{\vspace{-1em}}

\begin{document}
	
	\maketitle
	
	\begin{abstract}
		We study two fundamental decremental dynamic graph problems. In both problems, we need to maintain a vertex-weighted forest of size $n$ under edge deletions, weight updates, and a certain information-retrieval query. Both problems can be solved in $\fO(\log n)$ time per update/query using standard dynamic forest data structures like top trees -- even if additionally edge \emph{insertions} are allowed. We investigate whether the deletion-only problem can be solved faster.
		
		First, we consider \alg{tree-sum} queries, where we ask for the sum of vertex weights in one of the connected components (i.e., trees) in the forest. We give a data structure with $\fO(n)$ preprocessing time and $\fO(\log^* n)$ time per operation, based on a micro-macro tree decomposition (Alstrup et al., 1997).
		If the forest is \emph{unweighted} (i.e., all weights are 1 and cannot be changed), then the operation time can be improved to $\fO(1)$.
		
		Additionally, we give an asymptotically \emph{universally optimal} algorithm. More specifically, our algorithm works in the \emph{group model}, and processes $m$ operations on an~initial forest $F$ in running time $\fO( \OPT(F, m) )$. Here $\OPT(F, m)$ is the number of weight additions and subtractions that a best possible algorithm performs to handle a~worst-case instance for a~fixed initial forest $F$ and a fixed number $m$ of operations. We achieve this with a combination of the aforementioned decomposition technique, precomputation of optimal data structures for very small instances, and some insights into the behavior of $\OPT$.
		Note that even the \emph{worst-case} complexity of this algorithm remains unknown to us.
		
		Second, we consider \alg{subtree-sum} queries. Here, the forest is rooted, and a query $\alg{subtree-sum}(v)$ returns the sum of weights in the subtree rooted at~$v$. An easy reduction from the well-known \emph{prefix sum} problem shows that the general, weighted version of the problem requires $\Theta(n \log n)$ time for $n$ operations. Interestingly, we prove that the $\Omega(n \log n)$ complexity lower bound still holds even if weight updates are disallowed.
		On the other hand, we show that the unweighted version can be solved with $\fO( \tfrac{\log n}{\log\log n})$ time per operation, and this is tight.
	\end{abstract}
	
	\newpage
	
	\section{Introduction}\label{sec:intro}

	Maintaining connectivity in dynamic graphs is one of the fundamental problems in the design of algorithms, with a~plethora of major results in the area throughout the decades.
	An~efficient data structure maintaining connectivity in dynamic \emph{forests} was given by Sleator and Tarjan~\cite{SleatorTarjan1983}: Their \emph{link/cut trees} support connectivity queries under edge insertions and removals in an~$n$-vertex forest in $\fO(\log n)$ time.
	They also show that their data structure supports \emph{weights} of vertices and various queries related to these weights in the same time complexity, such as the sum of weights of vertices in a~tree component containing a~given vertex, or the sum of weights of vertices in a~subtree rooted at a~specified vertex.
	The optimality of this data structure -- even in the unweighted setting -- was only shown much later by P{\u{a}}tra{\c{s}}cu and Demaine~\cite{PatrascuDemaine2006}.
	In the more general setting of arbitrary dynamic graphs, the asymptotic behavior of an~optimal data structure is still not fully understood, with the currently best algorithm of Huang et al.~\cite{HuangHuangKPT2023} being a~factor of $\Theta((\log \log n)^2)$ away from the logarithmic lower bound.
	
	Aiming to break through the logarithmic operation lower bound, we may decide to restrict the space of possible updates to the graph.
	The \emph{incremental} setting, in which edges may only be inserted to the graph, is precisely the \emph{disjoint set union} problem, which was already studied by Galler and Fischer~\cite{GallerFischer1964} and subsequently improved: Hopcroft and Ullman gave a~data structure supporting insertions and connectivity queries in $\fO(\log^* n)$ time per operation~\cite{HopcroftUllman1973}, with the complexity analysis later refined by Tarjan to $\fO(\alpha(n))$ per operation~\cite{Tarjan1975}.
	Here, $\log^* n$ and $\alpha(n)$ denote two very slow-growing functions, namely the \emph{iterated logarithm} and the \emph{inverse Ackermann function}.
	Again, the presented data structures can be easily adapted to support weights of vertices and report aggregate information on the connected components of the graphs.
	The bound of $\fO(\alpha(n))$ was later shown to be optimal in various computational models by Tarjan~\cite{Tarjan1979} and Fredman and Saks~\cite{FredmanSaks1989}, even under the promise that the graph always remains a~forest.

	In this work, we focus on the \emph{decremental} setting, where a~data structure is given an~initial graph and edges may only be removed from the graph.
	This restriction has also received considerable attention over the years, and the decremental connectivity problem has been solved optimally: A~data structure processing $m$ operations in an~$n$-vertex forest in total $\fO(n + m)$ time was given by Alstrup, Secher and Spork~\cite{AlstrupSecherEtAl1997}, with the result later strengthened to planar graphs by Łącki and Sankowski~\cite{LackiSankowski2017}.
	However, interestingly, even the decremental forest data structure employs various bit-optimization tricks to achieve the optimal time complexity, which renders it unable to handle weights of vertices efficiently.
	This suggests a~natural question: How efficiently could we implement a~data structure for \emph{weighted} decremental forests that can also report summaries of tree components -- such as the sum of the weights of the vertices in the component -- dynamically?
	And can such a~data structure also handle more complex queries, such as the sum of the weights of the vertices in a~rooted subtree of a~specified tree component?

	\paragraph{Setting.} We now formally introduce the problem. The task is to maintain a vertex-weighted \emph{rooted} forest $(F,w)$, where weights come from some commutative group $G$, under some or all of the following operations:
	\begin{itemize}[nosep]
		\item $\alg{cut}(v)$: Delete the edge between $v$ and its parent (we assert that this edge exists).
		\item $\alg{update-weight}(v, x)$: Set $w(v) \gets x$.
		\item $\alg{tree-sum}(v) \rightarrow x$: Return the sum $x$ of weights in the tree containing $v$.
		\item $\alg{subtree-sum}(v) \rightarrow x$: Return the sum $x$ of weights in the subtree $F_v$ rooted at $v$.
	\end{itemize}

	We are mostly interested in the \emph{group model} of computation, where the weights of the vertices are considered to be opaque objects.
	The implementation of a~data structure may manipulate the weights (elements of the group) by adding them together or subtracting one from another, but no other operations -- such as comparisons or the extraction of the representation of the weight in memory -- are allowed.

	All operations listed above can be implemented in $\fO(\log n)$ worst-case time in the group model with well-known efficient fully dynamic forest data structures such as aforementioned link/cut trees~\cite{SleatorTarjan1983} or top trees~\cite{AlstrupHolmLT2005}, so any sequence of $m$ operations can be processed in $\fO((n + m) \log n)$ time.
	The question remains: Can (some or all) of the operations above be implemented more efficiently, knowing that the edges may only be removed from the forest?

	\subsection{Results and techniques}
	
	In the following, we formally present our results and give basic ideas on how to prove them. In the informal discussion following each result, we will assume for convenience that the number of operations $m$ is roughly equal to the number $n$ of vertices in the input forest. Thus the running times in the discussion will always be the total of the initialization time and the time required to process all $m \approx n$ operations. We will also assume the forest given initially is a tree.
	
	\paragraph{Tree-sums with recursive cluster decompositions.}
	Our first result concerns the basic \alg{tree-sum} problem.
	
	\begin{restatable}{theorem}{restateTreeSum}\label{p:tree-sum}
		There is a data structure that maintains a weighted rooted forest $(F,w)$ on $n$ vertices under \alg{cut}, \alg{update-weight}, and \alg{tree-sum}, with running time $\fO( (m+n) \log^* n)$ for $m$ operations.
	\end{restatable}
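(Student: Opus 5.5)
We plan to prove \cref{p:tree-sum} with a recursive micro-macro decomposition in the style of Alstrup, Secher and Spork~\cite{AlstrupSecherEtAl1997}. A forest on $n$ vertices is partitioned into \emph{clusters}: connected vertex sets of size at most $s$, each with at most two boundary vertices connecting it to the rest. This gives $\fO(n/s)$ clusters. Contracting each cluster yields a \emph{macro forest} of size $\fO(n/s)$. Each cluster in the macro forest stores the sum of the weights of its vertices.

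A \alg{tree-sum} query in $F$ splits into two parts. The first is the part of the answer inside the query vertex's cluster, which is handled recursively in the micro structure. The second is a \alg{tree-sum}-like query in the macro structure. A \alg{cut} inside a cluster may separate the cluster's boundary vertices. In that case we cut the corresponding macro edge, and the macro node is split into at most two or three pieces whose weight sums are recomputed from the micro structure. Since clusters have at most two boundary vertices, a cluster after cuts has only $\fO(1)$ relevant pieces: the component of each boundary vertex, plus the components not containing a boundary vertex. The last kind are entirely internal and are answered locally.

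We recurse with $s = \log n$ at the top level, so the macro forest has $\fO(n/\log n)$ vertices. On it we can afford an $\fO(\log n)$-per-operation top tree, which costs $\fO(n)$ total for $\fO(n/\log n)$ vertices, but only if each operation touches the macro level a constant number of times. Micro clusters of size $s$ are handled by recursing with parameter $\log s$, and so on. The recursion depth is $\log^* n$, and each operation descends through one cluster per level doing $\fO(1)$ work per level. The total is therefore $\fO((m+n)\log^* n)$. Preprocessing at each level is linear in the level's total size, summing to $\fO(n \log^* n)$, or even $\fO(n)$ with care.

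More precisely, each cluster of size $k$ is maintained by a structure of the same type, recursively. The interface exposed to the parent level must be generalized: in addition to \alg{tree-sum}, the structure answers ``the sum of the component containing boundary vertex $b$'' and ``are boundary vertices $b_1,b_2$ still connected.'' These are again tree-sum and connectivity queries. Connectivity reduces to tree-sums by also tracking component sizes, i.e.\ a second weight that is $1$ everywhere and is also aggregated. The key invariant is that a weight update at level $i$ changes the stored sum of exactly one node at level $i+1$, i.e.\ $\fO(1)$ group operations per level. A cut changes $\fO(1)$ macro edges and weights per level. Pieces of a cluster with no boundary vertex are detached from the macro forest permanently, so queries in them never go up.

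The main obstacle is bounding the macro-level cost. The top-level top tree costs $\fO(\log n)$ per operation, not $\fO(\log^* n)$. To fix this we use a three-level scheme: the top level has clusters of size $\log n$ and a top tree on $n/\log n$ nodes, and the same recursion is applied inside clusters. At level $i$ with cluster size $s_i$, we use a top tree on the $\fO(s_{i-1}/s_i)$ macro nodes of each parent cluster, costing $\fO(\log s_{i-1}) = \fO(s_i)$ per operation. This is too much if done per operation. The remedy, which we expect to be the crux, is amortization. Weight updates and queries at macro levels are charged $\fO(\log s_{i-1})$, so we must set the decomposition so that only the top level uses a top tree. Lower levels instead collapse once their size drops below $\log\log n$, where we precompute or use a simple $\fO(\log^* n)$-depth recursion with a trivial per-level cost. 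Concretely, we will try making every level use a top tree of size $s_{i-1}/s_i$ with $s_i=\log s_{i-1}$. Each level then costs $\fO(\log(s_{i-1}/s_i))=\fO(s_i)$, and these per-level costs form a geometrically decreasing sequence dominated by the top level's $\fO(\log n)$. Because that is still too large, we must add the refinement that the top-level macro work occurs only on cuts that separate boundary vertices, which number $\fO(n/\log n)$, while top-level queries are answered by keeping, for each macro tree, its total sum explicitly, updated in $\fO(1)$ on weight changes. Verifying that such cached totals can be maintained under macro cuts within budget is where we expect the most difficulty.
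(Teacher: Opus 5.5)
There is a genuine gap, and it is the step you defer at the end. You correctly see that a top tree on the macro forest is too slow. Every \alg{update-weight} and every cut inside a cluster may change a macro weight, so the macro level receives up to $\Theta(m)$ weight updates and queries at $\Theta(\log n)$ each. Your initial estimate that a top tree on $\fO(n/\log n)$ nodes costs $\fO(n)$ in total therefore only holds if the macro level sees $\fO(n/\log n)$ operations. You also arrive at the right requirement: a macro structure with $\fO(1)$ \alg{update-weight} and \alg{tree-sum}, in which only the $\fO(n/\log n)$ macro cuts are expensive. But you never supply it. ``Cached totals maintained under macro cuts within budget'' is the whole content of the missing lemma, and without it nothing better than $\fO(\log n)$ per operation follows.

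The paper provides this in \cref{p:tree-sum-logn}, a decremental Even--Shiloach structure. Store each tree's weight sum at its root, and find roots in $\fO(1)$ amortized time with the decremental connectivity structure of \cref{p:roots}. On a cut, traverse the two new trees in parallel and stop as soon as the smaller one is exhausted. Compute its sum directly and obtain the other by one subtraction from the old total. The recurrence $t(N)\le\max_{k}\bigl(\min\{k,N-k\}+t(k)+t(N-k)\bigr)$ bounds all cuts by $\fO(N\log N)$, which is $\fO(n)$ for $N=\fO(n/\log n)$.

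This structure must also serve as the macro structure at \emph{every} recursion level, on the cluster tree of every cluster, not just at the top. Your alternatives for lower levels (top trees on $s_{i-1}/s_i$ nodes, or an unspecified collapse below $\log\log n$) pay $\Theta(\log s_{i-1})$ per macro operation, so they are not linear per level. With the smaller-half structure, a cluster of size $n_i$ has a cluster tree of size $\fO(n_i/\log n_i)$ and hence $\fO(n_i)$ total cut time. Each level then costs $\fO(n+m)$, giving $\fO(t(m+n)+n\logit{t}n)$ for $t$ levels as in \cref{p:tree-sum-param}.

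Two smaller points:
\begin{itemize}
\item Connectivity does not reduce to tree-sums with all-one weights, since distinct components can have equal size. Use \cref{p:roots} directly.
\item To really get one child call per operation per level, the child's cut should itself return the sums of both new pieces (the paper's \alg{cut-report}). If you instead query the component of a boundary vertex in the child separately, that query may recurse through all remaining levels, because a lower boundary vertex need not be a boundary vertex one level down. The straightforward per-cut bound then degrades to $\fO((\log^* n)^2)$.
\end{itemize}
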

	
	We prove \cref{p:tree-sum} in \cref{sec:tree-sum} using the well-known technique of \emph{micro-macro decomposition} (defined in \cite{AlstrupSecherEtAl1997}, though similar ideas already exist in~\cite{GabowTarjan1985}), although we adapt it to our purposes and call it \emph{cluster decomposition}. Here, we define this decomposition for \emph{rooted trees}, which simplifies the presentation considerably.
	The basic idea is to partition the input tree into $\Theta(\tfrac{n}{\log n})$ \emph{clusters}, each of size $\fO(\log n)$, and contract each of these parts into either a single vertex or two vertices connected by an~edge to obtain a \emph{cluster tree} of size $\Theta(\tfrac{n}{\log n})$.
	
	Recall a~basic $\fO(n \log n)$ algorithm; call it $\mathsf{Alg}_1$.
	An $\fO(n \log\log n)$ algorithm $\mathsf{Alg}_2$ can be obtained as follows: Construct the cluster tree in linear time, run $\mathsf{Alg}_1$ on this cluster tree, and the same algorithm on each cluster. The running time for the cluster tree algorithm is $\fO(\tfrac{n}{\log n} \log \tfrac{n}{\log n}) = \fO(n)$, and the running time for each cluster with $k \le \log n$ vertices is $\fO(k \log k)$, for a total of $\fO(n \log\log n)$. This approach can be iterated: an~algorithm $\mathsf{Alg}_t$ is implemented the same way, only that $\mathsf{Alg}_{t-1}$ is used for each of the clusters. Iterating this approach $\log^* n$ times gives an algorithm with $\log^* n$ ``levels'', each of which takes $\fO(n)$ time in total; thus, the total running time of $\mathsf{Alg}_{\log^* n}$ is $\fO(n \log^* n)$.
	
	Of course, the above presentation is heavily simplified. We will note one interesting detail. Performing a \alg{cut} within a cluster may affect the weights stored in the cluster tree, necessitating an \alg{update-weight} operation on the cluster tree. Hence, the number of operations in the cluster tree can be $\Theta(n)$, instead of $\Theta(\tfrac{n}{\log n})$, as we assumed above. The solution to this is to use an algorithm where each \alg{update-weight} operation takes only $\fO(1)$ time (whereas \alg{cut}s each take $\Theta(\log n)$ amortized time). Note that standard dynamic tree solutions require $\Theta(\log n)$ per operation, and thus are not suitable for this purpose. Instead, we design a simple decremental-only algorithm $\mathsf{Alg}_1$ with this property.
	
	\paragraph{Unweighted and 0-1-weighted tree-sums.}
	For unweighted forests, we can improve the above algorithm (\cref{sec:tree-size}). 
	In fact, our solution works when the weight function only takes values zero and one.\footnote{The underlying group is the set of integers, so if all weights are one, we recover the unweighted tree-sum problem.}
	
	\begin{restatable}{theorem}{restateTreeSize}\label{p:tree-0-1-sum}
		There is a data structure in the Word RAM model of computation that maintains a 0-1-weighted rooted forest $(F,w)$ with $n$ vertices under \alg{cut}, \alg{update-weight} and \alg{tree-sum} operations, in $\fO(m+n)$ time for $m$ operations.
	\end{restatable}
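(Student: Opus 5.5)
We follow the cluster decomposition used for \cref{p:tree-sum}, but with only two levels: a macro level where an amortized logarithmic algorithm is affordable, and a micro level solved by word-RAM table lookup. The key difference from the weighted case is that with 0-1 weights the state of a micro cluster (which edges are cut, which vertices have weight one) fits in $\fO(\log n)$ bits. Answering questions about such a cluster therefore reduces to precomputed tables or to a constant number of word operations (popcount, masks).

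First we partition each tree of $F$ into $\fO(n/\log n)$ clusters of size $\fO(\log n)$, each with at most two boundary vertices (top root and one bottom boundary), as in the cluster decomposition of \cref{sec:tree-sum}. We build the cluster tree of size $\fO(n/\log n)$, where the weight of a cluster node is the number of weight-one vertices of the cluster still connected to the appropriate boundary vertex. Inside a cluster of size $k \le \frac12\log n$ (say), we encode the current edge set as a $k$-bit mask and the weights as another $k$-bit mask. We also store a fixed encoding of the cluster's initial shape via a preorder numbering.

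We would then precompute, in $o(n)$ time, a table indexed by (shape, edge mask, vertex) that returns the bitmask of vertices in the same component as that vertex within the cluster. Since the number of shapes is $2^{\fO(k)}$, choosing $k = c\log n$ with small $c$ makes the table $n^{1-\epsilon}$ in size. If tables over shapes are awkward, an alternative is to use the decremental connectivity structure of Alstrup et al.\ \cite{AlstrupSecherEtAl1997} as the underlying component-mask machinery. A \alg{tree-sum}$(v)$ query inside a cluster then does the following. It ANDs the component mask with the weight mask and takes a popcount. If the component contains a boundary vertex, it additionally queries the macro structure for the sum of the macro tree containing that boundary, correcting for the cluster's own contribution.

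A \alg{cut} or \alg{update-weight} changes a mask in $\fO(1)$ time and recomputes the $\fO(1)$ affected boundary counts by popcounts. It then issues $\fO(1)$ \alg{update-weight} operations, or a \alg{cut} if a boundary connection is severed, to the macro structure. For the macro structure we use $\mathsf{Alg}_1$ from \cref{sec:tree-sum}: $\fO(1)$ per \alg{update-weight} and \alg{tree-sum}, with amortized $\fO(\log n)$ per \alg{cut}. Since only $\fO(n/\log n)$ cuts can ever occur at macro level (each macro edge is cut at most once), their total cost is $\fO(n)$. Everything else is $\fO(1)$ per operation, giving $\fO(n+m)$.

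The main obstacle is this accounting of macro cuts versus macro weight updates. A single micro cut can disconnect part of a cluster from its boundary path, which changes the cluster's boundary weights but must not count as a macro cut. We must ensure each micro operation triggers only $\fO(1)$ macro \alg{update-weight} calls, which $\mathsf{Alg}_1$ handles in $\fO(1)$. We also need the two-boundary cluster's contribution to split correctly between top and bottom when the top–bottom path is cut; in that case the cluster node becomes two macro nodes with the macro edge cut once. The first thing to try is representing each two-boundary cluster as two macro vertices joined by an edge, as the paper's cluster tree already does.
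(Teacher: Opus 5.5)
Your argument is correct, but it is organized differently from the paper's. The paper's micro structure (\cref{p:tree-size-small}) encodes an entire 0-1-weighted forest on $\ell$ vertices by parent pointers plus weight bits. That is $\Theta(\ell\log\ell)$ bits, indexing a precomputed state table that stores every \alg{tree-sum} answer and the successor state for each \alg{cut} and \alg{update-weight}. Such a table is only affordable for $\ell=\lfloor\log\log n\rfloor$, so the paper applies the generic reduction \cref{p:tree-sum-red} twice ($n \to \log n \to \log\log n$).

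You instead use that a cluster's shape is static and only two $k$-bit masks change. A $2k$-bit shape code plus the edge mask and a vertex then index a table of size $2^{\fO(k)}=n^{\fO(c)}$, and answers come from an AND and a popcount. This allows micro clusters of size $\Theta(\log n)$, so one application of (essentially) \cref{p:tree-sum-red}, with $\mathsf{Alg}_1$ on the $\fO(n/\log n)$-vertex cluster forest, suffices (after binarizing with \cref{p:binarize}, which you leave implicit).

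Your route saves a recursion level and gives a more explicit micro structure. The price is word-level bit tricks and some constant tuning. Popcount is not among the paper's unit-cost Word RAM operations, so implement it by a further lookup table of size $2^{\fO(c\log n)}$. Also, $c$ really must be small: with your first suggestion $k=\tfrac12\log n$, the table (at most $4^k$ shapes, $2^k$ edge masks, $k$ vertices) already has about $n^{3/2}$ entries, so you need roughly $c<1/3$. The paper's route is more black-box: \cref{p:tree-size-small} needs nothing beyond table indexing, and \cref{p:tree-sum-red} is reused verbatim.

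Three small remarks:
\begin{enumerate}
\item By \cref{p:ts:w'-preserves}, when $v$ is connected to a boundary vertex $u$ the answer is just the macro structure's \alg{tree-sum}$(u)$, so no correction term is needed.
\item The macro cut cost should be justified by the total bound $\fO(|V(G)|\log|V(G)|)=\fO(n)$ from \cref{p:tree-sum-logn}. It is not an amortized $\fO(\log n)$ bound per cut, since a single cut can cost $\Theta(|V(G)|)$.
\item The Alstrup et al.\ alternative is underspecified, since that structure answers connectivity queries, not component masks. Your main route does not need it.
\end{enumerate}
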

	
	We describe the algorithm for the unweighted case.
	The idea is to apply the cluster decomposition above only twice; then the bottom-level clusters form a decomposition of the input tree into $\fO(\tfrac{n}{\log\log n})$ small trees of size at most $k \le \fO(\log\log n)$. Note that the number of distinct (labeled) trees of size at most $k$ is no more than $k^{\fO(k)} \ll \tfrac{n}{\log\log n}$.
	Thus, we can essentially precompute all answers for every possible such tree, and then do a lookup for each cluster.
	
	This technique is commonly used in algorithm design~\cite{ArlazarovDinicEtAl1970,BerkmanVishkin1994,BenderFarachColton2000,BenderFarachColton2004}. For a \emph{data structure} this is not quite as straightforward, since the queries are not known at the beginning. Still, we can compute something akin to a \emph{state graph} that stores all necessary information, and can be navigated efficiently.
	
	\paragraph{An optimal weighted tree-sum algorithm.}
	We can take the idea of precomputation even further (\cref{sec:tree-sum-optimal}), based on a~technique of \emph{decision trees} used by Pettie and Ramachandran~\cite{PettieRamachandran2002} (also see e.g.~\cite{Larmore1990,Matousek1992} for different applications of this approach). Perform the cluster decomposition three times, obtaining trees of size at most $k \le \fO(\log\log\log n)$. For these very small trees, we can precompute the \emph{optimal data structure} in time $o(n)$ (\cref{sec:opt-small}). If $\OPT(n,m)$ is the optimal total running time for $m$ operations on a forest with $n$ vertices, then our algorithm has running time
	\[ \fO(n+m) + \sum_i \OPT(n_i,m_i), \]
	where $n_i \le k$ is the number of vertices in the $i$th cluster, and $m_i$ is the number of operations applied to that cluster.
	Note that $\sum n_i \leq n$ and $\sum m_i \leq m$.
	
	With some effort (\cref{sec:opt-superadd}), we can argue that the above formula is upper bounded by $\fO( n + m + \OPT(n,m))$, since the function $\OPT$ satisfies a weak, asymptotic form of \emph{super-additivity}: $\sum_{i} \OPT(n_i, m_i) \leq \fO(\OPT(\sum_i n_i,\, \sum_i m_i) + \sum_i n_i)$ for any pair of positive integer sequences $\{n_i\}$, $\{m_i\}$.
	Interestingly, even this weak variant of super-additivity is quite challenging to prove.
	The proof proceeds via a~series of reductions involving data structures for \alg{tree-sum} that are initialized with forests with \emph{zero weights}.
	The rationale is that $\OPTZ(n, m)$ -- the function denoting the optimal total running time for $m$ operations on a~\emph{zero-initialized} forest with $n$ vertices -- can be directly shown to be super-additive, i.e., $\sum_{i} \OPTZ(n_i, m_i) \leq \OPTZ(\sum_i n_i,\, \sum_i m_i)$.

	We finally obtain

	\begin{restatable}{theorem}{restateTreeSumOptReal}\label{p:tree-sum-opt}
		There exists a data structure with running time $\fO( \OPT(n,m) + n + m )$ for any forest on $n$ vertices and any sequence of $m$ operations.
	\end{restatable}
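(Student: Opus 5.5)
We combine three ingredients sketched above: a three-level cluster decomposition, precomputed optimal data structures for the bottom-level clusters, and weak super-additivity of $\OPT$.

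\textbf{Decomposition.} Apply the cluster decomposition of \cref{sec:tree-sum} three times. The top cluster tree has $\fO(n/\log n)$ vertices. The second level has clusters of size $\fO(\log n)$ with subclusters of size $\fO(\log\log n)$. The bottom level consists of clusters of size at most $k=\fO(\log\log\log n)$. The three upper levels are handled exactly as in the proof of \cref{p:tree-sum}, using $\mathsf{Alg}_1$, in which \alg{update-weight} costs $\fO(1)$ and a \alg{cut} costs $\fO(\log s)$ amortized on an $s$-vertex cluster tree. A \alg{cut} is charged to the level where the cut edge lives. At the top level it costs $\fO(\log n)$, but there are only $\fO(n/\log n)$ top-level edges, so the total is $\fO(n)$. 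Similarly, the cuts at the two middle levels cost $\fO(n)$ in total. Every operation also triggers $\fO(1)$ weight updates per level. So everything outside the bottom clusters costs $\fO(n+m)$.

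\textbf{Bottom clusters.} For each bottom cluster $C_i$ with $n_i\le k$ vertices, we run a data structure achieving $\OPT(n_i,m_i)$ up to constant factors, where $m_i$ counts the operations forwarded to $C_i$, including the induced queries needed to refresh the summary weights of $C_i$ in the tree above it.

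The optimal structures are precomputed as in \cref{sec:opt-small}. Fix a bounded horizon $M$ of operations. We enumerate all algorithms in the group model, i.e.\ adaptive decision-tree strategies that store linear combinations of weights. Among them we choose one minimizing the worst-case cost over all operation sequences of length $M$ on a given shape. The number of shapes and candidate strategies is bounded by a function of $k$, and a triple or quadruple exponential in $\fO(\log\log\log n)$ is still $o(n)$.

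A cluster receiving more than $M$ operations runs the precomputed structure in phases of $M$ operations each. Restarting a phase costs $\fO(n_i)$ to rebuild. Choosing $M\ge k$ makes this amortize into the $\fO(n+m)$ term, because the optimal cost per phase is at least linear in its length. Since $\OPT(n_i,\cdot)$ is roughly monotone and subadditive in $m$, the phases together stay within $\fO(\OPT(n_i,m_i)+n_i+m_i)$.

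\textbf{Summing up.} The total running time is $\fO(n+m)+\sum_i\fO(\OPT(n_i,m_i))$ with $\sum n_i\le n$ and $\sum m_i\le \fO(m)$. The main obstacle is the last step: showing $\sum_i\OPT(n_i,m_i)\le\fO(\OPT(n,m)+n)$.

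I would first prove exact super-additivity of $\OPTZ$. Given instances on disjoint forests, take the disjoint union, zero-initialized. An adversary interleaves worst-case sequences per component. Any algorithm on the union restricted to one component yields an algorithm for that component, since in the group model operations on a component's weights can be separated from the rest by zeroing the other variables. Hence its cost splits.

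Then I would relate $\OPT$ and $\OPTZ$ in both directions up to $\fO(n+m)$:
\begin{itemize}
\item $\OPT(n,m)\le\OPTZ(n,m+n)+\fO(n)$, by simulating initial weights with $n$ \alg{update-weight} operations;
\item $\OPTZ(n,m)\le\OPT(n',m')$ with $n'=\fO(n)$ and $m'=\fO(m)$, via a padding reduction.
\end{itemize}
Finally, since the cluster sizes respect $\sum n_i\le n$, monotonicity of $\OPT$ in $n$ and $m$ gives the claim.
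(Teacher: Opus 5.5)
Your architecture matches the paper's: three rounds of \cref{p:tree-sum-red}, precomputed optimal structures for clusters of size at most $\logit{3}n$, and weak super-additivity through $\OPTZ$. However, both steps that carry the real content fail as written. The first is the fixed-horizon phases. A cluster receiving $m_i$ operations costs you up to $\lceil m_i/M\rceil\cdot\OPT(n_i,M)$. In particular it costs a full $\OPT(n_i,M)$ when $m_i\ll M$, since a structure optimal for horizon $M$ may do all its work up front. To conclude $\fO(\OPT(n_i,m_i)+n_i+m_i)$ you would need $\OPT(n_i,M)\lesssim\OPT(n_i,m_i)+n_i+m_i$ for $m_i<M$, and $\tfrac{m_i}{M}\OPT(n_i,M)\lesssim\OPT(n_i,m_i)+m_i$ for $m_i\ge M$. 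Monotonicity and subadditivity in $m$ give exactly the reverse inequalities. The only available bound, $\OPT(k,M)=\fO(M+k\log k)$, leaves a factor $\log k=\logit{4}n$. Because nothing is known about how $\OPT$ grows in $m$, the paper (\cref{p:opt-small}) budgets by cost rather than by count. It precomputes $t(m)=\OPT(F,m)+n+m$ for all $m\le n^2$ and runs the optimal computation tree for the largest $m$ with $t(m)\le t^*$. When that horizon is exceeded, it triples $t^*$ and replays all operations. The total is then a geometric series below $4t(m)$ for the true $m$, and after $n^2$ operations it switches to \cref{p:tree-sum-logn}. The paper stresses that even doubling the horizon $m$ does not work, for exactly this reason.

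The second gap is the passage between $\OPT$ and $\OPTZ$. Your bound $\OPT(n_i,m_i)\le\OPTZ(n_i,m_i+n_i)$ charges $n_i$ extra operations to every cluster, so super-additivity only yields $\OPTZ(n,m+n)$. For $m\ll n$ there is no way back to $\fO(\OPT(n,m)+n)$: monotonicity points the wrong way. An inequality $\OPT(n,m+n)\le\fO(\OPT(n,m)+n)$, which your last step implicitly needs, would already give $\OPT(n,n)=\fO(n)$ (take $m=0$), and that is open. The other direction needs no padding, since $\OPTZ(F,m)\le\OPT(F,m)$ trivially. The missing ingredient is \cref{p:relation-opt-optz}, $\OPT(F,m)\le\OPTZ(F,5m)+2n$. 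Partition $F$ as in \cref{p:alt-decomp} with threshold $\lceil n/m\rceil$, and load only the at most $2m$ sums of non-special parts into the zero-initialized structure. Answer queries in small detached components by brute force. On a cut, sum the side of the split part that lies below the threshold and get the other side by one subtraction. With this lemma and $\OPT(n,cm)\le c\cdot\OPT(n,m)$, your chain closes. That second inequality does follow from subadditivity for your worst case over $n$-vertex forests; the paper needs \cref{p:opt-num-ops-nonbinary} only because it fixes $F$. Your disjoint-union form of super-additivity is then enough for $\OPT(n,m)$.

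Three smaller points. First, the cost of an adaptive structure does not ``split'' per component by itself. Argue sequentially as in \cref{p:optz-superadd}: run a worst-case sequence on the first part, then hard-code the resulting state with group registers zeroed, so shared work is not counted twice. Second, a triple exponential in $\logit{3}n$ is not $o(n)$ in general, since $n=2^{2^{2^{\logit{3}n}}}$. The enumeration must stay at $2^{2^{k^{\fO(1)}}}$. Third, keeping it there requires bounding the instruction depth by $\fO(M+k\log k)$ and verifying correctness over all groups at once; the paper uses $\mathbb{Z}^{n'+m}$ with unit-vector weights.
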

	
	Note that our data structure is as fast (up to constant) as any other data structure, even if the latter has $n$ and $m$ hard-coded into it. It turns out we can prove something even stronger: Our data structure is optimal even compared to other data structures with a hard-coded \emph{initial forest} $F$. For such data structures, the only non-hard-coded parts of the input are the initial weights and the sequence of operations.
	This is precisely the notion of \emph{universal optimality}, popularized by works of Haeupler, Wajc, and Zuzic~\cite{HaeuplerWajcZuzic2021} and Haeupler, Hladík, Rozhon, Tarjan, and Tetek~\cite{HaeuplerHladikRozhonTarjanTetek2024}, and we adopt it here as well.
	Universally optimal algorithms have been recently studied for various \emph{static} problems, such as Single-Source Shortest Paths~\cite{HaeuplerWajcZuzic2021,HaeuplerHladikRozhonTarjanTetek2024,HoogRotenbergRutschmann2025c}, planar convex hull~\cite{AfshaniBarbayChan17,HoogRotenbergRutschmann2025a} and sorting~\cite{HaeuplerHIRTT2025,HoogRotenbergRutschmann2025b}.
	A recent work of the first author presents a universally optimal data structure for the related \emph{tree-min} problem~\cite{Berendsohn2026}.
	
	We now discuss some details of \cref{p:tree-sum-opt}, in particular in comparison with Pettie and Ramachandran's famous optimal minimum spanning tree (MST) algorithm~\cite{PettieRamachandran2002}. They similarly reduce a large instance to many smaller instances, though their reduction is much more complex.
	On the other hand, for them, computing the optimal algorithm for small instances is quite straightforward, since every MST algorithm in the comparison model can be represented as a decision tree. Thus, they only need to enumerate all decision trees for the given small graph, and choose the best correct one.
	
	In contrast, the tree-sum problem is an \emph{online} problem in the \emph{group model}, which presents some interesting challenges. For example, testing correctness of a decision tree computing the MST of a~static $k$-vertex graph is easy (just check all possible graphs, and for each graph, all possible input weight orders), but testing correctness of an algorithm (or a~data structure) in the group model is harder. Another interesting problem is that the number operations applied to a single cluster is initially unknown, and can be much larger than the size of the cluster. Dealing with these problems requires some new insights.\footnote{In a different paper, Pettie and Ramachandran also give a provably optimal algorithm for an online data structure problem~\cite[Appendix B]{PettieRamachandran2005}. This is in the comparison model, so some of our challenges do not apply. Additionally, there appears to be an error in that paper that hides some of the complexity; see the discussion around \begin{NoHyper}\cref{p:opt-small}\end{NoHyper} in \cref{sec:opt-small-compute}.}

	Finally, Pettie and Ramachandran note that an explicit provably optimal algorithm for MST was actually known before. An argument of Jones~\cite{Jones1997} transforms any optimal verification algorithm for a~computational problem (which is known for MST) into an optimal algorithm.
	Interestingly, Jones's argument does not work for the online data structure problems like ours. Thus, our optimality result in itself is novel.
	In any case, this approach yields massive constants in the running time and does not imply anything about the \emph{decision tree complexity} of MST. In contrast, the running time of Pettie and Ramachandran's algorithm matches the decision tree complexity. \Cref{p:tree-sum-opt} similarly matches the optimal number of additions and subtractions, up to constant factors.
	That means that any algorithm that only performs a linear number of additions and subtractions in total (with arbitrary running time) implies a data structure with linear total running time.

	\paragraph{Subtree-sums.}
	Finally, we consider the harder subtree-sum problem.
	In its full generality, with support for \alg{subtree-sum}, \alg{update-weight}, and \alg{cut}, it is a generalization of the well-known partial-sum problem~\cite{BentleyMaurer1980,Dietz1989,Fenwick1994,Fredman1982,FredmanSaks1989,Yao1985}. In the partial-sum problem, we are tasked to maintain an array of weights under the operations of computing the sum of a prefix and changing an entry. Clearly, we can interpret the array as a degenerate path-like rooted tree. Then partial sums are \alg{subtree-sum} operations and entry changes are \alg{update-weight} operations (\alg{cut} operations are not used). For the partial-sum problem, an $\Omega(n\log n)$ lower bound was shown by Pătraşcu and Demaine~\cite{PatrascuDemaine2006}, so we have:

	\begin{observation}\label{p:sumtree-sum-easy-lb}
		Each data structure that maintains a weighted forest with $n$ vertices under the operations \alg{subtree-sum} and \alg{update-weight} requires $\Omega( n \log n)$ time for $n$ operations, even in the Word RAM model of computation and assuming $G = \mathbb{Z}$.

	\end{observation}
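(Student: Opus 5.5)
This follows by a direct reduction from the partial-sum problem, for which Pătraşcu and Demaine~\cite{PatrascuDemaine2006} proved an $\Omega(\log n)$ amortized lower bound per operation in the cell-probe model. That bound holds for $G=\mathbb{Z}$ with word size $\Theta(\log n)$, and hence also for the Word RAM. Concretely, any data structure for partial sums on an array of length $n$ must spend $\Omega(n \log n)$ total time on some sequence of $n$ operations, each an entry update or a prefix-sum query. We will show that a subtree-sum structure that is too fast would yield a partial-sum structure that is too fast.

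Take an array $A[1..n]$ with all entries initially zero. Build the path $v_1, \dots, v_n$ rooted at $v_n$, where $v_i$ is a child of $v_{i+1}$, and set $w(v_i) = A[i]$. Then the subtree $F_{v_i}$ is exactly $\{v_1,\dots,v_i\}$. So a prefix-sum query $\sum_{j\le i} A[j]$ becomes $\alg{subtree-sum}(v_i)$, and an update $A[i]\gets x$ becomes $\alg{update-weight}(v_i,x)$. Each partial-sum operation therefore maps to exactly one forest operation.

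The initial forest is a fixed path with zero weights, so building it costs $\fO(n)$ time. This is negligible compared to $n\log n$. If the partial-sum lower bound is stated for nonzero initial arrays, we can instead initialize the weights directly, or use $n$ \alg{update-weight} operations, which changes the operation count by only a constant factor.

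Suppose a subtree-sum structure handled $n$ operations in $o(n\log n)$ time. Then partial sums could be solved in $o(n \log n)$ time, contradicting~\cite{PatrascuDemaine2006}. The only point to check carefully is that the model of the cited lower bound (cell-probe, word size $\Theta(\log n)$, integer group) matches the claimed Word RAM and $G=\mathbb{Z}$ setting. Since cell-probe lower bounds transfer to the Word RAM, this step holds.
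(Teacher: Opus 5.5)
Your proposal is correct and follows the paper's own argument. Like the paper, you view the array as a path rooted at $v_n$, so that prefix sums become \alg{subtree-sum} queries and entry changes become \alg{update-weight} operations (no \alg{cut}s needed), and then invoke the $\Omega(n\log n)$ partial-sum lower bound of~\cite{PatrascuDemaine2006}.
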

	
	Much more interesting is the case where we do not allow \alg{update-weight} operations, only \alg{subtree-sum} and \alg{cut}.
	In \cref{sec:sumtree-sum-weights-with-cut-lb}, we show a~randomized reduction from the lower bound of~\cite{PatrascuDemaine2006}, asserting the $\Omega(n \log n)$ lower bound in this restricted setting:
	\begin{restatable}{theorem}{restateSubtreeSumHard}\label{p:sumtree-sum-weights-with-cut-lb}
		Each data structure that maintains a weighted forest with $n$ vertices under the operations \alg{subtree-sum} and \alg{cut} requires $\Omega( n \log n)$ time for $\Theta(n)$ operations in the Word RAM model of computation and assuming $G = \mathbb{Z}$.
	\end{restatable}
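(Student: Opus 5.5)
We reduce from the partial-sum lower bound of P\u{a}tra\c{s}cu and Demaine~\cite{PatrascuDemaine2006}: any data structure maintaining an array $A[1..n]$ over $\mathbb{Z}$ under $\alg{update}(i,\Delta)$ ($A[i] \gets A[i]+\Delta$) and $\alg{prefix-sum}(i)$ needs $\Omega(n\log n)$ expected time for a suitable random sequence of $\Theta(n)$ operations. Their hard distribution is oblivious: the positions of updates and queries are fixed, or drawn independently of the data structure, and only the update values are random. We therefore simulate the partial-sum structure with a \alg{subtree-sum}/\alg{cut} structure on a forest of size $\fO(n)$ using $\fO(1)$ operations per partial-sum operation, at least in expectation over our own random choices. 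The factor ``randomized'' in the reduction comes from this step.

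\textbf{Construction.} Build a path $p_1 - p_2 - \dots - p_n$ rooted at $p_1$, so that a prefix sum over positions $1..i$ becomes a suffix-subtree sum; reverse the indexing to make prefixes correspond to subtrees $F_{p_i}$. An update adds value $\Delta$ at position $i$. Since we cannot change weights, we precreate ``reservoir'' vertices hanging below $p_i$ whose weights encode possible future updates, and realize an update by \emph{cutting away} or keeping appropriate pendant vertices. Concretely, we attach to each $p_i$ a pendant chain of $\fO(1)$ expected length carrying weights chosen so that cutting a prescribed vertex changes the subtree sum of $p_i$ (and all its ancestors) by exactly $\Delta$. Because $\Delta$ is unknown at initialization, the plan is to exploit the lower bound's distribution: restrict to updates whose values come from a small known set (e.g.\ $\Delta \in \{0,1\}$ or random bits, which suffice for the information-transfer argument), or pre-sample the random update values $\Delta_1,\dots,\Delta_m$ \emph{ourselves} and hide them in initial weights. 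The lower bound must then still hold against a structure that sees the initial weights. To handle this, we make the initial weights masked: for the $j$th update at position $i_j$ we hang two pendant leaves below $p_{i_j}$ with weights $r_j$ and $\Delta_j - r_j$, where $r_j$ is uniform, and then cut the leaf $r_j$ when the update arrives. At that moment the subtree sum of $p_{i_j}$ and its ancestors drops by $r_j$. To correct this drift, we answer a prefix query as the subtree-sum minus (the initially known total of all pendant weights still ``not activated''); the query algorithm itself tracks these offsets outside the data structure in $\fO(1)$ time per operation.

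\textbf{Main obstacle.} The key difficulty is arguing that the simulating structure cannot cheat by reading the $\Delta_j$ off the initial weights before they are ``revealed''. The information-transfer (chronogram) argument of \cite{PatrascuDemaine2006} counts how much information about the updates in an epoch must be transmitted to later queries. So we need the initial forest to be statistically independent of the update values. With the masking above, each single leaf weight is uniform, but pairs still reveal $\Delta_j$. I would therefore instead make all pendant weights independent uniform, and let the random \emph{choice of which leaf to cut} (among two leaves with weights $a_j,b_j$) encode the update $\Delta_j \in \{a_j,b_j\}$. This reduces to a partial-sum variant where each update chooses between two known random values. I expect re-verifying that Pătraşcu--Demaine's lower bound (which holds for random update values in $\mathbb{Z}$ with $\Omega(\log n)$-bit or even $\fO(1)$-bit entropy per update in the group/cell-probe setting with word size $\Theta(\log n)$) survives this one-bit-per-update restriction to be the hard step. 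The fallback is to group $\Theta(\log n)$ such binary choices per position, recovering $\Theta(\log n)$ bits of entropy per logical update while keeping $\Theta(n)$ total operations up to constant factors via a $\log$-size batching of the hard instance, if the entropy requirement of the original argument demands it.
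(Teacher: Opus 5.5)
\textbf{There is a genuine gap, exactly at the step you flag as hard, and your proposed repairs do not close it.} In your final scheme each \alg{cut} selects one of two pre-placed leaves, so every simulated update carries a single bit of entropy.

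The P{\u{a}}tra{\c{s}}cu--Demaine information-transfer argument only charges cell probes for update entropy that must reach later queries. For additive $\delta$-bit updates, P{\u{a}}tra{\c{s}}cu and Demaine show the complexity of partial sums is $\Theta(1+\log n/\log(b/\delta))$. So with one bit per update, their method yields only $\Omega(\log n/\log\log n)$, which is the Fredman--Saks regime. For genuine $0/1$ entries this is even tight by Dietz, and the paper's own $\Theta(n\tfrac{\log n}{\log\log n})$ bound for 0-1 weighted subtree sums is the forest analogue. Hence your claim that $\fO(1)$-bit entropy per update suffices is false.

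The fallback does not help either. Realising one $\Theta(\log n)$-bit update by $\Theta(\log n)$ binary cuts multiplies the number of forest operations by $\Theta(\log n)$. An $\Omega(N\log N)$ bound for $N$ partial-sum operations then gives only $\Omega(1)$ per forest operation, i.e.\ $\Omega(n)$ for $n$ vertices and $\Theta(n)$ operations.

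The underlying obstruction is a vertex budget. To encode $\Theta(\log n)$ bits in $\fO(1)$ cuts, each update needs $n^{\Omega(1)}$ candidate cuts inside the pendant structure of its own position. With a length-$n$ array in which every position is updated only $\fO(1)$ times, this forces $n^{1+\Omega(1)}$ vertices.

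\textbf{The missing idea is to trade array length for epochs, and to let all updates of a position share one large pool of leaves.} Your instinct to keep the forest independent of the update values and let the \emph{choice} of leaf encode the value is right; what is missing is making that choice rich.

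The paper reduces from an epoch-based variant of the P{\u{a}}tra{\c{s}}cu--Demaine bound, proved in the appendix by splitting each entry into a block of $E$ cells. The array has only $n'=\sqrt n$ entries with values in $[0,n'-1]$, and each entry is updated once in each of $E=n'$ epochs.

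\begin{itemize}
\item Under the $i$th spine vertex $v_i^+$ of a tree $T_+$ hang $8n'$ leaves of weights $1,\dots,8n'$.
\item An update changing $A[i]$ by $\Delta$ cuts two still-attached leaves $u_{i,j_1},u_{i,j_2}$ with $j_1+j_2=7n'-\Delta$. Fewer than $2n'$ of these leaves are gone, so at least $n'$ of the $3n'$ disjoint pairs $(j,7n'-\Delta-j)$ with $j\le 3n'$ survive, and Las Vegas sampling finds one in expected $\fO(1)$ time.
\item The update also cuts one unit-weight leaf under $v_i^-$ in a second tree $T_-$.
\end{itemize}

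Then $\alg{subtree-sum}(v_k^+)-7n'\cdot\alg{subtree-sum}(v_k^-)$ changes by exactly $\Delta$ for all $k\ge i$. A precomputed offset table converts this expression into the prefix sum.

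As a result, a single pair of cuts encodes $\Theta(\log n)$ bits, and the forest has $\Theta(n)$ vertices. The forest is fixed independently of the update values, so the ``reading $\Delta_j$ off the initial weights'' issue never arises. Finally, $\Theta(n)$ partial-sum operations become $\Theta(n)$ forest operations, which gives $\Omega(n\log n)$.
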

	On a high level, we design a~hard instance for the problem as follows.
	We show in \cref{sec:adapted-prefix-sum-lb} that the $\Omega(n \log n)$ lower bound of Pătraşcu and Demaine also holds under the assumption that the underlying array contains $\sqrt{n}$ integers ranging from $0$ to $\Theta(\sqrt{n})$, with each element of the array updated $\Theta(\sqrt{n})$ times.
	The weighted forest in our proof consists then of a~spine of $\sqrt{n}$ vertices -- each representing an~element of the array -- with $\sqrt{n}$ leaves of various weights attached to each vertex of the spine, so that a~partial sum of $i$ initial elements of the array can be inferred from the \alg{subtree-sum} of the subtree rooted at the $i$th vertex of the spine.
	We also show a Las Vegas randomized subroutine that efficiently translates an~update of the $i$th element of the array to a~sequence of $\fO(1)$ \alg{cut}s of leaves that are still attached to the $i$th vertex of the spine.
	This way, any sequence of $\Theta(n)$ operations in a \alg{partial-sum} data structure is interpreted as a~sequence of $\Theta(n)$ operations in the \alg{subtree-sum} structure, and the $\Omega(n \log n)$ lower bound for the former implies the same lower bound for the latter.

	With \cref{p:sumtree-sum-easy-lb,p:sumtree-sum-weights-with-cut-lb} in mind, we study the \emph{unweighted} variant of the problem (or, more precisely, the \emph{0-1-weighted} variant with no weight updates). Here, we are able to obtain an improved (amortized) running time of $\fO(\tfrac{\log n}{\log\log n})$ per operation (\cref{sec:subtree-size-ub}). Again, we state the slightly more general case of 0-1 weights: %
	
	\begin{restatable}{theorem}{restateSubtreeSize}\label{p:subtree-size}
		There is a data structure in the Word RAM model maintaining a 0-1-weighted forest with $n$ vertices under the operations \alg{subtree-sum} and \alg{cut}, with $\fO((m+n) \tfrac{\log n}{\log\log n})$ total time for $m$ operations.
	\end{restatable}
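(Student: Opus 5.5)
We would reduce the problem to a decremental dynamic-rank structure on an Euler tour. Fix a DFS (Euler tour) order of the initial forest. The subtree of $v$ in the initial forest is then a contiguous interval $[\mathrm{in}(v), \mathrm{out}(v)]$ of positions. A \alg{cut}$(u)$ removes the interval of $u$ (as it is at cut time) from the subtrees of all current ancestors of $u$, but leaves it intact for $u$ and its descendants. So the current subtree of $v$ is the initial interval of $v$ minus the union of the intervals of those cut vertices $u$ that are strict descendants of $v$ and were cut with $v$ still an ancestor. Since cuts only remove edges, every cut vertex $u$ inside $I_v$ with no cut ancestor strictly between $v$ and $u$ contributes exactly its current subtree.

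The plan is to store, at each position $p$ of the Euler order, the weight $w(p)\in\{0,1\}$. When $u$ is cut, we "move" the current total $s_u=\alg{subtree-sum}(u)$, computed right before the cut, into a correction. Concretely, we maintain an integer array $A$ indexed by Euler positions with $A[\mathrm{in}(u)] \mathrel{-}= s_u$ upon \alg{cut}$(u)$. We also maintain the initial prefix sums of $w$. Then
\[
\alg{subtree-sum}(v) = \sum_{p\in I_v} w(p) + \sum_{p \in I_v\setminus\{\mathrm{in}(v)\}} A[p]
\]
holds by induction on the cuts. The reason is that nested cuts inside an already cut subtree were already subtracted from $s_u$, so removing $s_u$ exactly removes the current subtree of $u$. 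The first sum is static and answered in $\fO(1)$. The second is a range sum over $A$, which reduces to two prefix sums.

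So we need a partial-sums structure on an array of size $n$ with $m$ point updates and queries. All values are integers bounded by $n$ in absolute value, and each position is updated at most once, since each vertex is cut at most once. The standard word-RAM partial-sums structure of Dietz, or a B-tree of fan-out $\log^{\varepsilon} n$ using word-packed prefix sums per node (Pătraşcu–Demaine style), gives $\fO(\log n/\log\log n)$ per update and query. Each \alg{cut} costs one query (to get $s_u$) plus one update, and each \alg{subtree-sum} costs $\fO(1)$ queries. Adding $\fO(n)$ preprocessing for the Euler tour and the static prefix sums gives the claimed total bound $\fO((m+n)\tfrac{\log n}{\log\log n})$.

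The main obstacle is the correctness of the invariant: the correction $A[\mathrm{in}(u)]$ must be subtracted only for strict ancestors of $u$, not for $u$ itself. That is why $\mathrm{in}(v)$ is excluded from the range for $v$. We also need $s_u$ to be counted exactly once for every ancestor, including ancestors that are separated from $u$ by a later cut above. The latter is handled because a later cut of $w$, an ancestor of $u$, records $s_w$ after $u$'s contribution has already been subtracted. So $I_w$ contains both corrections, at $\mathrm{in}(u)$ and $\mathrm{in}(w)$, and their effects for ancestors of $w$ combine to exactly the current subtree of $w$. We would verify this by induction over the cut sequence. The remaining work is just invoking the $\fO(\log n/\log\log n)$ partial-sums bound for polynomially bounded integers.
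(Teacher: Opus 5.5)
Your invariant breaks as soon as a vertex is cut before one of its descendants; you only checked the opposite order. Take the path $r$--$a$--$b$ rooted at $r$ with unit weights. $\alg{cut}(a)$ stores $-2$ at $\mathrm{in}(a)$, then $\alg{cut}(b)$ stores $-1$ at $\mathrm{in}(b)$. Your formula then gives $\alg{subtree-sum}(r)=3-2-1=0$ instead of $1$, because the weight of $b$ is removed twice: once inside the stale value $s_a$ and once at $\mathrm{in}(b)$. For the formula to hold, $-A[\mathrm{in}(u)]$ must track the \emph{current} subtree sum of every cut vertex $u$. So each cut must also add $s_u$ at $\mathrm{in}(r)$, where $r$ is the current root of $u$'s tree (found with \cref{p:roots}). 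After this repair, positions are updated repeatedly.

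The more serious gap is the partial-sums black box, which does not exist for your updates. Even after the repair you need point updates of magnitude up to $n$, i.e.\ $\Theta(\log n)$-bit updates. Dietz's structure only handles small (unit) increments, buffered many to a word. A word holds only $\fO(1)$ prefix sums of $\Theta(\log n)$ bits, so packing with fan-out $\log^{\varepsilon} n$ fails. P{\u{a}}tra{\c{s}}cu and Demaine's lower bound gives $\Omega(\log n)$ amortized per operation for such updates, even if every position is updated exactly once (the single-epoch setting of \cref{t:orig-patrascu-lb}). As a sanity check: your reduction never uses that weights are $0$ or $1$, only that sums are polynomially bounded. It would therefore solve the hard instances of \cref{p:sumtree-sum-weights-with-cut-lb} in $o(n\log n)$ time, a contradiction. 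Splitting $s_u$ into unit updates does not help either, since $\sum_u s_u=\Theta(n^2)$ when a path is cut top-down.

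The missing idea is to keep everything that is propagated unit-sized, with only $\fO(n)$ such changes per level. The paper recurses on a cluster decomposition with fan-out $\ell=\Theta(\sqrt{\log n})$. Every cluster tree then has $\fO(\ell)$ vertices, so its subtree bitsets and a buffer of pending decrements fit into one word and are evaluated via a precomputed table (\cref{p:subtree-size-small}). Cluster-tree weights only decrease and are changed by unit \alg{decrement-weight} calls, at most $n$ per level. Each cluster tree undergoes only $\fO(\ell)$ cuts of cost $\fO(\ell)$. The recursion depth is $\fO(\log n/\log\log n)$, with $\fO(1)$ amortized work per level.
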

	
	\Cref{p:subtree-size} again uses recursive cluster decompositions, but with larger clusters: Instead of $\fO(\tfrac{n}{\log n})$ clusters of size $\fO(\log n)$, we use roughly $k = \sqrt{\log n}$ clusters of size $\fO(\tfrac{n}{k})$. Recursively applying this decomposition until clusters have constant size yields a decomposition of depth $\fO(\log_k n) = \fO(\tfrac{\log n}{\log\log n})$.
	Each cluster tree obtained in this process has size at most $k$, which notably implies that it can be encoded with $\fO(\log n)$ bits, which fits into a single word. This allows us to combine a delayed-updating technique of Dietz~\cite{Dietz1989} with some precomputation to obtain a fast data structure for these very small cluster trees.

	Note that the paper of Dietz~\cite{Dietz1989} cited above solves a~variant of our problem where the initial tree is a 0-1-weighted path updated by \alg{update-weight} rather than \alg{cut}, with the same $\fO((m + n)\tfrac{\log n}{\log\log n})$ total time; this data structure is known to be optimal due to Fredman and Saks~\cite{FredmanSaks1989}.
	On the other hand, the path variant of \cref{p:subtree-size} (reporting \alg{subtree-sum}s in an~unweighted or 0-1-weighted path with $n$ vertices updated by \alg{cut}) is much easier -- it can be solved in $\fO(m+n)$ time, e.g.\ via a~reduction to the \emph{marked ancestor problem}~\cite{AlstrupHusfeldtEtAl1998a}.
	In contrast, we show in \cref{sec:subtree-size-lb} that for general forests, our problem is as hard as the path variant with weight updates.
	
	\begin{restatable}{theorem}{restateSubtreeSizeLB}\label{p:subtree-size-lb}
		Each data structure in the Word RAM model that maintains an unweighted forest with $n$ vertices under the operations \alg{subtree-sum} and \alg{cut} requires $\Omega( n \tfrac{\log n}{\log\log n})$ time for $n$ operations.
	\end{restatable}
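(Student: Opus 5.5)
We reduce from the dynamic partial-sum problem on a 0-1 array (equivalently, a 0-1-weighted path under \alg{update-weight} and prefix-sum queries). Fredman and Saks~\cite{FredmanSaks1989} show that this problem requires $\Omega(\tfrac{\log n}{\log\log n})$ amortized time per operation in the cell-probe model, hence also on the Word RAM. The plan is to simulate $\Theta(n)$ operations of that problem on an array of length $s$ by $\Theta(n)$ operations of an unweighted \alg{subtree-sum}/\alg{cut} structure on $\fO(n)$ vertices.

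The construction mirrors the one sketched for \cref{p:sumtree-sum-weights-with-cut-lb}. We build a spine $v_1,\dots,v_s$, where $v_i$ is the child of $v_{i+1}$ and $v_s$ is the root. To each spine vertex we attach a supply of leaves. Then $\alg{subtree-sum}(v_i)$ equals $i$ plus the number of leaves still attached to $v_1,\dots,v_i$. The array entry $a_i$ should be encoded by the current leaf count at $v_i$, up to a known offset.

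The difficulty is that \alg{cut} only decreases counts, while bit updates go both ways. To handle this we encode $a_i$ by parity, or more simply by a counter. Each update of $a_i$ (a flip) is realized by cutting one leaf at $v_i$, so $a_i \equiv (\text{initial leaves} - \text{current leaves at } v_i) \bmod 2$. However, a prefix sum of parities is not determined by the prefix count alone, so this does not directly work.

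Instead we use the Fredman--Saks lower bound in its stronger form, which already holds for the parity prefix problem (prefix sums mod 2) and more generally for update sequences that are increments. The lower bound for prefix counting with increment-only updates (each operation adds 1 to some $a_i$) holds as well, since the Fredman--Saks chronogram argument uses only the sequence of which positions change. Concretely, we choose hard distributions with $\fO(1)$ updates per query epoch and argue that the query answers mod $2$ still carry the needed information.

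With increment-only updates, an increment of $a_i$ becomes a cut of a fresh leaf at $v_i$, and a prefix query becomes $i + (\text{total initial leaves in prefix}) - \alg{subtree-sum}(v_i)$. To keep the tree size linear, we balance the parameters: we use $s = n/B$ spine vertices with $B$ leaves each, and restrict the hard sequence to at most $B$ increments per position. This can be done by choosing updates in rounds, as in \cref{sec:adapted-prefix-sum-lb}, where each position is updated $\fO(\sqrt n)$ times with $s=\sqrt n$.

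The main obstacle is verifying that the Fredman--Saks $\Omega(\tfrac{\log n}{\log\log n})$ bound survives under three restrictions at once: monotone (increment-only, or parity) updates, a bounded number of updates per position, and an array length of only $\sqrt n$. The loss of $\log$ is merely a constant factor ($\log\sqrt n=\tfrac12\log n$). For the monotonicity restriction, we would first try working modulo 2: the Fredman--Saks bound is proven for the parity version, where flip and increment coincide mod $2$, so cut-based increments suffice. For the bounded-per-position restriction, we would use a random update-position distribution and a Chernoff bound, aborting (Las Vegas style, as the paper does) if some position exhausts its leaves. The total number of operations is $\Theta(n)$, so the bound $\Omega(n\tfrac{\log n}{\log\log n})$ follows.
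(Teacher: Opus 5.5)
Your core mechanism is the paper's. You use a spine $v_1,\dots,v_{n'}$ rooted at $v_{n'}$, attach leaves to each $v_i$, realize a flip of $A[i]$ as one \alg{cut} of a leaf at $v_i$, and read the parity of $\sum_{j\le i}A[j]$ off the parity of $\alg{subtree-sum}(v_i)$. Your worry that prefix sums of parities are not determined by the leaf counts only matters for the exact $0$-$1$ prefix sum. For \alg{partial-sum-parity} the count mod $2$ is exactly what you need, so the increment-only detour is unnecessary.

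\textbf{The gap.} Your argument rests on a version of the Fredman--Saks bound that you never establish. It needs an array of length $\sqrt n$, $\Theta(n)$ operations (so $\sqrt n$ rounds), and at most $B=\sqrt n$ flips per position. You name this yourself as ``the main obstacle'' and leave it open. The remedy you sketch does not close it. Aborting when some $v_i$ runs out of leaves is not Las Vegas; the Las Vegas step in \cref{sec:sumtree-sum-weights-with-cut-lb} always succeeds and only its running time is random. After an abort, your simulation is no longer a correct \alg{partial-sum-parity} structure, so the lower bound does not apply to it. Making it apply would need a correct fallback with bounded cost and a quantitative bound on the bad event, and you give neither. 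Organizing the sequence into $\sqrt n$ rounds instead would require showing that each round is hard from an arbitrary intermediate memory state and then summing over rounds; that argument is also missing.

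\textbf{The missing ingredient.} The Fredman--Saks bound for \alg{partial-sum-parity} (\cref{t:fredman-saks-lb}) already holds for sequences in which every array entry is flipped at most once. With this, no balancing is needed:
\begin{itemize}
\item Take $s=n'$. Attach one leaf $u_{i,1}$ to $v_i$ if $A[i]=0$, and two leaves $u_{i,1},u_{i,2}$ if $A[i]=1$. Then $\alg{subtree-sum}(v_i)=2i+\sum_{j\le i}A[j]$.
\item A flip of $A[i]$ is $\alg{cut}(u_{i,1})$. This is legal because each entry is flipped at most once.
\item A query at $k$ returns the parity of $\alg{subtree-sum}(v_k)$.
\end{itemize}
The tree has at most $3n'$ vertices. $\Theta(n')$ array operations become $\Theta(n')$ forest operations plus $\fO(n')$ overhead, which gives the theorem directly. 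There is no $\sqrt n$-length array, no concentration bound, and no multi-round argument.
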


	In other words, the data structure of \cref{p:subtree-size} is asymptotically optimal.

	\subsection{Outlook to the offline setting}
	We can consider the decremental \alg{tree-sum} problem in an~\emph{offline} setting, where the entire sequence of $m$ operations is provided to the algorithm together with the initial weighted $n$-vertex forest at the time of initialization.
	It turns out that the offline variant of the problem can be solved in $\fO(n + m)$ time: Simulating the sequence of operations backwards, we reduce \alg{tree-sum} to the incremental variant -- that is, the disjoint set union problem -- where the \emph{merge forest} (the forest containing an~edge $uv$ for every merge of sets containing vertices $u$ and $v$) is known in advance.
	This simplified variant of the problem is solvable in linear time in the Word RAM model by a~data structure of Gabow and Tarjan~\cite{GabowTarjan1985}.
	Moreover, reporting \alg{tree-sum}s requires additionally only $n - 1$ operations in the group $G$: Whenever two sets (i.e., vertex sets of two tree components of a~forest) are merged, the tree sum of the resulting tree components is simply the sum of the tree sums of the components being merged.
	
	This discussion unveils an~interesting phenomenon: In the online incremental setting of \alg{tree-sum}, it is the maintenance of connected tree components that is strictly more computationally expensive (requiring on average $\Omega(\alpha(n))$ time per edge insertion) than the additional bookkeeping of the tree sums ($\fO(1)$ time per insertion).
	However, the opposite appears to be the case in the decremental setting: The tree components can be maintained in amortized constant time per edge removal, but it is not clear at all whether tree sums can be tracked this efficiently.
	
	\section{Preliminaries}
	\label{sec:prelims}

	For $k \in \N$, define $[k] = \{1, 2, \ldots, k\}$.
	All logarithms in this work are binary, unless specified otherwise.
	For convenience, assume that $\log x = 0$ for all $x \leq 1$.
	Then for $k \in \N_+$, we define the $k$-fold logarithmic function $\log^{[k]} x$: $\log^{[1]} x = \log x$ and $\log^{[k]} x = \log(\log^{[k-1]} x)$ for $k > 1$.
	Finally, the iterated logarithm, $\log^* x$, is defined as the smallest $k$ such that $\log^{[k]} x \leq 1$.
	
	All trees and forests in this work are rooted.
	In a~tree $T$ and vertices $u$ and $v$, we say that $u$ is an~ancestor of $v$ if $u$ lies on the unique path from the root of $T$ to $v$.
	In this case, we also say that $v$ is a~descendant of $u$.
	Strict ancestors and descendants are defined analogously, only that we additionally require that $u \neq v$.
	For a tree $T$ and its vertex $v$, we denote by $T_v$ the subtree of $T$ rooted at $v$, i.e., the subtree induced by all descendants of $v$. The \emph{depth} of a vertex $v$, denoted by $\depth_T(v)$, is its distance to the root (0 for the root itself), and the height of $T$ is the maximum depth of a vertex.
	
	We assume the standard Word RAM model of computation with memory cells (\emph{words} or \emph{registers}) of size $b = \Theta(\log n)$~\cite{FredmanWillard1990}.
	Formally, the memory is represented by an~array $R$ of $b$-bit integers, with access to standard arithmetic and bitwise operations, as well as comparisons and indirect accesses (of the form $R[R[i]] \gets R[j]$).
	Each such access takes constant time.

	Most of the algorithms in this paper work in the \emph{group model}.
	In this model, some memory cells may contain -- instead of a~$b$-bit integer -- an~element of a~commutative group $G$.
	The elements of $G$ can be manipulated in constant time via addition and subtraction, and the comparisons between the elements of $G$ (including the zero-comparisons of the elements of $G$) are disallowed.
	In particular, it is forbidden to examine or modify the memory representation of any memory cell containing an~element of $G$.
	The weights of vertices of $F$ come from the group $G$.
	For convenience, we will denote by $W[v]$ the memory cell occupied by the current weight $w(v)$ of a~vertex $v$.

	If $F$ is a~forest with a~weight function $w : V(F) \to G$, then for $U \subseteq V(F)$ we define $w(U) = \sum_{v \in U} w(v)$.
	Similarly, if $T$ is a~(sub)tree of $F$, then $w(T) = w(V(T))$.
	If $S \subseteq V(F)$, then we denote by $F[S]$ the subgraph of $F$ induced by $S$, and by $w|_S : S \to G$ the restriction of the weight function $w$ to the vertices in $S$.
	
	\paragraph{Decremental connectivity.}
	Our data structures rely heavily on the Word RAM data structures supporting connectivity queries in decremental forests.
	Such a~data structure, when initialized with a~forest $F$, supports the following updates and queries:

	\begin{itemize}[nosep]
		\item $\alg{cut}(v)$: Delete the edge between $v$ and its parent (we assert that this edge exists).
		\item $\alg{root}(v) \rightarrow r$: Return the root of the tree containing $v$.
		\item $\alg{connected}(u, v) \rightarrow \textsf{bool}$: Indicate whether $u$ and $v$ are in the same component of $F$. %
		\item $\alg{ancestor}(u, v) \rightarrow \textsf{bool}$: Indicate whether $u$ is an ancestor of $v$ in $F$ (which implies they are connected).
	\end{itemize}

	As mentioned before, in the Word RAM model, there exists a~linear-time data structure for decremental connectivity of forests supporting \alg{cut} and \alg{root}~\cite{AlstrupSecherEtAl1997}.
	Therefore:
	
	\begin{lemma}[{\cite{AlstrupSecherEtAl1997}}]\label{p:roots}
		There is a Word RAM data structure that maintains a forest with $n$ vertices, and supports \alg{cut}, \alg{root}, \alg{connected}, and \alg{ancestor} in $\fO(n+m)$ time for $m$ operations.
	\end{lemma}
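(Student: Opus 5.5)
The plan is to take the decremental tree connectivity structure of Alstrup, Secher and Spork~\cite{AlstrupSecherEtAl1997} as a black box and obtain the remaining queries from it by cheap static preprocessing. That structure runs on the Word RAM and handles \alg{cut} and \alg{connected} in $\fO(n+m)$ total time. \alg{ancestor} then needs only static information. \alg{root} is the one query that needs some argument.

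\textbf{Static preprocessing.} I will run one DFS on the initial forest $F_0$ and record preorder and postorder numbers. This takes $\fO(n)$ time and lets us test in $\fO(1)$ time whether $u$ was an ancestor of $v$ in $F_0$.

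\textbf{Ancestor.} Cuts only delete edges, so in the current forest $F$ the path from $v$ to its root is a suffix of the corresponding root path in $F_0$. Hence $u$ is an ancestor of $v$ in $F$ if and only if two conditions hold: $u$ was an ancestor of $v$ in $F_0$, and $u$ and $v$ are still connected. The first is the static check above and the second is one \alg{connected} query, so each \alg{ancestor} query costs $\fO(1)$ plus one call to the black box.

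\textbf{Root.} Call a vertex \emph{marked} if it is a root of $F_0$ or has been the argument of a \alg{cut}. Then $\alg{root}(v)$ is the nearest marked ancestor of $v$ in $F_0$, with $v$ itself counting. So \alg{root} is the marked-ancestor problem on a fixed tree in which marks are only ever added, never removed. I intend to solve it with the same micro-macro scheme that~\cite{AlstrupSecherEtAl1997} uses for connectivity:
\begin{itemize}[nosep]
\item Partition $F_0$ into clusters of size $\fO(\log n)$, each with at most two boundary vertices.
\item Inside a cluster, store the set of marked vertices as a bitmask in one word. The nearest marked ancestor within the cluster is then found by a precomputed table lookup (or word operations) on the mask together with the ancestor bitmask of $v$ inside the cluster.
\item On the macro tree of size $\fO(n/\log n)$, use a simple structure with logarithmic amortized cost, for instance explicitly relabelling the smaller side after each split. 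Its total cost is $\fO(\tfrac{n}{\log n}\log n)=\fO(n)$.
\end{itemize}
Alternatively, I could simply note that~\cite{AlstrupSecherEtAl1997} already maintains an explicit component identifier and store the root alongside it. Once \alg{root} is available, \alg{connected} also follows directly, since $\alg{connected}(u,v)$ holds exactly when $\alg{root}(u)=\alg{root}(v)$.

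\textbf{Main obstacle.} The delicate step is making the micro level of \alg{root} run in truly constant time. This requires precomputing tables over all cluster shapes and marking patterns in $o(n)$ time, which is the same word-packing argument as in~\cite{AlstrupSecherEtAl1997}, so I expect to adapt it rather than prove it anew.
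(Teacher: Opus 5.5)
Your proof is correct, and the \alg{ancestor} step is exactly the paper's: a static pre/postorder test in $F^0$ plus a connectivity check. The difference is in which primitive you take from the citation. The paper takes \alg{cut} and \alg{root} directly from the Alstrup--Secher--Spork structure and gets \alg{connected} by comparing roots, so its proof is a few lines. You take only \alg{cut} and \alg{connected} from the black box, so you must build \alg{root} yourself. Your reduction of \alg{root} to an insertion-only nearest-marked-ancestor problem on the static tree $F^0$ is correct; the marks are the original roots and the arguments of past \alg{cut}s. The micro--macro sketch does give $\fO(n+m)$. What your route buys is that you do not rely on the cited structure exposing roots; the cost is redoing a decomposition argument.

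Three points in that sketch need fixing or filling in:
\begin{itemize}
\item Vertex-disjoint clusters of size $\fO(\log n)$ with at most two boundary vertices require bounded degree; a star needs $\Omega(n)$ such clusters. So binarize first via \cref{p:binarize}, noting that auxiliary vertices are never marked, and then use \cref{p:cluster-decomp}.
\item The step you flag as the main obstacle needs no tables over cluster shapes. Number the vertices of each cluster in preorder. Then the deepest marked ancestor of $v$ inside its cluster is the most significant set bit of the cluster's mark mask, bitwise ANDed with $v$'s in-cluster ancestor mask. This is an $\fO(1)$ Word RAM computation.
\item The macro query for $\ub(C)$ may end at a lower boundary vertex $\lb(C')$ that is detached from $\ub(C')$. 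In that case you need one more in-cluster query in $C'$ to obtain the actual root.
\end{itemize}
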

	\begin{proof}
		The operation $\alg{connected}(u,v)$ is easily implemented by comparing $\alg{root}(u)$ and $\alg{root}(v)$. For \alg{ancestor}, observe that if $F^0$ is the original forest and $F$ is the current forest, then a vertex $u$ is an ancestor of a vertex $v$ if and only if $u$ is an ancestor of $v$ in $F^0$ and the two vertices are connected in $F$. The ancestor queries in $F^0$ can be answered in constant time after linear preprocessing by consulting the pre-order and post-order traversals of $F^0$.%
	\end{proof}
	
	\paragraph{Binarization and auxiliary vertices.}
	In our algorithms, it will be convenient to transform the input trees into \emph{binary} trees, in which each vertex has at most two children.
	It is well-known that this can be done by transforming high-degree vertices into binary trees, which introduces a linear number of additional vertices.
	For most applications, the fact that the number of vertices increases by only a constant factor is sufficient to retain desired running times.
	However, for our optimal algorithm in \cref{sec:tree-sum-optimal}, inserting additional vertices causes technical problems. To address this, we introduce the concept of \emph{auxiliary vertices}.
	
	Let $F$ be the input forest for one of our data structures. If a vertex $v$ in $F$ is marked as auxiliary, then no operations may be applied to $v$ (e.g., $\alg{cut}(v)$ and $\alg{update-weight}(v, \cdot)$ are illegal).
	Moreover, if a weight function is associated to $F$, then $w(v) = 0$.
	In the interest of brevity, we still write just $F$ for a forest with auxiliary vertices; it will be made clear at the start of each relevant section whether algorithms allow auxiliary vertices or not.
	
	Using auxiliary vertices, we can transform any forest into a binary forest that is equivalent for our data structure problems.

	\begin{theorem}\label{p:binarize}
		Let $(F,w)$ be a weighted forest. Then, in linear time, we can compute a \emph{binary} weighted forest $(F',w')$, such that $V(F) \subseteq V(F')$, $|V(F')| \le 2|V(F)|$, every sequence of operations \alg{cut}, \alg{update-weight}, \alg{tree-sum}, and \alg{subtree-sum} is legal in $(F, w)$ if and only if it is legal in $(F', w')$, and every such sequence yields the same results on $(F,w)$ and $(F',w')$.
		
		Moreover, $w'(v) = w(v)$ for each $v\in V(F)$ and $w'(v) = 0$ for each $v \in V(F') \setminus V(F)$.
	\end{theorem}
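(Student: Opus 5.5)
We construct $F'$ by replacing the children list of every high-degree vertex with a chain of auxiliary vertices. Let $v \in V(F)$ have children $c_1, \dots, c_d$ with $d \ge 3$. We create $d-2$ auxiliary vertices $a_1, \dots, a_{d-2}$ and set parents as follows: $c_1$ and $a_1$ are children of $v$; $c_{i+1}$ and $a_{i+1}$ are children of $a_i$ for $i < d-2$; and $c_{d-1}, c_d$ are children of $a_{d-2}$. Vertices with at most two children, and all roots, are left unchanged. Every vertex of $F'$ then has at most two children. We set $w'(v) = w(v)$ on $V(F)$ and $w'(a) = 0$ on auxiliary vertices, which are marked auxiliary so that no operation may be applied to them. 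Each vertex with $d \ge 3$ children contributes $d-2 < d$ new vertices, and the degrees sum to fewer than $|V(F)|$, so $|V(F')| \le 2|V(F)|$. The construction is a single traversal, hence linear time.

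The key structural fact concerns the parent of an original vertex. Each original non-root $c$ has, in $F'$, an original parent $p(c)$, and the path from $c$ up to $p(c)$ passes only through auxiliary vertices. We implement $\alg{cut}(c)$ in $F'$ as deleting the edge from $c$ to its parent in $F'$ (the first vertex on that path). Legality is then preserved: $\alg{cut}(c)$ is legal in $F$ iff the edge $c\,p(c)$ is present, iff the edge from $c$ to its $F'$-parent is present. This holds because the two edges are deleted only by the same operation, and the auxiliary-to-auxiliary and auxiliary-to-$v$ edges are never deleted. The other operations' legality depends only on the vertex being original.

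To show that results coincide, we prove by induction over the operation sequence the following invariant. Two original vertices $x, y$ are connected in the current $F$ iff they are connected in the current $F'$, and $x$ is an ancestor of $y$ in $F$ iff it is in $F'$. Moreover, each auxiliary vertex always lies in the same tree as, and below, the original vertex $v$ that spawned its chain. This follows because the chain edges persist, and removing the edge above $c$ in $F'$ separates exactly the original subtree of $c$ (with its own chains) from the rest, mirroring the deletion of $c\,p(c)$ in $F$.

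Consequently, the vertex set of each tree (respectively subtree $F'_u$ for original $u$) of $F'$ equals the corresponding set in $F$ plus some auxiliary vertices. Since auxiliary weights are zero and never change, \alg{tree-sum} and \alg{subtree-sum} return the same values, and \alg{update-weight} acts identically. The only step needing care is this invariant, specifically checking that a cut of $c$ detaches no auxiliary vertex carrying a sibling of $c$. This holds because $c$ is always a leaf-side child of its chain vertex, never the parent of any chain vertex.
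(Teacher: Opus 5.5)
Your proof is correct and matches the paper's approach: each vertex with $d \ge 3$ children gets a path of zero-weight auxiliary vertices below it, and every original child hangs from a distinct path vertex. The paper uses $d-1$ auxiliaries with one original child each, while you use $d-2$ by giving the last auxiliary two children; your connectivity/ancestry invariant spells out what the paper calls clear. One wording nit: ``all roots are left unchanged'' should mean only that roots stay roots, since a root with three or more children must also be expanded, as your own claim that every vertex of $F'$ has at most two children requires.
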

	\begin{proof}
		For every vertex $v \in V(F)$ with $d_v \geq 3$ children, we replace $v$ with a rooted path $P_v$ consisting of a~root $v$ and $d_v - 1$ auxiliary vertices of weight $0$. Each child of $v$ then becomes a child of a different vertex in $P_v$.
		This transformation can be done in linear time, and it clearly satisfies all requirements of the theorem.
	\end{proof}
	
	\subsection{The cluster decomposition}

	We now formally describe a~decomposition of a~binary tree $T$ into small disjoint well-structured subgraphs of $T$, which we call the \emph{cluster decomposition} of $T$.
	The decomposition is inspired by the work of Alstrup et al.~\cite{AlstrupSecherEtAl1997}, however our variant of the decomposition will be particularly convenient to use in the upcoming data structures.

	Let $T$ be a tree, and let $C \subseteq V(T)$ induce a connected subtree of $T$. A vertex $v \in C$ is called an \emph{upper boundary vertex} if the parent of $v$ is not contained in $C$, or $v$ is the root of $T$. Note that each connected set $C \subseteq V(T)$ has precisely one upper boundary vertex, which we denote by $\ub(C)$. A vertex in $C$ is called a \emph{lower boundary vertex} if it has at least one child that is not contained in $C$. We call $C$ a \emph{cluster} if it has at most one lower boundary vertex, and moreover, no child of the lower boundary vertex is in $C$. Let $\lb(C)$ denote the lower boundary vertex of a cluster $C$, if it exists, or let $\lb(C) = \bot$ otherwise. Note that a single vertex is always a valid cluster.
	
	\begin{figure}
		\centering
		\includegraphics{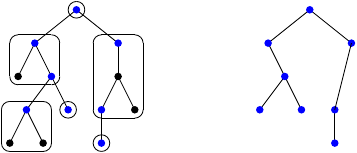}
		\caption{A cluster decomposition (left) and the associated cluster tree (right)}\label{fig:cluster-decomp}
	\end{figure}
	
	A \emph{cluster decomposition} (see \cref{fig:cluster-decomp}) of a tree $T$ is a partition of $V(T)$ into a set $\fC$ of disjoint clusters. The \emph{cluster tree} of $\fC$ is a tree $S$ where $V(S)$ are all boundary vertices of the clusters in $\fC$. A vertex $u \in V(S)$ is a child of $v \in V(S)$ if one of the following is true:
	\begin{itemize}[nosep]
		\item There is a cluster $C \in \fC$ with $u = \lb(C)$ and $v = \ub(C)$, or
		\item $u$ and $v$ are in different clusters and $u$ is a child of $v$ in $T$ (then $u$ must be an upper boundary vertex and $v$ must be a lower boundary vertex).
	\end{itemize}
	
	Note that every vertex in $V(S)$ has a parent, except the root $r$ of $T$. Thus, the cluster tree is indeed a (rooted) tree.
	Note that the cluster tree contains no auxiliary vertices, regardless of whether $T$ contains auxiliary vertices or not.
	
	\begin{lemma}\label{p:cluster-decomp}
		Let $T$ be a binary tree with $n$ vertices, and let $k \le n$. Then, in linear time, we can compute a cluster decomposition of $T$ into at most $6\tfrac{n}{k}$ clusters, each of size at most $k$, along with the corresponding cluster tree.
	\end{lemma}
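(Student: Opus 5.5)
We construct the decomposition with a single bottom-up greedy pass, in the style of Alstrup et al. Process the vertices of $T$ in post-order. Each vertex $v$ builds a \emph{pending} component $P_v$. This is a connected set containing $v$ and some of its descendants that has not yet been closed off into clusters. The invariant is that $P_v$ is a valid cluster candidate with upper boundary $v$, and that $|P_v| < k$ when it is handed up to the parent of $v$. We also record whether $P_v$ already contains a lower boundary vertex, meaning it hangs above some previously closed cluster.

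At vertex $v$ with children $c_1,c_2$ (at most two, since $T$ is binary), we consider the pending sets $P_{c_1},P_{c_2}$ and use a case analysis.

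If every child's pending set has no lower boundary, and $1+\sum_i |P_{c_i}| < k$, we merge them: $P_v = \{v\}\cup\bigcup_i P_{c_i}$. The result still has no lower boundary.

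If exactly one child set, say $P_{c_1}$, has a lower boundary, and $1+|P_{c_1}|+|P_{c_2}|<k$, we merge only when doing so keeps at most one lower boundary vertex. Merging $P_{c_2}$ is fine since it has none, so $P_v$ inherits the single lower boundary.

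Otherwise, that is, if the size limit would be exceeded or two children both carry lower boundaries, we \emph{close} each child's pending set as a cluster of its own. We then start $P_v=\{v\}$, which becomes a lower boundary vertex whose children are all outside it. This is valid because a single vertex is always a cluster, and because the lower boundary vertex of $P_v$ has no children in $P_v$. At the root, the final pending set is closed.

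One must verify two things. First, each closed set has at most one lower boundary vertex, and that vertex has no children inside the set. Second, every closed set has size less than $k$. Both follow directly from the invariant.

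The main obstacle is the counting bound of $6n/k$ clusters, since the greedy pass may close small clusters. The plan is a charging argument. Call a vertex $v$ a \emph{closing vertex} if the closing branch fires at $v$. At such a $v$, at most two clusters are closed, plus the singleton started at $v$, which is later merged or closed.

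There are two reasons the closing branch can fire. If it fires because the size would reach $k$, then the closed sets together with $v$ contain at least $k$ vertices, all of which are now removed from further consideration. There are at most $n/k$ such events, each closing at most two clusters.

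If it fires because both children carry lower boundaries, then $v$ is a branching vertex in the tree of lower boundary vertices. Contract that tree by keeping only the vertices where lower boundaries were created. In such a binary tree, the number of branching vertices is at most the number of leaves. Leaves correspond to size-triggered closings, of which there are at most $n/k$. So there are at most $n/k$ events of this second kind, again each closing at most two clusters.

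Adding in the final root cluster and the constant slack gives at most $6n/k$ clusters. Should the constant be tight, one would first try merging small sibling clusters greedily.

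Finally, the cluster tree is built in linear time. For each closed cluster $C$ we know $\ub(C)$ and $\lb(C)$, so we add the edge $\lb(C)\to\ub(C)$. For each upper boundary vertex whose $T$-parent lies in a different cluster, we add the edge to that parent, which is a lower boundary vertex. The whole procedure is a single traversal with $\fO(1)$ work per vertex, so it runs in linear time.
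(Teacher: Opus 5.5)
Your construction and counting strategy are essentially the paper's: a bottom-up greedy pass that closes the children's clusters either when they are too big or when both carry a lower boundary. The second kind of event is bounded by the branching nodes of the tree of lower-boundary vertices, whose leaves are size-triggered closings.

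\textbf{The step that fails is the bound on size-triggered events.} You close when $1+\sum_i|P_{c_i}|\ge k$, so the clusters closed at such an event only have total size at least $k-1$. Moreover, $v$ is \emph{not} removed from further consideration: it seeds $P_v=\{v\}$ and ends up in a later closed cluster, possibly one closed at another size-triggered event. So the sets ``closed clusters plus $v$'' are not disjoint across events. The claim of at most $n/k$ such events is in fact false: on a path with $k=3$, your procedure closes clusters of size exactly $2$, giving roughly $n/2>n/3$ size-triggered events. Because the lemma asserts the explicit constant $6$, the closing appeal to ``constant slack'' (or to merging small siblings afterwards) does not substitute for a correct count.

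\textbf{The cleanest repair is the paper's threshold.} Close the children's sets exactly when $\sum_i|P_{c_i}|\ge k$ and merge otherwise, so a merged cluster has size at most $k$, which the lemma allows. Then the clusters closed at each size-triggered event are disjoint from all other closed clusters and have total size at least $k$, giving at most $n/k$ such events. Your contracted-tree argument still shows that two-lower-boundary events are at most as many, since they are branching nodes and the leaves are size-triggered. The total is then at most $2\tfrac nk+2\tfrac nk+1\le 6\tfrac nk$, using $n\ge k$.

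\textbf{Alternatively, keep your threshold and redo the count.} There are at most $\tfrac{n}{k-1}$ size-triggered events, hence at most $4\tfrac{n}{k-1}+1$ clusters. This is at most $6\tfrac nk$ whenever $5\le k\le n$. For $k\le 6$ the bound is trivial, because there are never more than $n$ clusters.
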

	\begin{figure}
		\begin{subfigure}{.4\linewidth}
			\centering
			\includegraphics{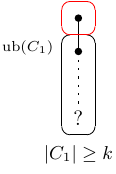}
			\includegraphics{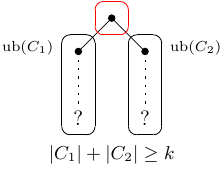}
			\caption{Two variants of case \ref{case:clusters-large}}
		\end{subfigure}%
		\begin{subfigure}{.3\linewidth}
			\centering
			\includegraphics{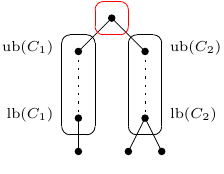}
			\caption{Case \ref{case:clusters-lb}}
		\end{subfigure}%
		\begin{subfigure}{.3\linewidth}
			\centering
			\includegraphics{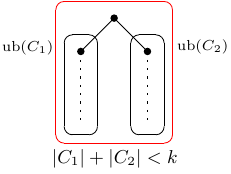}
			\caption{Case \ref{case:clusters-merge}}
		\end{subfigure}%
		\caption{Illustration of the proof of \cref{p:cluster-decomp}. The returned cluster is colored red.}\label{fig:cluster-decomp-algo}
	\end{figure}
	\begin{proof}
		The algorithm works by growing and merging clusters in a bottom-up fashion. We formulate it as a recursive algorithm $\alg{construct}$, which, applied to a vertex $v$, partitions the rooted subtree $T_v$ into clusters, and returns the cluster $C$ containing $v$. Each cluster stores $\ub(C)$, $\lb(C)$ and $|C|$, so reconstructing the cluster tree is easy afterwards.
		
		Consider a call $\alg{construct}(v)$. We first apply \alg{construct} on the up to two children of~$v$. Let $\fC'$ be the set of up to two returned clusters. Consider these three cases (see \cref{fig:cluster-decomp-algo}):
		\begin{enumerate}[nosep, label = (\alph*)]
			\item If the total size of the clusters in $\fC'$ is at least $k$, then we create a new singleton cluster $C$ with $V(C) = \{v\}$, $\ub(C) = \lb(C) = v$, which we return.\label{case:clusters-large}
			\item If there are two clusters $C_1, C_2 \in \fC'$, $\lb(C_1) \neq \bot$ and $\lb(C_2) \neq \bot$, then we return a~singleton cluster as above.\label{case:clusters-lb}
			\item Otherwise, we merge the clusters in $\fC'$ with $v$. We create a new cluster $C$ with $V(C)$ containing $v$ and all vertices in $\fC'$ and set $\ub(C) = v$. We further set $\lb(C) = u$ if $\lb(C') = u$ for some $C' \in \fC'$ (note that $u$ is unique by assumption), or $\lb(C)=\bot$ otherwise. (Note that if $\fC'$ is empty, then this creates a singleton cluster as in the previous cases, except that $\lb(C) = \bot$.)\label{case:clusters-merge}
		\end{enumerate}
		
		It is easy to see that this creates a valid cluster decomposition, and that all clusters have size at most $k$. We now show that the total number of clusters is at most $6\tfrac n k$.
		
		Consider all times a cluster is finished by the algorithm. This can happen in three different ways. First, in the root call, a single root cluster is returned at the end. Second, in \ref{case:clusters-large}, we finish one or two clusters of total size at least $k$. This can happen at most $\tfrac{n}{k}$ times, limiting the number of such clusters to $2\tfrac{n}{k}$.
		
		Third and finally, in \ref{case:clusters-lb}, we finish exactly two clusters $C_1, C_2$ of arbitrary size.
		Both these clusters have a lower boundary vertex, which means that are not leaves in the cluster tree $S$. Thus, we can already deduce that the number of \emph{leaves} in $S$ is at most $2\tfrac n k$ from the above.
		Additionally, $C_1$ and $C_2$ are children of the same cluster (which was just newly created) in $S$. Thus, each instance of this third case happening corresponds to an inner vertex of $S$ with exactly two children. The number of such vertices is bounded by the number of leaves in $S$, minus one. Thus, the total number of clusters is at most $1 + 2\tfrac n k + 2 \cdot (2\tfrac n k-1) \le 6\tfrac n k$.
	\end{proof}
	
	Both of our main algorithms use a cluster decomposition $\fC$ of the input tree. The decomposition $\fC$ itself is never modified, even when clusters become disconnected. However, we will modify the cluster tree by removing edges; for this, the following definition will be useful.
	
	Let $T$ be a tree, let $\fC$ be a cluster decomposition of $T$, and let $S$ be the corresponding cluster tree. For every forest $F$ obtained by removing some set of edges from $T$, we define the \emph{cluster forest induced by $F$} as the forest $G$ with $V(G) = V(S)$, such that $u$ is a child of $v$ in $G$ if:
	\begin{itemize}[nosep]
		\item There is a cluster $C \in \fC$ with $u = \lb(C)$ and $v = \ub(C)$, and $u, v$ are connected in~$F$, or
		\item $u$ and $v$ are in distinct clusters and $u$ is a child of $v$ in $F$.
	\end{itemize}
	
	Note that $G$ can be obtained from $S$ by removing edges. More specifically, consider a sequence of edge removals within $T$ that produces $F$. To construct $G$, we do the following: Whenever an edge between the boundary vertices of two clusters is removed, the corresponding two vertices in $G$ are likewise disconnected. Whenever an edge within a cluster $C$ is removed, if this disconnects $\lb(C)$ and $\ub(C)$, the corresponding edge in $G$ is deleted.
	Since connectivity can be maintained in constant time per operation (\cref{p:roots}), it is easy to construct the following data structure:
	
	\begin{lemma}\label{p:maintain-cluster-forest}
		Suppose we are given a tree $T$ on $n$ vertices, and a cluster decomposition $\fC$ of $T$. There is a data structure that maintains the induced cluster forest under (up to $n-1$) edge deletions in $\fO(n)$ total time.
	\end{lemma}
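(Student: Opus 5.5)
We plan to maintain the original tree $T$ with the decremental connectivity structure of \cref{p:roots}, and to keep the cluster forest $G$ explicitly, for example as a parent array indexed by the vertices of $S$. At initialization we compute $S$ from $\fC$ in linear time. For every vertex we also store the cluster containing it, and for every cluster $C$ we store $\ub(C)$ and $\lb(C)$. At the start $G = S$, and the structure of \cref{p:roots} is built on $T$ in $\fO(n)$ time.

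A deletion of the edge between $v$ and its parent $p$ in the current forest $F$ is handled by case analysis.

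First, suppose $v$ and $p$ lie in different clusters. By the definition of clusters, $v$ is then the upper boundary vertex of its cluster and $p$ is the lower boundary vertex of its cluster, so both are in $V(S)$ and $v$ is a child of $p$ in $G$. We call \alg{cut}$(v)$ on the connectivity structure and set the parent of $v$ in $G$ to $\bot$.

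Second, suppose $v$ and $p$ lie in the same cluster $C$. We call \alg{cut}$(v)$ on the connectivity structure for $T$. If $\lb(C) \neq \bot$, we then query $\alg{connected}(\ub(C), \lb(C))$. If this query returns false while the $G$-parent of $\lb(C)$ is still $\ub(C)$, we delete that edge in $G$. This is correct for two reasons. Edges of $F$ are only ever removed, so once $\ub(C)$ and $\lb(C)$ are disconnected they stay disconnected. Moreover, the path between $\ub(C)$ and $\lb(C)$ in $T$ runs entirely inside $C$ because $C$ is connected, so only cuts inside $C$ can disconnect them, and checking after each such cut is enough.

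Correctness follows because each case reproduces exactly the two defining rules of the induced cluster forest. Every deletion costs $\fO(1)$ operations on the connectivity structure plus $\fO(1)$ extra work. With at most $n-1$ deletions, \cref{p:roots} then gives $\fO(n)$ total time. The only point that needs care is the claim that the edge $(u, \ub(C))$ with $u = \lb(C)$ can disappear only because of cuts inside $C$. This is what the connectivity argument above establishes, using that clusters are connected and that cross-cluster edges join only a lower boundary vertex to an upper boundary vertex.
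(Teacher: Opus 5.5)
Your proposal is correct and follows essentially the same route as the paper. In the paragraph before the lemma, the paper mirrors deletions of cross-cluster edges directly in $G$; for a deletion inside a cluster $C$, it uses the decremental connectivity structure of \cref{p:roots} to test whether $\lb(C)$ and $\ub(C)$ became disconnected, and deletes the corresponding edge of $G$ if so. Your case analysis and the $\fO(1)$ connectivity operations per deletion give the $\fO(n)$ total time exactly as intended.
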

	
	\section{Tree sums in \texorpdfstring{$\fO(n \log^* n)$}{O(n log* n)} time}\label{sec:tree-sum}
	
	In this section, we prove:
	
	\restateTreeSum*
	
	We start with a simple $\fO(n \log n + m)$-time data structure (\cref{sec:tree-sum-logn}), which we then iterate using cluster decompositions (\cref{sec:tree-sum-it}).
	
	For technical reasons, our data structures need to support a slightly modified \alg{cut} operation: $\alg{cut-report}(v)$ performs a $\alg{cut}(v)$, but additionally returns the results of $\alg{tree-sum}(v)$ and $\alg{tree-sum}(u)$, where $u$ is the (former) parent of $v$. This will be useful to avoid extra calls to \alg{tree-sum} that complicate the running time analysis. 
	
	In this section, let a \emph{tree-sum data structure} refer to a data structure that maintains a \emph{binary} weighted forest, potentially with auxiliary vertices, under \alg{tree-sum}, \alg{update-weight}, and \alg{cut-report}.
	By \cref{p:binarize}, this also implies an analogous data structure for general weighted forests.

	\subsection{A simple algorithm with constant-time queries and weight updates}\label{sec:tree-sum-logn}

	As mentioned in the introduction, well-known dynamic forest data structures can solve the tree-sum problem with $\fO(\log n)$ time per operation. The data structure presented next takes $\Theta(\log n)$ (amortized) time per \alg{cut}, but only $\fO(1)$ time for \alg{tree-sum} and \alg{update-weight}, which is crucial to prove \cref{p:tree-sum}.
	The data structure is a~straightforward adaptation of \cite[Section 2]{EvenShiloach1981}, but we provide a full proof for completeness.
	
	\begin{lemma}\label{p:tree-sum-logn}
		There is a tree-sum data structure with the following running times for $n$ vertices: $\fO(n)$ for initialization, $\fO(1)$ for each \alg{tree-sum} and \alg{update-weight}, and $\fO(n \log n)$ in total for all \alg{cut-report} operations.
	\end{lemma}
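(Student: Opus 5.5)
The plan is to store, for every current tree $T$ of the forest, a single memory cell $\Sigma[T]$ holding $w(T)$, and for every vertex $v$ a pointer $\mathrm{id}(v)$ to the record of the tree containing $v$. Initialization computes all tree sums bottom-up with $\fO(n)$ additions and sets the pointers. Then $\alg{tree-sum}(v)$ is simply a read of $\Sigma[\mathrm{id}(v)]$, in $\fO(1)$ time. For $\alg{update-weight}(v,x)$ we compute $\Sigma[\mathrm{id}(v)] \gets \Sigma[\mathrm{id}(v)] - W[v] + x$ and then set $W[v] \gets x$, again $\fO(1)$ group operations. Note that this uses only additions and subtractions, so it is valid in the group model.

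For $\alg{cut-report}(v)$, with $u$ the parent of $v$, removing the edge splits the tree $T$ into $T_1 \ni v$ and $T_2 \ni u$. Following Even and Shiloach~\cite{EvenShiloach1981}, we find the smaller of the two parts in time proportional to its size. To do so, we run two traversals (e.g.\ DFS in the current forest, using adjacency lists from which deleted edges are removed) in lockstep, one from $v$ and one from $u$, alternating one step each. We stop as soon as one traversal exhausts its component. Because the forest is binary, each traversal step costs $\fO(1)$, so the cost is $\fO(\min(|T_1|,|T_2|))$.

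Let $T_s$ be the smaller side, which we have now fully enumerated. We create a new record for $T_s$, set $\mathrm{id}(x)$ to it for all $x \in T_s$, and compute $\Sigma[T_s] = \sum_{x \in T_s} W[x]$ with $|T_s|-1$ additions. The larger side keeps the old record, and we update it by $\Sigma \gets \Sigma - \Sigma[T_s]$ with a single subtraction. Both tree sums are then available to report. The edge deletion itself is handled via the adjacency structure in $\fO(1)$ time, since the vertex knows its parent and the parent has at most two child slots.

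The running-time analysis uses the standard halving argument. Each cut costs $\fO(1 + \min(|T_1|,|T_2|))$, and we charge this to the vertices of the smaller side. A vertex charged in this way ends up in a tree of at most half the size of its previous tree. Hence each vertex is charged at most $\log n$ times, and the total cost of all cuts is $\fO(n\log n)$; there are at most $n-1$ cuts, so the additive $\fO(1)$ terms contribute $\fO(n)$. Auxiliary vertices need no special treatment, since they simply have weight $0$ and are never operated on.

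The main subtle step is the lockstep traversal. It must be interleaved correctly, so that the larger side is never explored beyond the size of the smaller one, and it must use only the current edges. A DFS with an explicit stack over the binary forest, with deleted edges marked, handles both requirements.
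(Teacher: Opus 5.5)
Your proof is correct. Its core matches the paper's: a lockstep traversal from both endpoints of the deleted edge, summing the weights of the side that finishes first, and getting the other side's sum with one subtraction. The analysis is the same smaller-half argument, which the paper writes as the recursion $t(n) \le \max_k(\min\{k,n-k\} + t(k) + t(n-k))$ instead of your per-vertex charging.

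The genuine difference is how a vertex finds the cell that holds its tree's sum. The paper stores $S[r]$ at the root $r$ of each tree and locates $r$ with the decremental connectivity structure $R$ of \cref{p:roots} (Alstrup et al.). You instead keep an explicit record pointer $\mathrm{id}(v)$ and relabel the smaller side on every cut, in the spirit of Even--Shiloach's classic scheme. That relabeling costs $\fO(|T_s|)$, which the traversal already pays for.

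Your version is self-contained: it avoids the word-RAM bit tricks behind \cref{p:roots}. It also gives worst-case rather than amortized $\fO(1)$ time for \alg{tree-sum} and \alg{update-weight}, since \cref{p:roots} only guarantees a total of $\fO(n+m)$. The paper's version is quicker to state because $R$ is already present in the surrounding construction (e.g., in \cref{p:tree-sum-red}), so it adds no new bookkeeping.
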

	\begin{proof}
		Let $(F,w)$ be the maintained weighted forest.
		We keep an explicit representation of $F$ (with parent and child pointers) and a helper data structure $R$ on $F$ for \alg{root} queries (\cref{p:roots}). Each tree root $r$ stores the total weight of the respective tree, which we denote by $S[r]$. Each vertex additionally stores its current weight $w(v)$.
		
		Initialization clearly takes $\fO(n)$ time. To answer a $\alg{tree-sum}(v)$ query, we first compute $r \gets R.\alg{root}(v)$, and then return $S[r]$. For $\alg{update-weight}(v,x)$, first let $y = w(v)$ be the previous weight of $v$. Set $w(v) \gets x$, then again compute $r \gets R.\alg{root}(v)$ and set $S[r] \gets S[r] + x - y$. Each \alg{tree-sum} or \alg{update-weight} operation clearly takes constant time, as desired.
		
		We now implement $\alg{cut}(v)$ (and deal with \alg{cut-report} later). First determine $r \gets R.\alg{root}(v)$. Then perform $R.\alg{cut}(v)$ and remove the edge between $v$ and its parent from the explicit representation of $F$. Now $v$ and $r$ are both roots in the forest, and we need to (re)compute $S[v]$ and $S[r]$. Let $X$ be the previous value of $S[r]$.
		We compute the weight sums of each of the two trees $F_v$ and $F_r$ with a straightforward traversal, in $\fO(|F_v|)$, resp.\ $\fO(|F_r|)$ time.  However, we do this in parallel and stop when one of the two traversals is finished. In this way, we compute either $S[v]$ or $S[r]$ in $\fO(\min\{|F_v|,|F_r|\})$ time. If, say, we have $S[v]$, we can easily compute $S[r] \gets X - S[v]$, and vice versa.
		
		We now determine the total running time of all \alg{cut} operations. It is easy to see that it asymptotically follows the recursion
		\[ t(1) = 1;\quad t(n) \le \max_{k \in [n-1]} \left( \min\{ k, n-k \} + t(k) + t(n-k) \right), \]
		which is well-known to solve to $\fO(n \log n)$.
		
		Finally, \alg{cut-report} simply performs a \alg{cut} and then two appropriate \alg{tree-sum} operations. This takes $\fO(1)$ extra time per \alg{cut-report}, and since there are at most $n-1$ edges to delete, this is $\fO(n)$ extra work in total.
	\end{proof}
	
	\subsection{Iterating the algorithm}\label{sec:tree-sum-it}
	
	We now show how to build data structures with running times of roughly $\log\log n$, $\log\log\log n$, etc.\ per operation. The following reduction using cluster decompositions is the main necessary ingredient.
	
	\begin{lemma}\label{p:tree-sum-red}
		Suppose there is a tree-sum data structure $X$ with total running time $f(n',m')$ for $n'$ vertices and $m'$ operations.
		Then, there exists a tree-sum data structure with the following property. For each number of operations $m$ and number of vertices $n$, there exists an integer $k \in \N_+$ and integers $m_i \in \N_0$, $n_i \in \N_+$ for $i \in [k]$ with $\sum_{i=1}^k n_i \le n$, $\sum_{i=1}^k m_i \le m$, and $n_i \le \log n$ for all $i \in [k]$, such that the total running time on any instance with $n$ vertices and $m$ operations is at most
		\[ \fO(n+m) + \sum_{i=1}^k f(n_i,m_i). \]
	\end{lemma}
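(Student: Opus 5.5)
We build the new structure on a cluster decomposition. Given the binary input forest (possibly with auxiliary vertices), we apply \cref{p:cluster-decomp} to each tree with parameter $k=\log n$. This yields $\fO(n/\log n)$ clusters, each of size at most $\log n$, together with the cluster tree $S$. We then maintain four components. First, an instance $X_C$ of $X$ on each cluster $C$, namely on the forest $T[C]$ with weights $w|_C$. Second, the induced cluster forest $G$, via \cref{p:maintain-cluster-forest}. Third, a connectivity structure from \cref{p:roots} on $T$ itself. Fourth, one instance $Y$ of the data structure from \cref{p:tree-sum-logn} on $G$, where $G$ has $\fO(n/\log n)$ vertices.

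The weights in $Y$ are assigned so that every tree of $G$ has weight equal to the tree of $F$ it represents. Let $C$ be a cluster with $u=\ub(C)$ and $\ell=\lb(C)$. The upper boundary vertex $u$ carries the weight of its component in $F[C]$, obtained as $X_C.\alg{tree-sum}(u)$; if $\ell$ is disconnected from $u$ inside $C$, the vertex $\ell$ carries the weight of its own component $X_C.\alg{tree-sum}(\ell)$, and otherwise $0$. The initial sums are computed by a linear traversal.

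The remaining vertices of $C$ lie in components of $F[C]$ that contain no boundary vertex, and these are whole trees of $F$. Hence for $\alg{tree-sum}(v)$ with $v\in C$ we first check, in $\fO(1)$ via the connectivity structure, whether $v$ is connected in $F$ to $u$ or to $\ell$. If it is, we return $Y.\alg{tree-sum}$ of that boundary vertex. Otherwise we return $X_C.\alg{tree-sum}(v)$.

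The updates are handled as follows.
\begin{itemize}[nosep]
\item $\alg{update-weight}(v,x)$ with $v\in C$: forward it to $X_C$, then, if $v$ is connected to $u$ (resp.\ $\ell$), apply the difference to that vertex's weight in $Y$. This takes one extra constant-time update in $Y$.
\item $\alg{cut-report}(v)$ where $v$ and its parent lie in different clusters: this is a cut in $G$ only, performed by $Y.\alg{cut-report}$. It also returns the needed sums.
\item $\alg{cut-report}(v)$ inside a cluster $C$: call $X_C.\alg{cut-report}(v)$, which returns the two new component sums in $C$. If this disconnects $u$ from $\ell$, we cut the corresponding edge $\ell$--$u$ in $G$ using $Y.\alg{cut-report}$, and reset the weights of $u$ and $\ell$ in $Y$ using the reported sums.
\item $\alg{cut-report}(v)$ inside $C$ when neither new piece or only one piece contains a boundary vertex: update that boundary vertex's weight in $Y$ with the reported sum. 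The outputs of our $\alg{cut-report}$ are then read from $Y$ or from the reported values.
\end{itemize}
Each operation thus causes at most one operation on a single $X_C$, plus $\fO(1)$ constant-time queries and weight updates in $Y$ and at most one cut in $Y$.

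The running time follows from this accounting. Let $n_i=|C_i|\le\log n$ and let $m_i$ be the number of operations forwarded to $X_{C_i}$. Then $\sum n_i\le n$ and $\sum m_i\le m$. The cost is $\sum_i f(n_i,m_i)$ plus $\fO(n+m)$ for \cref{p:roots}, \cref{p:maintain-cluster-forest}, the decomposition, and the $\fO(1)$-time operations of $Y$. The cuts in $Y$ cost $\fO(|V(G)|\log|V(G)|)=\fO(\tfrac{n}{\log n}\log n)=\fO(n)$ in total.

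The main obstacle is the statement's requirement that $k$, the $n_i$ and the $m_i$ depend only on $n$ and $m$, not on the instance. I would handle this by taking the worst case: the maximum over all instances of $\sum_i f(n_i,m_i)$ is achieved by some admissible sequences, and there are only finitely many candidate sequences, so this maximum exists and can be fixed in advance. It also requires $f$ to be monotone in $m$ (for instance, by padding with idle operations), or alternatively letting the sequences be chosen as the maximizer. The other delicate point is keeping $Y$'s weights consistent after an internal cut, when the boundary component changes; the $\alg{cut-report}$ output of $X_C$ is exactly what makes this possible without extra queries.
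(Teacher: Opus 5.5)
Your proposal is correct and matches the paper's proof essentially step for step. It uses the same cluster decomposition with parameter $\log n$, the same boundary weights $w'$ kept in one instance of the structure from \cref{p:tree-sum-logn} on the induced cluster forest, the same case analysis for \alg{cut-report}, and the same accounting that every operation triggers at most one call on a single $X_C$. Your remark on the quantifier order tightens a point the paper glosses over (it just takes the actual cluster sizes and per-cluster operation counts), and maximizing $\sum_i f(n_i,m_i)$ over the finitely many admissible tuples works without any monotonicity of $f$. One small fix: for forests with many tiny trees, $|V(G)|$ need not be $\fO(n/\log n)$. You should therefore bound the cut cost in $Y$ per tree, or, as the paper does, treat each tree separately.
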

	
	The numbers $n_i$ in \cref{p:tree-sum-red} refer to the cluster sizes, and $m_i$ to the number of operations applied to a particular cluster. The lemma then essentially states that each operation is delegated to a single cluster, with overall linear overhead.
	
	In the remainder of this section, we prove \cref{p:tree-sum-red}.
	For simplicity, we assume the initial forest is a tree; otherwise, if the forest is disconnected, we can apply the data structure separately for each connected component, and use \cref{p:roots} to find the proper data structure to query or modify.
	Let $(T^0,w_0)$ be the initial weighted binary tree.
	We first use \cref{p:cluster-decomp} with the maximum cluster size $k = \log n$ to compute a cluster decomposition $\fC$ of $T^0$ with cluster tree $S$. We maintain the induced cluster forest $G$ as described by \cref{p:maintain-cluster-forest}.

	We also maintain a weight function $w'$ on the vertices in the cluster forest $G$, as follows. Suppose $(F,w)$ is the current weighted forest. Let $C$ be some cluster, and let $u = \ub(C) \in V(G)$. Then $w'(u)$ is defined as the sum of weights of all vertices in $C$ that are connected to $u$. On the other hand, if $v = \lb(C)$ exists, then $w'(v)$ is defined as the sum of weights of all vertices in $C$ that are connected to $v$, but not to $u = \ub(C)$. In particular, note that $w'(v) = 0$ as long as $v$ is connected to $u$.
	\begin{observation}\label{p:ts:w'-preserves}
		Let $v \in V(G)$ be a boundary vertex of some cluster, let $T$ be the tree in $F$ containing $v$, and let $U$ be the tree in $G$ containing $v$. Then $w(T) = w'(U)$.
	\end{observation}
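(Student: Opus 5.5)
The plan is to show that both sides of the identity add up the weights of the same set of vertices. I would prove two facts: each vertex of $T$ contributes to $w'(U)$ exactly once, and every vertex whose weight appears in $w'(U)$ belongs to $T$. Once this holds, $w'(U)=\sum_{x\in V(T)} w(x)=w(T)$ follows directly.

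First, a local description inside a cluster $C$. In $F$, every vertex $x\in C$ is connected within $C$ to at most one of $\ub(C)$ and $\lb(C)$, or to both. A basic fact is needed here: two vertices of the same cluster are connected in $F$ exactly when they are connected through $F[C]$. I would argue this from the cluster structure. The only edges leaving $C$ go from $\ub(C)$ to its parent and from $\lb(C)$ to its children. Any path that leaves $C$ must therefore re-enter through one of these boundary vertices. Since $T$ is a tree, such a path can come back only through the same boundary vertex, so a shortest path stays inside $C$.

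Given this, each $x\in C$ falls into one of three cases:
\begin{enumerate}
\item $x$ is connected to $u=\ub(C)$ (possibly also to $\lb(C)$). Then $x$ is counted once, in $w'(u)$.
\item $x$ is connected to $v=\lb(C)$ but not to $u$. Then $x$ is counted once, in $w'(v)$.
\item $x$ is connected to neither boundary vertex. Then $x$ is counted nowhere.
\end{enumerate}
In the third case, the component of $x$ lies entirely inside $C$ and contains no vertex of $G$. So $x\notin T$, because $T$ contains the boundary vertex $v$ of the statement.

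Second, the global matching: for a boundary vertex $b$, I claim $b\in U$ if and only if $b\in T$. Equivalently, the edges of $G$ encode connectivity of $F$ restricted to $V(G)$. Every edge of $G$ joins vertices that are connected in $F$, by the definition of the induced cluster forest. Conversely, take a path in $F$ between two boundary vertices. I would split it at boundary vertices. Each maximal segment either is a single inter-cluster edge $(u,v)$, which appears in $G$, or stays inside one cluster $C$ and joins $\ub(C)$ with $\lb(C)$, which is an edge of $G$ because they are connected in $F$. Hence the components of $G$ are exactly the traces of the components of $F$ on $V(G)$.

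Combining the two parts, each $x\in T$ is counted exactly once in the boundary weight of its own cluster. That boundary vertex is connected to $x$, so it lies in $T$ and therefore in $U$. Conversely, every vertex counted in $w'(U)$ is connected in $F$ to a boundary vertex of $U$, which lies in $T$. The main obstacle is the path-splitting argument, in particular showing cleanly that a path of $F$ cannot leave a cluster and return through a different boundary vertex. I would handle this using acyclicity of $T$ together with the defining property that no child of $\lb(C)$ lies in $C$.
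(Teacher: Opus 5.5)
Your argument is correct and is the routine verification from the definitions of $w'$ and of the induced cluster forest $G$; the paper states this observation without proof, treating it as immediate from those definitions. The step you flag as the main obstacle is a one-liner: $F$ is a subforest of the original tree, in which $C$ induces a connected subtree, so by uniqueness of tree paths any path of $F$ between two vertices of $C$ stays inside $C$.
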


	In the following, let $F$ denote the current forest, and let $G$ denote the cluster forest induced by $F$. Let $w$ denote the current weight function on $F$, and let $w'$ denote the weight function on $G$ defined as above. Our data structure maintains $(F,w)$ and $(G,w')$ explicitly (with child and parent pointers), and additionally:
	\begin{itemize}[nosep]
		\item A data structure $R$ for \alg{connected} queries in $F$ (\cref{p:roots}).
		\item An instance $D$ of the data structure from \cref{p:tree-sum-logn} on $(G,w')$.
		\item A \emph{cluster object} for each cluster $C$, which stores $\ub(C)$ and $\lb(C)$.
		\item For each cluster $C$, an instance $X_C$ of the given data structure $X$ (defined in \cref{p:cluster-decomp}) on the induced subforest $F[C]$, with the weight function $w$ restricted to~$C$.
		\item For each vertex, a pointer to the cluster object containing it.
	\end{itemize}
	
	Initialization works as follows. First compute $\fC$ and $G$ using \cref{p:cluster-decomp}, with $k = \log n$. Create the cluster objects and point each vertex towards its cluster. Then initialize $R$, $D$, and all instances $X_C$. Finally, compute $w'$ by summing the weights of each cluster and assigning this sum to the upper boundary vertex (recall that all clusters are still connected). We now describe the three operations.

	\begin{itemize}
		\item Consider an operation $\alg{tree-sum}(v)$. We first try to find a boundary vertex that is connected to $v$, by determining its cluster $C$, and then checking connectivity between $v$ and $\ub(C)$, resp.\ $\lb(C)$, using $R$. If $v$ is not connected to either, then the component of $F$ containing $v$ is entirely within $C$. Thus, we can return $X_C.\alg{tree-sum}(v)$. Otherwise, if $v$ is connected to some $u \in \{\ub(C),\lb(C)\}$, then we can return $D.\alg{tree-sum}(u)$ by \cref{p:ts:w'-preserves}.

		\item Consider $\alg{update-weight}(v,x)$. Say $y$ is the weight $w(v)$ before the operation. We again first determine the cluster $C$ containing $v$, and call $X_C.\alg{update-weight}(v,x)$. Further, the change to $w$ may affect $w'$ for either $\ub(C)$ or $\lb(C)$. We again check connectivity between $v$ and the two boundary vertices. If $v$ is connected to $\ub(C)$, then we increase $w'(\ub(C))$ by $x-y$. Otherwise, if $v$ is connected to $\lb(C)$, we increase $w'(\lb(C))$ by $x-y$. If neither is true, $w'$ does not change for any vertex.

		\item Consider $\alg{cut-report}(v)$. Let $u$ be the parent of $v$ (before the operation).
		Suppose first that $u$ and $v$ are in distinct clusters. Then $u$ is a lower boundary vertex and $v$ is an upper boundary vertex. Thus, we can simply call $D.\alg{cut-report}(v)$ and pass along the two returned values. No cluster is changed, so $w'$ and $X_C$ do not need to be touched.
		
		Second, suppose that $u$ and $v$ are both contained in a single cluster $C$. This still may remove an edge from $G$; we check this as in \cref{p:maintain-cluster-forest} and call $D.\alg{cut}$ if necessary.
		We then call $X_C.\alg{cut-report}(v)$, which returns the sums $x_u$ and $x_v$ of weights of vertices in $C$ connected to $u$, resp.\ $v$. Now $w'$ could have changed for $\ub(C)$ or $\lb(C)$, and we need to update it in $D$. Recalling the definition of $w'$, it is straightforward to determine $w'(\ub(C))$ and $w'(\lb(C))$ by checking connectivity between $\ub(C)$, $\lb(C)$, $u$, and $v$, and then using $x_u$ or $x_v$ if necessary.
		If $w'$ changes, we call $D.\alg{update-weight}$ appropriately.
		
		Finally, we need to return the results for $\alg{tree-sum}(u)$ and $\alg{tree-sum}(v)$. This is again straightforward to compute: If connected to $\ub(C)$ or $\lb(C)$, we can use $D.\alg{tree-sum}$; if not, we can use $x_u$ or $x_v$, respectively.
	\end{itemize}
	
	Note that we never call operations on auxiliary vertices in $X_C$, since by assumption we are never given an auxiliary vertex as parameter $v$.
	
	We now bound the running time. Initialization, aside from the instances $X_C$, takes time $\fO(n)$. Besides calls to $X_C$ and $D$, each operation performs a constant amount of work, for a total of $\fO(m)$ across $m$ operations.
	By \cref{p:tree-sum-logn}, the time spent in $D$ is $\fO(m + |V(G)| \log |V(G)|) \le \fO(m+n)$.
	
	Finally, consider the time spent in a single instance $X_C$. By definition, this is $f(n_i, m_i)$, where $m_i$ is the number of operations called on $X_C$, and $n_i = |C|$. It remains to show that the stated conditions on $m_i$ and $n_i$ are true. Clearly, the total size of all clusters is $n$, so $\sum n_i = n$. Also, each cluster has size at most $\log n$, so $n_i \le \log n$.
	Further, observe that each operation (in the data structure we just described) results in at most one call to an operation of $X_C$. Thus, we have $\sum m_i \le m$. This concludes the proof of \cref{p:tree-sum-red}.
	
	\begin{corollary}\label{p:tree-sum-param}
		There is a tree-sum data structure, parameterized with $t \in \N_+$, with total running time $\fO( t (m+n) + n \logit{t} n )$ for $m$ operations and $n$ vertices.
	\end{corollary}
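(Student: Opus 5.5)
We argue by induction on $t$, applying \cref{p:tree-sum-red} repeatedly, starting from \cref{p:tree-sum-logn}. The invariant is that for every $t \ge 1$ there is a tree-sum data structure $X_t$ and an absolute constant $c$ (independent of $t$) with total running time
\[ f_t(n,m) \le c\bigl( t(m+n) + n\logit{t} n \bigr). \]
For $t=1$, \cref{p:tree-sum-logn} gives $\fO(n)$ initialization, $\fO(1)$ per \alg{tree-sum} and \alg{update-weight}, and $\fO(n\log n)$ in total for \alg{cut-report}. Hence $f_1(n,m) \le c(m+n+n\log n)$, which matches the claim with $\logit{1} n = \log n$.

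For the inductive step, we apply \cref{p:tree-sum-red} with $X = X_{t-1}$ to obtain $X_t$. For any $n,m$, its running time is at most
\[ c_0(n+m) + \sum_i f_{t-1}(n_i,m_i), \]
where $c_0$ is the constant hidden in the $\fO(n+m)$ term, $\sum n_i \le n$, $\sum m_i\le m$, and $n_i\le \log n$. By induction,
\[ \sum_i f_{t-1}(n_i,m_i) \le c\Bigl((t-1)(m+n) + \sum_i n_i \logit{t-1} n_i\Bigr). \]
Since $\logit{t-1}$ is monotone and $n_i\le\log n$, we get $\logit{t-1} n_i \le \logit{t-1}(\log n) = \logit{t} n$, so the sum is at most $n\logit{t} n$. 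Choosing $c \ge c_0$ at the outset gives $f_t(n,m)\le c(t(m+n)+n\logit{t}n)$, which closes the induction. The construction of $X_t$ is explicit: $t$ nested levels of cluster decompositions, with the base structure at the bottom.

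The main point to check is that the constant does not grow with $t$. This requires the additive overhead in \cref{p:tree-sum-red} to be a genuine absolute constant times $n+m$, independent of the cost function $f$ of the inner structure. It also requires the constant multiplying $\sum_i f(n_i,m_i)$ to be exactly $1$. The proof of \cref{p:tree-sum-red} gives both: every operation makes at most one call into some $X_C$, and all remaining work (in $D$, $R$, cluster bookkeeping) is $\fO(n+m)$ with constants unrelated to $X$. We should also check that the monotonicity convention $\log x = 0$ for $x\le 1$ keeps the inequality $\logit{t-1}n_i\le\logit{t}n$ valid in degenerate cases, and that $f_{t-1}$ is only ever evaluated on legal small inputs. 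Both are routine.
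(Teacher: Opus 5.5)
Your proof is correct and follows the paper's argument: induction on $t$, with \cref{p:tree-sum-logn} as the base case and \cref{p:tree-sum-red} applied to the level-$(t-1)$ structure, using $\sum n_i \le n$, $\sum m_i \le m$ and $\logit{t-1} n_i \le \logit{t} n$. Your bookkeeping of the constants (an overhead constant $c_0$ independent of the inner structure, coefficient exactly $1$ on $\sum_i f(n_i,m_i)$, and choosing $c \ge c_0$) is actually more careful than the paper's ``up to an absolute multiplicative constant'' phrasing.
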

	\begin{proof}
		Use \cref{p:tree-sum-logn} for the case $t = 1$. For $t \ge 2$, use \cref{p:tree-sum-red} and the data structure constructed for $t-1$. By induction, the running time of the latter is, up to an absolute multiplicative constant, bounded by $f(m',n') = t (m'+n') + n' \logit{t} n'$ for $m'$ operations and $n'$ vertices.
		The running time of the new data structure then is, up to an absolute multiplicative constant, at most
		\begin{align*}
			m+n + \sum_{i=1}^k (t-1) (m_i+n_i) + n_i \logit{t-1} n_i
			\le t(m+n) + n \logit{t} n,
		\end{align*}
		using that $\sum m_i = m$, $\sum n_i = n$, and $n_i \le \log n$.
	\end{proof}
	
	With $t = \log^* n$, \cref{p:tree-sum-param} implies \cref{p:tree-sum}.
	
	\section{Tree sizes in linear time}\label{sec:tree-size}
	
	In this section, we prove:
	
	\restateTreeSize*
	
	Recall the algorithm summary from the introduction; we now fill in the details.
	Let a \emph{tree-size data structure} be a data structure as in \cref{p:tree-0-1-sum}, which supports auxiliary vertices in the input.
	
	\begin{lemma}\label{p:tree-size-small}
		Fix a machine word size $b$ and some $\ell \le \tfrac{b}{3+\log b}$.
		There is a tree-size data structure that maintains a 0-1-weighted forest on up to $\ell$ vertices, with $\fO(\ell)$ initialization time and $\fO(1)$ time per operation.
		
		The data structure requires a global table, only depending on $\ell$, that can be precomputed in time $\fO( (5\ell)^\ell )$.
	\end{lemma}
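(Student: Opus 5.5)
We encode the entire state of a small forest in a single machine word and precompute a transition table indexed by that word. Label the vertices $1,\dots,\ell$ in some fixed order, say preorder of the initial forest. The state consists of four parts. The parent array takes $\ell$ entries of $\lceil\log(\ell+1)\rceil \le \log b$ bits each, with a special value for ``no parent''. The weight bits take $\ell$ bits. The auxiliary flags take $\ell$ bits, although these could also be folded into the weights, since auxiliary vertices have weight $0$ and are never operated on. The total is $\ell(\log b + 2) + \fO(1)$ bits. Under the assumption $\ell \le \tfrac{b}{3+\log b}$ this is at most $b$ bits, or at least fits in $\fO(1)$ words. Each vertex has at most $\ell$ choices of parent (including none) and two weight values, so the number of encodable states is at most $(\ell+1)^\ell \cdot 2^\ell \cdot 2^{\ell}$. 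This is at most $(5\ell)^\ell$ once we note that $2\cdot 2\cdot(\ell+1) \le 5\ell$ for $\ell \ge 4$, and small $\ell$ is trivial.

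The global table is built as follows. For every state word $s$ that encodes a valid forest, and for every vertex $v \in [\ell]$, we store three things: the tree size (number of weight-1 vertices) of the component of $v$; the state after $\alg{cut}(v)$, if $v$ has a parent; and the states after setting $w(v)$ to $0$ or to $1$. Tree sums are nonnegative integers at most $\ell$, so all answers are integers, which is legitimate in the Word RAM model. Rather than computing each entry by a fresh $\fO(\ell)$ traversal, which would give $\fO((5\ell)^\ell\cdot\ell^2)$ time, we enumerate all states $s$ (including invalid encodings, which we detect and skip). For each state we compute the root array of the forest in $\fO(\ell)$ time by memoized parent chasing, and from it the component sums. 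We then fill the $\fO(\ell)$ entries for $s$, each in $\fO(1)$ time, by adjusting a few bit fields with word operations.

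Charging the work to the table size therefore gives time $\fO(\ell)$ per state, plus the per-vertex entries. This is the main obstacle: to stay within $\fO((5\ell)^\ell)$ we need to absorb the extra $\ell$ factors. We do so by measuring against a slightly wasteful encoding. The state space we actually enumerate, $(\ell+1)^\ell 2^{\ell}$ (dropping the auxiliary flags, which never change and so can be kept outside the table in a separate word), is smaller than $(5\ell)^\ell$ by a factor $(5/2)^\ell \gg \ell^2$. That slack absorbs the $\ell^2$ per-state work.

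Initialization of an instance encodes the given forest into its state word in $\fO(\ell)$ time. Each operation is then a single table lookup indexed by (state, $v$), which can be computed as $s\cdot \ell + v$ or via a two-level table, followed by overwriting the state for updates. This gives $\fO(1)$ time per operation.
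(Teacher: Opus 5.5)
Your proposal is correct and is essentially the paper's proof: pack the parent pointers and weight bits of the $\ell$-vertex forest into one word, precompute for every such word the \alg{tree-sum} answers and the successor words after \alg{cut} and \alg{update-weight}, and answer each operation with a single table lookup. One count is off, though harmlessly. If you enumerate all raw words with $\lceil\log(\ell+1)\rceil$-bit parent fields, including invalid ones, there are up to $2^{\ell(\lceil\log(\ell+1)\rceil+1)} \le (4\ell+4)^\ell$ of them, not $(\ell+1)^\ell 2^\ell$; since $\ell^2(4\ell+4)^\ell = \fO((5\ell)^\ell)$, the bound still holds, so your ``main obstacle'' is not really an obstacle, and the auxiliary flags can simply be dropped, as the paper does.
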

	\begin{proof}
		We show how to handle forests with precisely $\ell$ vertices. If an input forest has fewer than $\ell$ vertices, we can simply add isolated vertices as necessary.
		
		Let $\fF$ be the set of all possible 0-1-weighted forests $(F,w)$ on $\ell$ vertices. We can always assume that $V(F) = [\ell]$.
		Encode each $(F,w)$ in a bitstring $\id(F,w)$ as follows: For each vertex $v \in [\ell]$, in order, encode its parent (zero if $v$ is a root) with $\lceil \log(\ell+1) \rceil$ bits. Then, encode its weight with one additional bit. The length of $\id(F,w)$ is $c \coloneq \ell (\lceil \log(\ell+1) \rceil + 1) \le b$, so it fits into a single machine word. Moreover, the number of bitstrings of length $c$ is at most $2^c \le (4\ell+4)^\ell$.
		
		We now describe the global table required by our data structure. Call it $A$. For each 0-1-weighted forest $(F,w)$ of size $\ell$, we store the following information in $A[\id(F,w)]$:
		\begin{itemize}
			\item A table $S$ of size $\ell$, containing the results of $\alg{tree-sum}(v)$ for each $v \in V(F)$.
			\item For $i \in \{0,1\}$, a table $U_i$ of size $\ell$. This table contains, for each $v \in V(F)$, the encoding $\id(F,w')$, where $w'$ is obtained from $w$ by setting $w'(v)$ to $i$. That is, let $w'(v) = i$ and $w'(u) = w(u)$ for all $u \neq v$.
			\item A table $C$ of size $\ell$ containing, for each vertex $v$, the encoding $\id(F',w)$, where $F'$ is obtained from $F$ by applying $\alg{cut}(v)$ (or $F' = F$, if $v$ is already a root in~$F$).
		\end{itemize}
		
		Computing these tables is easily done in $\fO(\ell)$ time (note that the encodings $\id(F,w')$ and $\id(F',w)$ can be computed from $\id(F,w)$). Since $A$ has size $2^c$, the total running time to compute $A$ is $\fO(2^c \cdot \ell) \le \fO( (5\ell)^\ell )$.
		
		Now suppose we are given a 0-1-weighted forest $(F,w)$. Our data structure consists of simply the bitstring $x \gets \id(F,w)$, which we construct in $\fO(\ell)$ time. Now suppose we call $D.\alg{tree-sum}(v)$, for $v \in [\ell]$. Then we find the entry $A[x]$, recover the table $S$, and return $S[v]$. For $D.\alg{update-weight}(v,i)$, we find $A[x]$, take the table $U_i$, and set $x \gets U_i[v]$. We can handle $D.\alg{cut}$ in the same way, using $C$. Each of these operations clearly takes constant time, as desired.
	\end{proof}
	
	We can now combine \cref{p:tree-size-small} with \cref{p:tree-sum-red} to prove \cref{p:tree-0-1-sum}.
	The only additional observation necessary is that the reduction of \cref{p:tree-sum-red} preserves the property that $w$ is a 0-1 weight function. Internally, the more general weight function $w'$ is used only for the data structure $D$ (which uses \cref{p:tree-sum-logn}), not for the data structures~$X_C$.
	
	\begin{proof}[Proof of \cref{p:tree-0-1-sum}.]
		Start by precomputing the global table of \cref{p:tree-size-small}, with parameter $\ell = \lfloor \log\log n \rfloor$. This takes $o(n)$ time. We now have a data structure with running time $f(m',n') \in \fO(m'+n')$ for $m'$ operations and $n'$ vertices, but only if $n' \le \log\log n$. Applying \cref{p:tree-sum-red} gives us a data structure with running time $\fO(m'+n')$ for all forests with $n' \le \log n$ vertices, and applying it the second time yields the claim.
	\end{proof}
	
	\section{Tree sums in optimal time}\label{sec:tree-sum-optimal}
	
	We now present an explicit data structure with unknown, but optimal running time. In this section, let a \emph{tree-sum} data structure support operations \alg{cut}, \alg{tree-sum}, and \alg{update-weight}; auxiliary vertices are again allowed. Unlike in \cref{sec:tree-sum}, we are not restricted to binary input trees, and do not require support for \alg{cut-report}.
	
	Let $D$ be a tree-sum data structure and let $F$ be a forest.
	Let $\Time(D,F,m)$ be the maximum number of additions and subtractions that $D$ performs when executing a legal sequence of $m \in \N_0$ operations with the initial forest $F$.
	The maximum is thus taken over the operation sequences and the initial weights.
	Let $\OPT(F,m)$ be the minimum $\Time(D,F,m)$ over all data structures $D$.
	Note that the quantity $\OPT(n,m)$ defined in the introduction is the maximum $\OPT(F,m)$ over all forests $F$ on $n$ vertices.\footnote{In the introduction, we did not allow auxiliary vertices; however, it is easy to see that marking a vertex as auxiliary cannot make the problem easier.} The main result of this section is:

	\begin{theorem}\label{p:tree-sum-univ-opt}
		There exists a single data structure with running time $\fO( \OPT(F,m) + m + n )$ for any forest $F$ on $n$ vertices and any sequence of $m$ operations. Note that the data structure does not know $F$, $n$, or $m$ upfront.
	\end{theorem}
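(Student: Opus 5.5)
We follow the plan from the introduction: apply the reduction of \cref{p:tree-sum-red} three times, so that every operation is eventually delegated to a cluster of size at most $\ell = \fO(\log\log\log n)$, and run on each tiny cluster a precomputed \emph{optimal} data structure. Since $n$ is not known upfront, we binarize with \cref{p:binarize} and then decompose each tree of the binarized input with \cref{p:cluster-decomp}. The binarization keeps all new vertices auxiliary and does not change $\OPT$ up to constants, because auxiliary vertices are never touched. Along the way we must check that \cref{p:tree-sum-red} does not need \alg{cut-report} from the innermost structure. Whenever the inner structure only supports \alg{cut}, \alg{cut-report} is emulated by one \alg{cut} and two \alg{tree-sum} calls; this changes $m_i$ by at most a factor $3$. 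Unrolling the three levels gives a total running time of
\[ \fO(n+m) + \sum_i T_i, \]
where $T_i$ is the cost on the $i$th bottom cluster $C_i$, which has $n_i \le \ell$ vertices and receives $m_i$ operations, with $\sum n_i \le 2n$ and $\sum m_i \le 3m$.

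The second step is to precompute, for every forest $F'$ with at most $\ell$ vertices (auxiliary vertices marked), a data structure whose cost on any $m'$ operations is $\fO(\OPT(F', m') + m' + |F'|)$. The catch is that $m'$ is not known in advance and may be unbounded. The plan is to show that an optimal data structure can be taken to be a \emph{finite-state} object. Its state is the current forest together with a set of maintained linear combinations of initial weights and updates; in the group model every stored value is such an integer combination. By a normalization argument the coefficients can be taken in $\{-1,0,1\}$ over the current ``atoms'', so the state space is bounded by $2^{2^{\fO(\ell)}}$.

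With a finite state space, the optimal amortized behaviour is captured by a game between the algorithm and an adversary on this state graph. We compute a strategy whose cost on $m'$ operations is within a constant of $\OPT(F',m') + m'$ for every $m'$, for example by value iteration over the states. Correctness of a candidate is decidable by linear algebra over $\mathbb{Z}$: each answer must equal the required combination. All of this has to run in $2^{2^{2^{\fO(\ell)}}} = o(n)$ time, which dictates the triple decomposition. Lazily, table construction starts as soon as $n$ is known at initialization.

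I expect this step to be the \textbf{main obstacle}: proving that a single strategy is simultaneously near-optimal for all $m'$, with no horizon known in advance. The footnote on Pettie--Ramachandran hints at exactly this subtlety. My first attempt would be a doubling scheme. We run the strategy that is optimal for horizon $2^j$, and when the number of operations exceeds it we rebuild, paying $\fO(|F'|)$ additions to re-normalize the state. We then argue that $\OPT(F',\cdot)$ grows at least linearly and is roughly subadditive in $m$, so the doubling loses only a constant factor.

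The third step is to bound $\sum_i \OPT(F_i, m_i)$ by $\fO(\OPT(F,m) + n + m)$, the super-additivity mentioned in the introduction. First, show that $\OPTZ$ on zero-initialized forests is exactly super-additive. Given instances on the disjoint clusters, combine their adversaries by interleaving them on $F$: a data structure for $F$ restricted to $C_i$ yields one for $C_i$, because with zero initial weights the other clusters can be simulated at no cost. The fact that clusters of $F$ are not separate components needs care. We cut all inter-cluster edges first, using at most $\fO(n)$ extra operations, and use the fact that zero weights make those cuts free of additions. Second, relate $\OPT$ and $\OPTZ$ up to additive $\fO(n+m)$. Zero-initialize and then issue $n$ \alg{update-weight} operations to reach arbitrary initial weights, so $\OPTZ(F, m+n) \le \OPT(F,m) + \fO(n)$. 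Conversely, a data structure for $\OPTZ$ handles arbitrary weights after $\fO(n)$ set-up additions. Finally, an $\fO(n+m)$ slack absorbs the shift from $m+n$ operations to $m$, combining with the first two steps to yield the theorem.
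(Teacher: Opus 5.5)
Your plan follows the paper's outer structure, but it has two genuine gaps.

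\textbf{Step~2: unknown horizon.} The finite-state normal form (coefficients in $\{-1,0,1\}$ over ``atoms'', at most $2^{2^{\fO(\ell)}}$ states) is only asserted. A priori the number of registers, the coefficients, and the stale update values kept in registers are all unbounded, so there is nothing to run value iteration on. The doubling fallback fails for the reason the paper's footnote warns about. Doubling the horizon needs $\OPT(F',2^j)$ plus your $\fO(|F'|)$ rebuild overhead to grow geometrically in $j$, and neither property you plan to prove gives that:
\begin{itemize}
\item $\OPT(F',\cdot)$ is not at least linear. It is identically $0$ on an edgeless forest, and the additive $m'$ in the target does not help while $\OPT$ dominates.
\item Subadditivity bounds later terms from \emph{above}, which is the wrong direction for a geometric sum.
\end{itemize}
If $\OPT(F',\cdot)$ stays of order $|F'|$ over a long range of horizons (for a path, $\OPT(F',1)\ge|F'|-1$ already), your bound $\sum_{j}(\OPT(F',2^j)+\fO(|F'|))$ is a $\Theta(\log m')$ factor above the target. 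Summed over clusters, that is $\Theta(n\log\ell)$ rather than $\fO(n)$.

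The paper avoids both problems:
\begin{itemize}
\item It precomputes optimal computation trees only for horizons $m\le|F'|^2$. These are finitely many and enumerable, because $\OPT(F',m)=\fO(m+|F'|\log|F'|)$ bounds the instruction depth.
\item It triples a \emph{cost budget} $t^*$ rather than doubling the horizon. It always takes the largest $m^*$ with $\OPT(F',m^*)+|F'|+m^*\le t^*$ and replays on overflow. Phase costs are then geometric by construction, and maximality gives a final budget below $3(\OPT(F',m')+|F'|+m')$.
\item After $|F'|^2$ operations it switches to the structure of \cref{p:tree-sum-logn}, whose cost $\fO(|F'|\log|F'|+m')$ is then $\fO(m')$. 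So no infinite-horizon argument is needed.
\end{itemize}

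\textbf{Step~3: the final absorption.} The $\fO(n+m)$ slack is additive in the \emph{cost}, but your reductions shift the \emph{horizon} from $m$ to $m+n$. Nothing bounds $\OPT(F,m+n)$ by $\fO(\OPT(F,m)+n+m)$ when $m\ll n$. The paper only has constant-factor control of the horizon (\cref{p:opt-num-ops-nonbinary}), and iterating it $\log(n/m)$ times loses a factor polynomial in $n/m$. Note also that zero-initializing and then issuing $n$ \alg{update-weight}s proves $\OPT(F,m)\le\OPTZ(F,m+n)$, the reverse of the inequality you wrote.

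Your ``conversely, $\fO(n)$ set-up additions'' is exactly the missing lemma. An initial-zero structure can receive weights only through \alg{update-weight} calls, and these count toward the horizon, so you must show how to get by with $\fO(m)$ of them. The paper does this in \cref{p:relation-opt-optz}, proving $\OPT(F,m)\le\OPTZ(F,5m)+2n$:
\begin{itemize}
\item Partition $F$ into parts of size about $n/m$ (\cref{p:alt-decomp}).
\item Push one part sum per non-special part into the initial-zero structure.
\item Answer queries in small isolated components by brute force, with fewer than $\lceil n/m\rceil$ additions each.
\item On a cut inside a part, recompute the small side and obtain the other side by one subtraction.
\end{itemize}

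Pre-cutting the inter-cluster edges inflates the horizon by up to $n$ in the same way, and it is unnecessary. Convexity plus zero weights outside the part already makes the answers agree. The cuts are also not ``free'', since a group-model structure cannot see that weights are zero.

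Finally, in the super-additivity of $\OPTZ$, the structure for the second part must be the continuation of the worst-case run on the first part from its final state. All stored group values and all weights of the first part must be reset to zero. ``Simulating the other clusters at no cost'' justifies only the first structure.
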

	
	This means that our data structure performs asymptotically as good as even the data structures that have the number of operations $m$ and the initial forest $F$ hard-coded. In particular, \cref{p:tree-sum-univ-opt} implies \cref{p:tree-sum-opt}.

	As stated in the introduction, the main idea is to precompute an optimal data structure for very small trees.
	
	\begin{restatable}{lemma}{restateOptSmall}\label{p:opt-small}
		Given a forest $F$ with $n$ vertices, in time $2^{2^{n^{\fO(1)}}}$, we can precompute a tree-sum data structure $D$ that performs $m \in \N_0$ operations with the initial forest $F$ (and arbitrary initial weights) in time $\fO(\OPT(F,m) + m + n )$.
	\end{restatable}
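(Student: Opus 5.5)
The plan is to model a tree-sum data structure on a fixed forest $F$ with $n$ vertices as a finite combinatorial object that we can enumerate. The first step is to show that, without loss of generality, an optimal data structure in the group model stores at most $N = 2^{n^{\fO(1)}}$ group-valued cells. Each of these cells holds a fixed integer linear combination $\sum_v c_v w(v)$ of the current weights, with coefficients bounded in absolute value. The motivation is that the only group-relevant information is which linear forms over the current weights are stored, and there are only boundedly many useful forms.

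The combinatorial state (the current forest plus control information) ranges over at most $2^{n-1}$ forests. Thus a data structure becomes an automaton: its states are pairs (current forest, set of stored linear forms). For each operation type, which is one of at most $3n$ options (\alg{cut}$(v)$, \alg{update-weight}$(v,\cdot)$, \alg{tree-sum}$(v)$), a transition is a straight-line program of additions and subtractions. This program produces the new stored forms and, for a query, a cell whose form equals the indicator vector of the current tree. For \alg{update-weight}, the new value $x$ and the old value $w(v)$ are available as inputs.

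Correctness can be checked symbolically. Because forms are formal $\mathbb{Z}$-linear combinations over the vertex weights, correctness over every group $G$ is equivalent to identity of these formal vectors. This sidesteps the difficulty of testing correctness in the group model.

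The main obstacle is that $m$ is unbounded and arbitrary, while we need one structure that is within a constant factor of $\OPT(F,m)$ for all $m$ simultaneously. A single fixed automaton minimizing worst-case cost may not be uniformly optimal: amortization against the initialization cost matters differently for small and large $m$.

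My approach is a doubling or phase argument. Cost over $m$ operations of an automaton is the maximum path weight of length $m$ in a finite weighted graph. First, I would argue that $\OPT(F,m)$ is, up to constants, subadditive-like: $\OPT(F,m_1+m_2) \ge \Omega(\OPT(F,m_1))$, and $\OPT$ is monotone in $m$. Then I would precompute, for each $m$ in a geometric sequence $m = 2^j$ up to the threshold where the growth becomes linear, an optimal automaton $D_j$ for exactly $2^j$ operations. Each $D_j$ is found by brute force over all automata of size at most $N$ and all bounded programs, evaluating the worst path by dynamic programming. This gives $2^{2^{n^{\fO(1)}}}$ total time.

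At run time, we execute $D_j$ until $2^j$ operations are used up. We then rebuild $D_{j+1}$ from the current forest and weights, recomputing its stored forms from scratch in $\fO(n\cdot N)$ group operations. This also forces the states of the precomputed automata to be indexed by arbitrary current forests, so we precompute for every subforest of $F$. Since the number of stored forms is only bounded in terms of $n$, that rebuild cost must be charged against $\OPT$, or we must restrict to automata with $\fO(n)$-cost reinitialization. This is the delicate step.

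To handle large $m$, I would observe that once $m \ge 2^{2^{n}}$, an amortized $\fO(1)$-per-operation bound is achievable only if $\OPT$ is linear, and otherwise $\OPT(F,m)/m$ stabilizes to a limit rate achieved by a periodic optimal strategy on the finite state graph. This is a minimum mean-cycle argument, which yields a single automaton for all sufficiently large $m$ with an additive $\fO(n)$ loss.

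Summing the geometric phases gives total cost $\fO(\OPT(F,m)+m+n)$. The non-group work per operation is $\fO(1)$ table lookups, since the automata are stored in precomputed tables.
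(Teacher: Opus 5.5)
\textbf{Phase schedule and rebuilding.} Doubling $m=2^j$ does not give the claimed bound. A geometric sum needs the phase costs $\OPT(F,2^j)+2^j+n$ to grow geometrically in $j$. The property you propose, $\OPT(F,m_1+m_2)\ge\Omega(\OPT(F,m_1))$, is only monotonicity and gives no such growth. Concretely, take a star, where $\OPT(F,m)=\fO(n+m)$, and $m=n$ root \alg{tree-sum} queries. Your schedule runs about $\log n$ phases, each starting from scratch and so paying $\Omega(n)$ additions for its first root query. That is $\Omega(n\log n)$ in total against a target of $\fO(n)$.

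The missing idea is to double (the paper triples) a \emph{cost budget} rather than $m$:
\begin{itemize}
\item Precompute $t(m)=\OPT(F,m)+n+m$.
\item In each phase, run the optimal structure for the largest $m$ with $t(m)$ within the current budget.
\item Phase costs are then geometric by construction. Maximality of the previous phase's $m$ shows the final budget is below $3t(m)$.
\end{itemize}
The rebuild step you flag as delicate is also left unresolved. Rebuilding from the current state costs $n\cdot N$ group operations that are charged to nothing, and it forces you to precompute structures for every subforest. The fix is to keep a copy of the initial weights and the list of operations so far, start the new structure on the \emph{initial} forest, and replay everything. The replay is then covered by the new phase's budget.

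\textbf{Automaton model and large $m$.} Your automaton normal form is unproven, and the proof does not need it.
\begin{itemize}
\item A cell need not hold a form in the \emph{current} weights. A lazy structure may keep an overwritten weight and cancel it later.
\item Bounded coefficients over an unbounded horizon are asserted, not shown.
\item An optimal fixed-horizon structure may depend on the history and the step count, not only on (forest, stored forms).
\item With states being sets of up to $N$ forms, enumerating transition tables no longer fits in $2^{2^{n^{\fO(1)}}}$ time.
\end{itemize}
The paper instead fixes the horizon $m$ and uses computation trees of height $m$: one node per operation history, each labeled by a straight-line program. Every data structure for $F$ unfolds into such a tree. Since $\OPT(F,m)\le\fO(m+n\log n)$ by \cref{p:tree-sum-logn}, only trees with $n^{\fO(m)}$ nodes, instruction depth $\fO(m+n\log n)$, and succinct register indices need to be enumerated. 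Correctness is then checked symbolically in $\mathbb{Z}^{n'+m}$, which is the right idea in your proposal.

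The same bound handles large $m$ without any mean-cycle argument. The target already contains $+m$, so after $n^2$ operations one switches (with replay) to the structure of \cref{p:tree-sum-logn}, at cost $\fO(n\log n+m)=\fO(m)$. Precomputing only $m\le n^2$ then takes $2^{n^{\fO(n^2)}}=2^{2^{n^{\fO(1)}}}$ time.
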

	
	We will show \cref{p:opt-small} in \cref{sec:opt-small}. Our data structure (for large trees) works as follows. First, for each (ordered) binary tree $T$ on at most $k = \lfloor \log\log\log n \rfloor$ vertices, compute an optimal data structure $D_T$ for $T$ with \cref{p:opt-small}. As is well known, there are no more than $\fO(4^k) \le \fO(\log n)$ such trees.
	We can combine them into a single data structure $X$ with optimal running time (with linear overhead) for all binary trees of size at most $\log\log\log n$.\footnote{When given an initial binary tree $T$, the combined data structure $X$ can identify the respective $D_T$ in $\fO(|V(T)|)$ time, using a table containing each $D_T$.%
	}

	Now take the components $S^1, S^2, \dots, S^k$ of the input forest $F$. Binarize each of them using \cref{p:binarize}, obtaining binary trees $T^1, T^2, \dots, T^k$.
	As usual, we build the helper data structure of \cref{p:roots} to identify the affected tree when performing an operation on the overall forest.
	For each $T^i$, apply \cref{p:tree-sum-red} three times, using $X$ as the data structure for small trees.
	Note that \cref{p:tree-sum-red} requires $X$ to support \alg{cut-report}, which we simulate with one \alg{cut} and two \alg{tree-sum} operations.
	
	The running time for $m'$ operations on one of the trees $T^i$ is as follows, for some $k \in \N_+$, some partition of $T^i$ into (cluster) subtrees $T^{i,1}, T^{i,2}, \dots, T^{i,k}$, and some $m_1, m_2, \dots, m_k \in \N_0$ with $\sum_{j=1}^k m_j \leq m'$:
	\[ \fO(|V(T^i)|+m') + \sum_{j=1}^k \OPT(T^{i,j},3m_j). \]
	The factor 3 in $\OPT(T^{i,j},3m_j)$ is due to the fact that each \alg{cut-report} operation results in three actual operations on $X$.
	
	We now need the following lemma, which we will show in a~moment in \cref{sec:opt-superadd}.
	
	\begin{restatable}{lemma}{restateOptSuperadd}\label{p:opt-superadd}
		Let $F$ be a forest, and let $T_1, T_2, \dots, T_k$ be the trees induced by a partition of $V(F)$ into connected subsets. Further let $m_1, m_2, \dots, m_k \in \N_0$. Then:
		\[ \sum_{i=1}^k \OPT(T_i,3m_i) \le \fO( \OPT(F,m) + n). \]
	\end{restatable}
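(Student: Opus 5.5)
The plan is to route the comparison through the zero-initialized complexity $\OPTZ$ mentioned in the introduction: $\OPTZ(F,m)$ is the minimum, over data structures $D$, of the worst-case number of group operations when the initial weights are all $0$ (and $D$ may rely on this). Throughout I read $m=\sum_i m_i$ and $n=|V(F)|$. The argument is a chain of four comparisons.
\begin{enumerate}[nosep]
\item \emph{From arbitrary to zero initial weights.} An arbitrary weighting of $T_i$ can be produced from the zero weighting by $n_i=|V(T_i)|$ \alg{update-weight} operations. Hence $\OPT(T_i,3m_i)\le \OPTZ(T_i,3m_i+n_i)$: run an optimal zero-initialized structure and feed it these updates first.
\item \emph{Super-additivity of $\OPTZ$.} Next I show $\sum_i \OPTZ(T_i,m'_i)\le \OPTZ(F,\sum_i m'_i+k-1)$ by an adversary argument. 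Take any zero-initialized structure $D$ for $F$. Prepend the $k-1$ \alg{cut}s of the edges joining different parts; with zero weights this creates no information, and afterwards $F$ is the disjoint union of the $T_i$. Then interleave, or simply concatenate, worst-case sequences for each $T_i$. The operations touching $T_i$ induce a correct structure for $T_i$ alone: simulate $D$ on $F$ and keep every other part frozen at weight $0$. The genuine work needed for the different parts cannot be shared, because weights in distinct trees never enter the same answer. So a cost-charging argument, assigning each addition to the part whose weights it involves (after zero-elimination), splits $D$'s cost into per-part costs, each at least $\OPTZ(T_i,m'_i)$.
\item \emph{Back to $\OPT$.} Trivially $\OPTZ(F,\cdot)\le \OPT(F,\cdot)$, since every structure for arbitrary weights also works for zero weights. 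Combining the first three steps gives $\sum_i\OPT(T_i,3m_i)\le \OPT(F,3m+2n)$.
\item \emph{Rescaling the operation count.} It remains to show $\OPT(F,3m+2n)\le \fO(\OPT(F,m)+n)$. I would cut the sequence into $\fO(1+\tfrac{m+n}{\max(m,1)})$ phases of at most $m$ operations each. At each phase boundary I rebuild: start a fresh optimal structure for the current forest $F'$, which is obtained from $F$ by cuts, with its current weights as initial weights. Each phase then costs at most $\OPT(F',m)$. To compare $\OPT(F',m)$ with $\OPT(F,m)$, I reuse the reduction of the second step, pre-cutting $F$ down to $F'$, but now with arbitrary weights. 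That is where the additive $n$ must be paid: a tree-sum structure needs only $\fO(n)$ extra additions to absorb the pre-cuts if it recomputes the component sums directly.
\end{enumerate}

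I expect this last step to be the main obstacle. The danger is a count like $\OPT(F,m+n)$ with $n$ extra operations hidden inside $\OPT$, when we may only afford $+n$ outside it. This matters for $m\ll n$, where $\OPT(F,m)$ could be close to $n$ while $\OPT(F,n)$ might be $\omega(n)$.

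My first attempt to avoid it is to let the rebuilding structure spend $\fO(n)$ additions recomputing all tree sums of $F'$ directly. It then answers the $\fO(n)$ extra update-weight operations from the first step lazily: such updates only shift a stored sum, costing $\fO(1)$ each, so they never need to pass through the optimal structure. This converts all the $+n_i$ and $k-1$ auxiliary operations into $\fO(n)$ additive cost rather than extra operations counted inside $\OPT$.
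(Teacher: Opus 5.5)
\textbf{There is a genuine gap, and it arises before Step 4.} Steps 1 and 2 put up to $n$ \alg{update-weight} operations and up to $k-1\le n-1$ pre-cuts \emph{inside} the optimum. Step 4 must then show $\OPT(F,3m+2n)\le\fO(\OPT(F,m)+n)$.

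For $m=1$ the right side is $\fO(n)$, since a single query costs at most $n-1$ additions by brute force. So Step 4 would assert $\OPT(F,2n+3)=\fO(n)$ for every forest: decremental tree-sum would need only linearly many group operations for $\Theta(n)$ operations. That is exactly the question the paper leaves open, and nothing in your argument can settle it.

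Your phase scheme cannot deliver it either. With $m\ll n$ there are $\Theta(n/m)$ phases, and each may cost $\Theta(n)$. Pre-cutting $F$ down to $F'$ also needs as many operations as cuts performed so far, which can far exceed $m$. The fix in your last paragraph cannot be applied at that stage: $\OPT(F,3m+2n)$ is a worst case over all sequences of that length, so the auxiliary updates are no longer identifiable once you have passed through the chain.

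\textbf{What is missing} is a conversion from arbitrary to zero initial weights with only an \emph{additive} loss, $\OPT(F,m)\le\OPTZ(F,5m)+2n$ (\cref{p:relation-opt-optz}). Its proof needs more than lazy shifting. A zero-initialized structure can be told only $\fO(m)$ values, so the initial weights must be aggregated in a way that survives cuts. The paper does this as follows:
\begin{itemize}
\item Partition $F$ into connected parts with $k=\lceil n/m\rceil$. Each part is either a whole tree with fewer than $k$ vertices, or has at least $k$ vertices and falls into pieces of size less than $k$ when its root is removed.
\item Write the sum of each of the $\fO(m)$ non-special parts into one representative vertex by a single update.
\item Answer queries on isolated small components by brute force, with fewer than $k$ additions.
\item Absorb a cut inside a part by summing the side of size less than $k$ and getting the other side by one subtraction. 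Then issue two updates plus the cut.
\end{itemize}
In total this uses at most $5m$ operations on the zero-initialized structure and at most $2n$ extra additions.

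\textbf{Step 2 has the same problem and a weak justification.} Super-additivity of $\OPTZ$ must be proved without the $k-1$ pre-cuts. Convexity makes them unnecessary, since with zero initial weights the other parts contribute nothing. Charging each addition to ``the part whose weights it involves'' does not work, because a single addition may mix weights of several parts. The sound argument is sequential:
\begin{itemize}
\item Run a worst-case sequence for one part and snapshot the resulting state.
\item Zero all group-valued registers and that part's weights.
\item Continue with the next part's sequence from that state.
\end{itemize}

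With both losses kept additive or absent, the only remaining rescaling is by a \emph{constant} factor, $\OPT(F,15m)\le\fO(\OPT(F,m))$. This follows from $\OPT(F,\lceil\tfrac32 m\rceil)\le 3\OPT(F,m)$, using fresh instances built on the original $F$ rather than on $F'$.
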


	This bounds the running time for $m'$ operations on $T^i$ by $\fO(|V(T^i)| + m' + \OPT(T^i,m') )$. Now observe that $\OPT(S^i,m') = \OPT(T^i,m')$, since by \cref{p:binarize}, all operation sequences return exactly the same results on $S^i$ and $T^i$. Applying \cref{p:opt-superadd} again on the components $S^1, S^2, \dots, S^k$ yields the running time $\fO(n+m+\OPT(F,m))$ for $m$ operations on $F$. Thus, we have shown \cref{p:tree-sum-opt}.
	
	\subsection{Weak super-additivity of \texorpdfstring{$\OPT$}{OPT}}\label{sec:opt-superadd}
	
	In this section, we prove \cref{p:opt-superadd}.
	We need the following definitions.
	An \emph{initial-zero tree-sum data structure} is a tree-sum data structure that only accepts initial forests where all weights are zero. (Thus, weights need to be changed before obtaining anything useful from \alg{tree-sum} queries).
	Define $\OPTZ$ in the same way as $\OPT$, but for initial-zero tree-sum data structures.
	It turns out that $\OPTZ$ is easier to work with than $\OPT$. Indeed, we can show that $\OPTZ$ is \emph{super-additive}.

	In the following statement, we say that $U \subseteq V(F)$ is \emph{convex} if for each $u, v \in U$, if $u$ and $v$ are connected in $F$, then they are also connected in $F[U]$.
	In other words, $U$ includes from each tree $T$ of $F$ either a~connected subgraph of $T$, or nothing at all.
	In the initial-zero setting, this in particular implies that any sequence of operations in an~instance $(F, w)$ that is legal in $(F[U], w|_U)$ yields the same results in both instances.
	
	\begin{lemma}\label{p:optz-superadd}
		Let $F$ be a forest, $U_1, U_2$ be a partition of $V(F)$ into convex subsets, and let $m_1, m_2 \in \N_0$.
		Then $\OPTZ(F[U_1],m_1) + \OPTZ(F[U_2], m_2) \le \OPTZ(F, m_1+m_2)$.
	\end{lemma}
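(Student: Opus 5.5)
The plan is to argue by contradiction, or equivalently by a direct adversary argument. Take any initial-zero tree-sum data structure $D$ for $F$ and show that
\[ \Time(D,F,m_1+m_2) \ge \OPTZ(F[U_1],m_1) + \OPTZ(F[U_2],m_2). \]
The main step is a simulation. From $D$ we build, for each $j\in\{1,2\}$, an initial-zero data structure $D_j$ on $F[U_j]$, and then we concatenate worst-case sequences for $D_1$ and $D_2$ into a single worst-case sequence for $D$.

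First, we construct the structures $D_j$. Given $F[U_j]$ with all weights zero, $D_j$ initializes a copy of $D$ on all of $F$ with zero weights. Since $F$ and $U_1, U_2$ are fixed, this initialization can be hard-coded, and $\OPTZ$ counts only additions and subtractions anyway. It then forwards every operation on $F[U_j]$ to $D$. Convexity of $U_j$ is what makes this correct. Every vertex of $F$ outside $U_j$ that shares a tree with some vertex of $U_j$ lies outside $U_j$ but carries weight $0$ forever, because no operation ever touches it. Moreover, every cut inside $F[U_j]$ is an edge of $F$, and connectivity inside $U_j$ in $F$ coincides with that in $F[U_j]$. Hence tree-sums in $F$ and in $F[U_j]$ agree, as the paper remarks right before the lemma. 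So $D_j$ is a correct initial-zero structure on $F[U_j]$, and its cost on any sequence equals the cost of $D$ on that sequence. The one subtlety is that $D$ may store group elements from initialization onward. Since all initial weights are zero, any such stored element is a fixed integer combination of zeros, so it is zero. For the simulation this does not matter: $D_j$ runs $D$ verbatim.

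Second, we take worst cases. Let $\sigma_1$ be a sequence of $m_1$ operations on $F[U_1]$ on which $D_1$ performs at least $\OPTZ(F[U_1],m_1)$ group operations; such a sequence exists because $D_1$ is a valid structure and $\OPTZ$ is a minimum over structures of a maximum over sequences. Now run $D$ on $\sigma_1$, reaching some internal state $s$. We need to continue from $s$ with a sequence $\sigma_2$ on $U_2$ that forces at least $\OPTZ(F[U_2],m_2)$ further operations. To get it, define $D_2'$ on $F[U_2]$ as follows: it initializes $D$, silently replays $\sigma_1$ (hard-coded, with its costs not charged, or equivalently its initialization cost ignored), and then forwards the operations on $U_2$.

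This $D_2'$ is correct, and this is where convexity and the zero initialization are used essentially. Operations of $\sigma_1$ only change weights in $U_1$ and cut edges inside $F[U_1]$. They therefore affect neither the weights in $U_2$ (which stay $0$ initially) nor the connectivity among $U_2$ vertices. They do change the weights of $U_1$ vertices lying in the same $F$-tree as $U_2$ vertices, but by convexity no $U_2$-tree of $F$ contains any $U_1$ vertex, since a tree meeting both sets would have to be split by a cut between them. So $U_1$ and $U_2$ lie in different trees of $F$, and all queries on $U_2$ stay correct after $\sigma_1$. By the definition of $\OPTZ$, there is therefore a $\sigma_2$ of length $m_2$ on which $D_2'$ spends at least $\OPTZ(F[U_2],m_2)$ charged operations. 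Then $D$ spends at least the sum on $\sigma_1\sigma_2$, which is a legal sequence of length $m_1+m_2$ on $F$, and this gives the inequality.

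The main obstacle is making the cost accounting rigorous when $D$ is an arbitrary, possibly adaptive, program. We must make sure that the definition of $\Time$ counts only operations performed during the operation sequence, or consistently includes initialization, so that hard-coding the replay of $\sigma_1$ into $D_2'$'s initialization does not distort the charged count. I would handle this by measuring cost only after initialization in the definition of $\OPTZ$, or by noting that the initialization of $D_2'$ can be precomputed as a fixed memory state. That is legitimate because, with zero weights, every group value is $0$ and known in advance.
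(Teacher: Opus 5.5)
Your skeleton matches the paper's: run a worst-case $\sigma_1$ on the copy of $D$ restricted to $U_1$, continue $D$ from the resulting state to get a structure for $U_2$, and add the costs. The correctness argument for $D_2'$, however, rests on a false claim. Convexity does \emph{not} imply that $U_1$ and $U_2$ lie in different trees of $F$. It only says that each of $U_1$, $U_2$ meets every tree of $F$ in a connected (possibly empty) set.

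A tree can be split into a connected part in $U_1$ and a connected part in $U_2$. For example, take a single edge $\{a,b\}$ with $U_1=\{a\}$ and $U_2=\{b\}$. This is exactly the case the lemma exists for: in \cref{p:opt-superadd} the parts are clusters of one tree. You even used the shared-tree situation correctly in your first step. In that situation, suppose $D_2'$ replays $\sigma_1$ with its actual update values. Then the $U_1$-vertices sharing a tree with $U_2$-vertices carry nonzero weights afterwards. The edges between $U_1$ and $U_2$ can never be cut by $\sigma_1$ or $\sigma_2$. So $D$'s answer to $\alg{tree-sum}(v)$ for $v\in U_2$ includes those weights, and $D_2'$ is simply incorrect for $F[U_2]$.

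The missing idea is the paper's: after running $\sigma_1$, reset all group elements stored by $D$ and all weights of $U_1$ to zero. Equivalently, replay $\sigma_1$ with every update value equal to $0$; in the group model $D_2'$ could not hard-code anything else, since it does not know the group. You then need two facts.
\begin{enumerate}
\item A group-model structure cannot branch on group values, so the control flow, and hence the number of additions and subtractions, is unchanged. This gives $\Time(D,F,\sigma_1)+\Time(D_2',F[U_2],\sigma_2)=\Time(D,F,\sigma_1\sigma_2)$.
\item The answers of $D_2'$ are correct. With a zero replay, $D$ is in a legitimate state for $F$ with all weights outside $U_2$ equal to zero, so your first-step convexity argument applies. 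Alternatively, as in the paper, each returned value is a linear combination of weights read during $\sigma_2$ and values stored after $\sigma_1$; zeroing the latter and the $U_1$ weights leaves exactly the correct sum over $U_2$.
\end{enumerate}
Your closing remark that ``every group value is $0$'' points in this direction, but it holds only if the replay uses zero updates, which you never impose. The cost-accounting issue you flag as the main obstacle is not where the real difficulty lies.
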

	\begin{proof}
		Let $F_1 = F[U_1]$, let $F_2 = F[U_2]$, and let $m = m_1+m_2$. Let $D$ be an optimal initial-zero tree-sum data structure for $F, m$, that is, the maximum number of additions and subtractions $D$ performs when handling any $m$ operations with the initial forest $F$ and zero weights is precisely $\OPTZ(F,m)$. We now give two data structures $D_1$ and $D_2$ for the forests $F_1$ and $F_2$, respectively, and then show that they perform at most $\OPTZ(F, m)$ additions and subtractions in total when presented with $m_1$ and $m_2$ operations, respectively.
		
		First, $D_1$ is the same as $D$. In particular, $D_1$ behaves as if initialized with the original forest~$F$, although it only accepts $\alg{cut}(v)$, $\alg{update-weight}(v, \cdot)$ or $\alg{tree-sum}(v)$ if $v \in U_1$. %
		Now take an operation sequence $X_1^*$ of length $m_1$ on $F_1$ such that the number of additions and subtractions $D$ performs to execute $X_1^*$ is \emph{maximized}.
		Run $D$ with $X_1^*$, and capture the final state of the data structure. Hard-code this state into the data structure $D_2$, except set the weights of all vertices in $F_1$ and all group elements stored by the data structure to zero. When presented with an operation sequence $X_2$ on $F_2$ (one operation at the time), $D_2$ continues the execution of $D$ from that final state.\footnote{This formulation assumes that data structures have access to the zero element of the weight group. However, note that we can easily simulate additions with zero instead of actually storing zeroes.}
		
		The correctness of $D_1$ is easy to see: Suppose $X_1$ is a~sequence of $m_1$ operations on $F_1$.
		Consider a $\alg{tree-sum}(v)$ query in $X_1$ (possibly after some \alg{cut}s).
		In $D_1$, the query returns the sum of weights of some set $U \subseteq V(F_1)$.
		In $D$, it returns the sum of weights of some superset $U' \supseteq U$.
		However, since $U$ and $U'$ induce connected subgraphs of $F$, and $V(F_1)$ is convex, we have $U' \setminus U \subseteq V(F_2)$.
		Moreover, weights in $V(F_2)$ are zero at the start and never changed by $X_1$. Thus, the query returns the same result in both cases.
		
		We can show correctness of $D_2$ in a similar, but slightly more complicated way. Consider a $\alg{tree-sum}(v)$ query from $X_2$ applied to $D$, assuming $X_1^*$ and all previous queries in $X_2$ have been applied already.
		The return value $x$ of this query is computed by a series of additions and subtractions involving (previous or current) weights from $F_1$ and~$F_2$.
		By ignoring all such additions and subtractions performed during the execution of $X_1^*$, we get that $x$ is a linear combination of (1) weights from $F_1$ and $F_2$, all read during the execution of $X_2$, and (2) values stored in the data structure directly after execution of $X_1$.
		Now recall that $D_2$ sets all those stored values to zero, and also sets all weights in $F_1$ to zero. Thus, what $D_2$ returns is the reduced linear combination with only the weights from $F_2$. As above, by convexity, this is precisely the correct result.
		
		Having shown correctness, we now argue the stated running times. Let $X_1$ and $X_2$ be some operation sequences on $F_1$ and $F_2$, respectively.
		Recall that $\Time(D',F',m')$ is the maximum number of additions and subtractions that a data structure $D'$ performs when executing $m' \in \N_0$ operations with an initial forest $F'$.
		For convenience, let also $\Time(D',F',X')$ be the number of additions and subtractions that $D'$ performs when executing a specific sequence $X'$ of operations on $F'$. Note that $\Time(D',F',m')$ is the maximum of $\Time(D',F',X')$ over all sequences $X'$ of length $m'$.

		By construction, we have $\Time(D_1, F_1, X_1) = \Time(D, F, X_1)$, and $\Time(D, F, X_1^*) + \Time(D_2, F_2, X_2) = \Time(D, F, X_1^* + X_2)$, and further $\Time(D, F, X_1) \le \Time(D, F, X_1^*)$.
		These three inequalities imply that
		\[\Time(D_1,F_1,X_1)+\Time(D_2,F_2,X_2) \le \Time(D,F,X_1^*+X_2) \le \OPTZ(F,m).\]
		
		Since this is true for any pair of sequences $X_1$ and $X_2$, we have
		\[\OPTZ(F_1,m_1)+\OPTZ(F_2,m_2) \le \OPTZ(F,m),\]
		as desired.
	\end{proof}
	
	We will now generalize \cref{p:optz-superadd} to normal tree-sum data structures without the initial-zero assumption. We need a technical lemma about the relation between $\OPT$ and $\OPTZ$.
	
	\begin{restatable}{lemma}{restateRelationOptOptZ}\label{p:relation-opt-optz}
		We have $\OPT(F,m) \le \OPTZ(F,5m) + 2n$ for each forest $F$ on $n$ vertices and $m \in \N_0$. %
	\end{restatable}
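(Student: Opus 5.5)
We will construct, from an optimal initial-zero data structure $D_0$ for $(F,5m)$, a general tree-sum data structure $D$ for $F$ whose worst-case cost on $m$ operations is at most $\OPTZ(F,5m)+2n$. The idea is to treat the nonzero initial weights lazily. A vertex whose weight has been ``loaded'' into $D_0$ behaves correctly. A vertex that has not yet been loaded contributes its initial weight only through an explicit correction term, and this term must be cheap to maintain.

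The concrete plan is to precompute, at initialization, the tree sum $S[r]=w_0(T)$ of every initial tree $T$ of $F$, which costs at most $n$ additions. This gives the starting state: $D_0$ holds all zeros, and the true tree sum of each component equals $D_0$'s answer plus the initial weights of still-unloaded vertices in that component. We then keep the invariant that every current component $K$ of $F$ stores a correction value $c(K)=\sum_{v\in K\text{ unloaded}} w_0(v)$. A $\alg{tree-sum}(v)$ query answers $D_0.\alg{tree-sum}(v)+c(K)$. This is one extra addition and one $D_0$ operation. Connectivity and component identification come from \cref{p:roots} and cost no group operations.

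An $\alg{update-weight}(v,x)$ on an unloaded vertex $v$ first loads $v$. That is, it calls $D_0.\alg{update-weight}(v,x)$, marks $v$ as loaded, and subtracts $w_0(v)$ from $c(K)$. A loaded vertex simply forwards the update to $D_0$. For $\alg{cut}(v)$, we forward the cut to $D_0$ and must split $c(K)$ into $c(K_1)$ and $c(K_2)$.

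Splitting the correction on a cut is the main obstacle, since recomputing a component's correction could take linear time. To handle it, on $\alg{cut}(v)$ we also load both endpoints in a canonical way. We spend up to $O(1)$ further operations on $D_0$, for example $\alg{update-weight}$ calls that set the unloaded vertices' weights to their initial values, together with $\alg{tree-sum}$ queries on both sides. The idea is that $D_0$'s answers on the two new components let us recover the split using only $O(1)$ group operations. The cost is charged to the budget of $5$ $D_0$-operations per real operation (one cut, two loads, two queries). If this still fails, we will instead maintain the corrections with the simple $\fO(n)$-addition trick: every vertex is loaded at most once, so the total loading cost is bounded by $n$. This gives the second $n$ in the additive term $2n$.

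Finally, we verify correctness by induction over the operation sequence, using the invariant above. The cost accounting is: every real operation triggers at most $5$ operations on $D_0$, and the extra group operations total at most $2n$ plus a constant number per operation. The per-operation constant is absorbed by the fact that a $D_0$ operation can be assumed to cost at least one. This gives $\OPT(F,m)\le\OPTZ(F,5m)+2n$.
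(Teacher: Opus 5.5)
There is a genuine gap, and it sits exactly at the step you flag as the main obstacle: splitting $c(K)$ into $c(K_1)$ and $c(K_2)$ on a \alg{cut}.

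The structure $D_0$ only ever receives the weights of loaded vertices. Its \alg{tree-sum} answers on the two new components therefore carry no information about the initial weights of the other unloaded vertices, even after you load the two endpoints of the cut edge. So $c(K_1)$ is a sum over an arbitrary, possibly large set that $D_0$ has never seen, and it cannot be recovered with $\fO(1)$ group operations.

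Your fallback mixes up the two budgets. Loading a vertex is a call $D_0.\alg{update-weight}$, so loading everything costs up to $n$ \emph{operations on $D_0$}, not $n$ additions. That only gives the trivial bound $\OPT(F,m)\le\OPTZ(F,m+n)$. This suffices when $n\le 4m$, but it exceeds the budget of $5m$ operations on $D_0$ in exactly the interesting regime $m<n/4$. Recomputing a correction by brute force over the smaller side fails differently: over $m$ cuts it can cost $\Theta(n\log m)$ additions (repeatedly halving a path), not $\fO(n)$.

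Finally, you cannot absorb a constant number of extra additions per operation into $\OPTZ(F,5m)$ on the grounds that a $D_0$ operation ``costs at least one''. $\OPTZ$ counts only additions and subtractions, a $D_0$ operation may perform none, and the claimed inequality has no multiplicative slack.

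The missing idea is to load the initial weights into $D_0$ in \emph{aggregated} form, as $\fO(m)$ chunks that are cheap to split. For $m\le n/4$, the paper takes the partition of \cref{p:alt-decomp} with $k=\lceil n/m\rceil$.
Parts that form an entire current component of size less than $k$ are handled outside $D_0$: each query on them is answered by brute force with fewer than $k$ additions. All other parts, initially at most $2n/k\le 2m$ of them, have their weight sum stored at a fixed non-auxiliary representative in $D_0$, using fewer than $n$ additions in total.
By the structure of the partition, a cut inside a part (also after repeated cuts) always leaves one side with fewer than $k$ vertices. Its sum costs at most $k-2$ additions, the other side's sum costs one subtraction, and then you issue two \alg{update-weight}s and the \alg{cut} in $D_0$.
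This gives at most $2n/k+3m\le 5m$ operations on $D_0$ and at most $n+m(k-1)\le 2n$ extra group operations. That is exactly where the constants $5$ and $2$ in the statement come from.
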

	
	The proof of \cref{p:relation-opt-optz} requires a forest partition similar to a cluster decomposition (see \cref{p:cluster-decomp}). We cannot use the latter, since it requires binary trees, and we need the lemma to work for arbitrary trees.
	Fortunately, not all properties of cluster decompositions are needed, so we can use the following simplification.
	
	\newcommand{\fP}{\mathcal{P}}
	
	\begin{lemma}\label{p:alt-decomp}
		Let $F$ be a (not necessarily binary) forest and let $k \in \N_+$. Then there exists a partition $\fP$ of $V(F)$ such that each $P \in \fP$ induces a (connected) subtree of $F$, and, for each $P \in \fP$,
		\begin{enumerate}[nosep, label = (\alph*)]
			\item either the root of $F[P]$ is a root of $F$ and $|P| < k$,
			\item or $|P| \ge k$ and removing the root of $F[P]$ splits $F[P]$ into subtrees of size less than $k$.
		\end{enumerate}
	\end{lemma}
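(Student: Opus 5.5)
We construct $\fP$ greedily, bottom-up, in the style of $\alg{construct}$ from the proof of \cref{p:cluster-decomp}. The difference is that we do not need binarity, since we never have to control lower boundary vertices. It suffices to handle each tree of $F$ separately, so assume $F$ is a single tree with root $r$.

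We process the vertices in post-order and maintain, for each processed vertex $v$, a set $A(v)$ of vertices that are still unassigned. This set $A(v)$ will always induce a connected subtree of $F_v$ rooted at $v$ and satisfy $|A(v)| < k$. When processing $v$, let $c_1,\dots,c_d$ be its children, which are already processed, and form the candidate set
\[ P_v = \{v\} \cup A(c_1) \cup \dots \cup A(c_d). \]
It is connected with root $v$, since each $A(c_i)$ is connected and rooted at the child $c_i$ of $v$.

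If $|P_v| \ge k$, we output $P_v$ as a part of $\fP$ and set $A(v) = \emptyset$. Removing $v$ from $F[P_v]$ leaves exactly the subtrees $F[A(c_i)]$ with $A(c_i) \neq \emptyset$, and each of these has size less than $k$ by the invariant. Hence $P_v$ satisfies condition (b). If instead $|P_v| < k$, we set $A(v) = P_v$, which preserves the invariant. One point needs care: if $A(c_i) = \emptyset$, the child $c_i$ contributes nothing, and $P_v$ remains connected because it only contains $v$ together with nonempty subtrees hanging from children of $v$.

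At the root $r$, if $A(r)$ is nonempty after processing, we output it as a final part. Its root is $r$, a root of $F$, and $|A(r)| < k$, so condition (a) holds.

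It remains to check that the output is a partition. Every vertex enters exactly one candidate set $P_v$, namely at its own processing step. It then travels upward through the sets $A(\cdot)$ until it is output, either in some part $P_{v'}$ with $v'$ an ancestor of $v$, or in the leftover part at the root. So each vertex lies in exactly one part, and each part is connected as argued above. There is no real obstacle here; the only subtle point is the empty-child case, which the invariant $A(c) \in \{\emptyset\} \cup \{\text{connected subtrees rooted at } c\}$ handles uniformly. The construction runs in linear time, although the lemma only asks for existence.
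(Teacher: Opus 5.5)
Your proof is correct and follows the paper's argument: the paper's recursive \texttt{construct}$'(v)$ also merges $v$ with all returned child parts of size less than $k$ (destroying them), leaves parts of size at least $k$ as finished, and keeps whatever part the root call returns. Your post-order procedure is that recursion, with $A(v)=\emptyset$ recording that $v$'s part was finished, and you spell out why conditions (a) and (b) and the partition property hold, which the paper only calls ``easy to see''.
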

	\begin{proof}
		We can use a simplified version of the algorithm in \cref{p:cluster-decomp}: Essentially, grow sets $P$ bottom-up, and finish a set when it grows to at size $k$ or larger.
		
		Analogously to \cref{p:cluster-decomp}, we define a recursive algorithm $\alg{construct}'(v)$ that will be applied to every root $v$ of $F$ and returns the part $P \in \fP$ containing $v$.
		It works as follows. If $v$ has children, apply $\alg{construct}'$ to each of them. Let $\fC$ be the set of returned parts that have size less than $k$. Destroy these parts and return $P = \{v\} \cup \bigcup \fC$.
		In the end, let $\fP$ be the set of all parts that were created, but not destroyed.
		
		It is easy to see that every vertex is contained in some part in $\fP$, and that each part satisfies one of the two requirements of the lemma.
	\end{proof}

	\begin{proof}[Proof of \cref{p:relation-opt-optz}]
		First, observe that an initial-zero tree-sum data structure can simulate a (normal) tree-sum data structure by using $\alg{update-weight}$ at the start to set the weight of each non-auxiliary vertex (recall that auxiliary vertices have weight zero). This requires $n$ operations, so we have $\OPT(F,m) \le \OPTZ(F, m+n)$. If $n \le 4m$, we are done.
		
		Now suppose that conversely $m \le \tfrac14 n$. Take an initial-zero tree-sum data structure $D_0$ that optimally handles $5m$ operations on $F$. We will present a tree-sum data structure $D$ for $F$ that handles $m$ operations with at most $\fO(n)$ additional operations.
		
		Given is $F$ and an initial weight function $w$ on $F$.
		Our data structure $D$ first computes a decomposition $\fP$ as defined in \cref{p:alt-decomp} with $k = \lceil \tfrac n m \rceil$.
		Note that we do not care about running time, only number of additions and subtractions (of which we have done none so far).

		Call a part \emph{special} if it induces a component of the current forest and has size smaller than~$k$. Note that each \alg{cut} splits some part into two new parts. If the original part was special, the two new parts must also both be special. Otherwise, up to one new part may be special.
		
		We claim that initially there are at most $2 \tfrac n k$ non-special parts. Indeed, there clearly are at most $\tfrac n k$ parts of size at least~$k$.
		Further, each component that is not induced by a special part contains at most one part of size less than $k$ (the one containing the root), and at least one other part (which must have size at least $k$ by \cref{p:alt-decomp}). Thus, the number of such small non-special parts is no more than $\tfrac n k$, so overall we have no more than $2 \tfrac n k$ non-special parts.
		
		For now, let us assume that $F$ contains no auxiliary vertices.
		Initialization works as follows.
		For each non-special part $P$, we compute the sum of its weights~$x$. Then, we call $D_0.\alg{update-weight}(r, x)$, where $r$ is the root of $F[P]$. This requires fewer than $n$ additions and subtractions, and at most $2 \tfrac n k$ \alg{update-weight} operations on~$D_0$. Observe that at this point of time, the sum of weights of each non-special component is correct in $D_0$; we will maintain this invariant throughout.
		
		Now consider an operation $\alg{tree-sum}(v)$. If $v$ is contained in a special part, we compute the sum of the respective special component by brute force, consulting the weight function $w$. Since special components have size smaller than $k$, this takes fewer than $k-1$ operations.
		If $v$ is contained in a non-special part, we delegate the query to $D_0$.
		The result will be correct by our invariant.
		
		Second, consider an operation $\alg{update-weight}(v,x)$. We update $w$ accordingly. If $v$ is contained in a non-special part $P$, we call $D_0.\alg{update-weight}(r,y)$, where $r$ is the root of $F[P]$, and $y$ is the old weight of $r$ in $D_0$, plus the difference between $x$ and the previous weight $w(v)$. This requires up to one addition and one subtraction, and maintains the invariant.
		
		Finally, consider an operation $\alg{cut}(v)$. If $v$ and its parent $u$ are contained in the same non-special part $P$, we split $P$ into the two parts $P_1$ and $P_2$ that stay connected after removing the edge $\{u,v\}$.
		One or both of the new parts may be non-special, so we need to store their weight sums in their respective roots.
		By \cref{p:alt-decomp}, when splitting any part in this way (even repeatedly), at least one of the newly created parts has size at most $k-1$. Thus, we can compute the weight sum of one of $P_1$ and $P_2$ with at most $k-2$ additions, and compute the weight sum of the other part with a single subtraction.
		Then, we update the root weights in $D_0$ with two \alg{update-weight} operations, and finally perform $\alg{cut}(v)$ in~$D_0$, for a total of three operations in $D_0$.
		
		For initialization and $m$ operations, our data structure performs at most $2 \tfrac n k + 3m \le 5m$ operations in $D_0$, and additionally performs at most $n + m \cdot (k-1) \le 2n$ additions and subtractions, as desired.
		
		It remains to show how to support auxiliary vertices in $F$. The algorithm described above may illegally access an auxiliary vertex when performing $D_0.\alg{update-weight}(r,\cdot)$ on the root $r$ of a part; this can happen during initialization or as the result of an \alg{update-weight} or \alg{cut} operation.
		To avoid this, we define the \emph{representative} of a part as an arbitrary, but fixed non-auxiliary vertex. Assuming such a vertex exists, we can simply use the representative instead of the root. If no such vertex exists, all vertices in the given part have weight zero by definition. In that case, we can skip the call $D_0.\alg{update-weight}(r,x)$ during initialization ($x = 0$ anyway). Moreover, no \alg{update-weight} or \alg{cut} operation can be applied to any vertex in that part, so no $D_0.\alg{update-weight}$ will be ever necessary.
	\end{proof}

	We need one more property of $\OPT$, namely that several instances of a~data structure processing $m$ operations on a~forest $F$ can be combined into a single instance processing significantly more than $m$ operations on $F$.

	\begin{lemma}\label{p:opt-num-ops-nonbinary}
		Let $F$ be a forest and $m \in \N_+$. Then $\OPT(F,\left\lceil\tfrac32 m \right\rceil) \le 3 \cdot \OPT(F,m)$.
	\end{lemma}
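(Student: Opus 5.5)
The idea is to simulate one run of $\lceil \tfrac32 m\rceil$ operations by at most three independent copies of an optimal data structure $D$ for $(F,m)$, each of which is shown at most $m$ legal operations on the initial forest $F$. The cost is then at most $3\,\Time(D,F,m) = 3\,\OPT(F,m)$. Only additions and subtractions count, so any bookkeeping outside the copies that uses no group operations is free. Group operations outside the copies must either be avoided or charged to the copies. I have not checked that the copies' operation counts close; this is discussed at the end.

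First I split the sequence into three consecutive segments $S_1,S_2,S_3$, each of length at most $\lceil m/2\rceil$. Copy $D^1$ is initialized with $(F,w)$ and processes $S_1$ and $S_2$ verbatim. That is at most $m$ operations, so $D^1$ answers every query in $S_1\cup S_2$ correctly. The other copies handle $S_3$.

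Second, the copy answering queries in $S_3$ must know the forest and weights as they stand after $S_2$. Here I use that tree sums are linear in the weights and that every copy sees the same component structure. Write $w_3 = w_0 + \delta_1 + \delta_2 + \delta_3$, where $w_0$ is the initial weight function and $\delta_j$ is the net change of weights during $S_j$. Each $\delta_j$ is supported on at most $|S_j|$ vertices and is realized by at most $|S_j|$ \alg{update-weight} operations. I then answer a query in $S_3$ as the sum of the answers of two copies:
\begin{itemize}[nosep]
\item $D^2$, initialized with $w_0$, receiving the non-query operations of $S_1$ and then all of $S_3$;
\item $D^3$, initialized with all weights zero (a legal initial weight function), receiving the updates of $S_2$ rewritten as updates to $\delta_2$.
\end{itemize}
This costs one extra addition per query in $S_3$, which is $\fO(m)$ in total. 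That is absorbed because $\OPT(F,m)\ge$ (number of queries answered) up to the trivial cases, or it can be charged directly. Writing the updates as deltas needs one subtraction per update, and these can be charged the same way.

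The main obstacle is the \alg{cut} operations. Every copy that answers queries must see every cut up to that point, since otherwise its components are wrong and the linearity trick fails. So $D^2$ must also receive the cuts of $S_2$, and $D^3$ those of $S_1$ and $S_3$. This can push a copy beyond $m$ operations when the segments are cut-heavy. My first attempt will be to choose the segment boundaries adaptively rather than by length. Each copy is restarted or retired exactly when its own counter (cuts seen plus its own updates and queries) reaches $m$. I would then argue by a counting argument that three copies suffice for $\lceil\tfrac32 m\rceil$ total operations. The argument would use that a cut shown to a retired copy never needs to be shown again, and that weight updates are split disjointly between the copies while only cuts are shared.

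If this counting fails, the fallback is to exploit \cref{p:relation-opt-optz}-style slack. Copies would be reinitialized on $F$ with current weights, and past cuts replayed, with the replayed prefix bounded by the half-block length. This would accept a slightly worse constant, which I would then tune back to $3$.
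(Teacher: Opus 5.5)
There is a genuine gap. It is the one you flag and leave open, and it cannot be closed within your framework. You split the weights by \emph{time}: $D^2$ holds $w_0+\delta_1$ and $D^3$ holds $\delta_2$. So $D^2$ carries arbitrary weights on all vertices, and it answers correctly only if it has applied every cut made so far. Take $F$ to be a star with $m$ leaves rooted at its center. Let the first $m$ operations cut all leaves, followed by $\lceil m/2\rceil$ queries at the center. Then $D^1$ has exhausted its budget. Any copy that carries weights on the whole forest must receive all $m$ cuts plus the queries, which is more than the $m$ operations covered by $\OPT(F,m)$. Restarting or retiring copies adaptively does not change this, since a fresh copy starts from $F$ and must replay all cuts. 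Your fallback claims the replay is bounded by half a block, but that is exactly what fails for a copy holding all current weights.

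Independently, the lemma has the exact constant $3$ and no additive term, while $\OPT(F,m)$ can be $0$ (e.g.\ if $F$ has no edges, every query just returns $W[v]$). So the extra addition per query in $S_3$ and the subtraction per update cannot be absorbed. Two smaller slips: with all segments of length at most $\lceil m/2\rceil$, $|S_1|+|S_2|$ can be $m+1$ for odd $m$; and forwarding the updates of $S_3$ verbatim to $D^2$ while $D^3$ still stores $\delta_2$ double counts.

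The missing idea, which is the paper's, is to split the weights by \emph{space}, i.e.\ by components, rather than by time. One copy $I$ processes the first $m$ operations verbatim. Let $E$, with $|E|\le m$, be the edges cut so far and $\mathcal{C}$ the components of $F-E$. In each tree of $F$ containing $k$ of them, distribute them evenly into two families $\mathcal{C}_1,\mathcal{C}_2$. Separating the components of one family from each other then needs at most $\lceil k/2\rceil-1\le\frac{k-1}{2}$ cuts per tree, so at most $\lfloor m/2\rfloor$ in total.

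Start two fresh copies $I_1,I_2$ on $F$. Copy $I_i$ gets the \emph{current} weights on the vertices of $\mathcal{C}_i$ and zero elsewhere, and applies these separating cuts. Afterwards, forward each cut to both copies, each update to the copy owning the vertex's component, and each query only to the owning copy. This is correct because every tree of $I_i$ contains at most one current component coming from $\mathcal{C}_i$, and all its other vertices have weight zero in $I_i$. Each of $I_1,I_2$ sees at most $\lfloor m/2\rfloor+\lceil m/2\rceil=m$ operations, and no group operation happens outside the three copies, which gives exactly $3\OPT(F,m)$.

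The paper's text instead says to query both $I_1$ and $I_2$ and add the results. That double counts: for a path $a$--$b$--$c$ cut into singletons with $\mathcal{C}_1=\{\{a\},\{c\}\}$, $b$ stays attached to $a$ or $c$ in $I_1$. It would also cost one addition per query, so route each query to its owner.
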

	\begin{proof}
		Suppose we have an optimal data structure $D_m$ for $m$ operations on $F$.
		We handle a sequence $X$ of at most $\left\lceil\tfrac32 m\right\rceil$ operations as follows.
		Initialize an instance $I$ of $D_m$ with the initial forest and weight assignments, and perform the first $m$ operations of $X$ on $I$.
		Let $(F',w')$ be the weighted forest obtained after performing the first $m$ operations.
		Let $E$ be the set of edges of $F$ that were cut by the first $m$ operations of $X$, so $F' = F - E$.
		Let also $\mathcal{C}$ be the set of connected components of $F'$.
		For any subset $\mathcal{C}' \subseteq \mathcal{C}$, define the \emph{cost} of $\mathcal{C}'$, denoted $c(\mathcal{C}')$, as the number of edges in $E$ that need to be cut to separate all components in $\mathcal{C}'$ from each other.
		Naturally, $c(\mathcal{C}) = |E| \leq m$.
		In general, $c(\mathcal{C}')$ equals the sum, over all trees $T$ of $F$ containing at least one component from $\mathcal{C}'$, of the number of components of $\mathcal{C}'$ contained in $T$ minus one.

		We will now prove that $\mathcal{C}$ can be partitioned into two families $\mathcal{C}_1, \mathcal{C}_2$ such that $c(\mathcal{C}_1), c(\mathcal{C}_2) \le \left\lfloor \tfrac12 m \right\rfloor$.
		Indeed, consider any tree $T$ of $F$ containing $k$ components from $\mathcal{C}$, say $A_1, \ldots, A_k$.
		The contribution of $T$ to $c(\mathcal{C})$ is $k-1$.
		Assigning the components $A_1, \ldots, A_k$ to $\mathcal{C}_1$ and $\mathcal{C}_2$ evenly ensures that each $\mathcal{C}_i$ receives at most $\left\lceil \tfrac{k}{2} \right\rceil$ components from $T$, and thus the contribution of $T$ to $c(\mathcal{C}_i)$ is at most $\left\lceil \tfrac{k}{2} \right\rceil - 1 \le \tfrac{k-1}{2}$.
		Therefore, $c(\mathcal{C}_1), c(\mathcal{C}_2) \le \tfrac12 c(\mathcal{C}) \le \tfrac12 m$, so in particular $c(\mathcal{C}_1), c(\mathcal{C}_2) \le \left\lfloor \tfrac12 m \right\rfloor$.

		We now initialize two more instances $I_1$ and $I_2$ of $D_m$. The instance $I_i$ will receive the initial forest $(F,w'_i)$ with the following initial weights:
		\[
			w'_i(v) = \begin{cases}
				w'(v) & \text{if $v$ is contained in a component from $\mathcal{C}_i$,}\\
				0 & \text{otherwise.}
			\end{cases}
		\]
		Then, the instance $I_i$ immediately receives a~sequence of \alg{cut}s separating all components in $\mathcal{C}_i$ from each other; by our previous discussion, this requires at most $\left\lfloor \tfrac12 m \right\rfloor$ operations in $I_i$.
		At this point, observe that the \alg{tree-sum} of any vertex $v$ in $F'$ can be computed by adding the results of $\alg{tree-sum}(v)$ in both $I_1$ and $I_2$.
		Now perform the remaining $\left\lceil \tfrac12 m \right\rceil$ operations from $X$ as follows:
		\begin{itemize}
			\item \alg{update-weight} operations are delegated to the instance $I_i$ responsible for the component containing the affected vertex;
			\item \alg{cut} operations are forwarded to both $I_1$ and $I_2$; and
			\item \alg{tree-sum} operations are performed by querying both $I_1$ and $I_2$ and returning the sum of the results.
		\end{itemize}
		Clearly, this is correct, and each instance $I_i$ performs in total at most $\left\lfloor \tfrac12m \right\rfloor + \left\lceil \tfrac12m \right\rceil = m$ operations.
		This yields the required bound.
	\end{proof}
	
	Applying \cref{p:opt-num-ops-nonbinary} a constant number of times yields $\OPT(F, c \cdot m) \le \fO( \OPT(F,m) )$ for every fixed constant $c \ge 1$. With this, we are finally ready to prove:
	
	\restateOptSuperadd*
	\begin{proof}
		We have
		\begin{align*}
			\sum_{i=1}^k \OPT(T_i, 3m_i) & \le \sum_{i=1}^k \OPTZ(T_i,15m_i) + \fO(n) & \text{(\cref{p:relation-opt-optz})}\\
			& \le \OPTZ(F,15m) + \fO(n) & \text{(\cref{p:optz-superadd})}\\
			& \le \OPT(F,15m) + \fO(n) & \text{(obvious)}\\
			& \le \fO(\OPT(F,m) + n). & \text{(\cref{p:opt-num-ops-nonbinary})}
		\end{align*}
	\end{proof}
	Observe that, unsurprisingly, the constant $3$ in \cref{p:opt-superadd} can be replaced by an arbitrary constant.
	We believe that the function $\OPT$ is properly super-additive.
	That is, we make the following conjecture:
	\begin{conjecture}
		Let $F$ be a forest, $U_1, U_2$ be a partition of $V(F)$ into convex subsets, and let $m_1, m_2 \in \N_0$.
		Then $\OPT(F[U_1],m_1) + \OPT(F[U_2], m_2) \le \OPT(F, m_1+m_2)$.
	\end{conjecture}
	
	\section{Hardness of weighted subtree-sum}
	\label{sec:sumtree-sum-weights-with-cut-lb}
	
	In this section we prove \cref{p:sumtree-sum-weights-with-cut-lb}, restated here for convenience.
	\restateSubtreeSumHard*

	We remark that the lower bound holds even in the more general \emph{cell-probe} model of computation, where the time complexity of an~operation is only measured in the number of memory accesses (probes) it makes, regardless of the computation performed between probes.
	Moreover, we allow for Las Vegas randomization. Auxiliary vertices are not allowed in this section.
	
	In our hardness proof, we invoke the P{\u{a}}tra{\c{s}}cu and Demaine's lower bound for the \alg{partial-sum} problem~\cite{PatrascuDemaine2006}.
	Recall that in this problem, we maintain an~array $A$ of length $n$ with values from $0$ to $U - 1$ for some $U \geq n^{\Omega(1)}$, supporting operations of the form $\alg{update}(i, x)$ (set $A[i] \gets x$) and $\alg{partial-sum}(p)$ (return $\sum_{i=1}^p A[i]$).
	We will say that a~sequence of operations performed on $A$ is \emph{epoch-based} if all operations can be partitioned into some $E$ epochs, such that within each epoch, each element of $A$ is \alg{update}d exactly once and the sum of each prefix is queried exactly once.
	We will use the following version of the lower bound:
	\begin{restatable}{theorem}{restateAdaptedPatrascuDemaine}
		\label{t:adapted-prefix-sum-lb}
		Consider any cell-probe data structure for the \alg{partial-sum} problem on an~array $A$ of length $n$ and values from $0$ to $U \geq n^{\Omega(1)}$ that may use Las Vegas randomization.
		Let $E \leq n^{\fO(1)}$.
		The data structure requires $\Omega(En \log n)$ time to process $\Theta(En)$ operations, even if we assume that the sequence of operations is epoch-based and comprises $E$ epochs.
	\end{restatable}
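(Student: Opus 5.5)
We reduce \cref{t:adapted-prefix-sum-lb} to the original lower bound of P{\u{a}}tra{\c{s}}cu and Demaine. Their hard distribution is already essentially epoch-based: the hard instance interleaves random updates and queries, and the lower bound is proven via an information-transfer argument over a balanced binary tree on the time axis. So the first step is to recall the structure of their proof and check which properties of the operation sequence are actually used.

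The argument fixes a sequence of operation \emph{indices}: a permutation-like sequence, e.g.\ the bit-reversal permutation, so that for every node of the time tree, queries in the right half are interleaved with updates in the left half. Only the \emph{values} are random: uniform in $[0,U)$.

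Concretely, I plan the following steps.
\begin{enumerate}[nosep]
\item I take a sequence of $2n$ operations in which position $t$ performs $\alg{update}(\pi(t), x_t)$ followed by $\alg{partial-sum}(\pi(t))$. Here $\pi$ is the bit-reversal permutation of $[n]$. This forms one epoch, since each index is updated once and each prefix is queried once.
\item I concatenate $E$ such epochs with independent uniform values. The information-transfer lower bound, summed over the nodes of the time tree, gives $\Omega(n\log n)$ per epoch. This works because the bound for each node only depends on operations inside that node's interval.
\item Taking the time tree over the whole sequence of length $\Theta(En)$, the nodes at depth where intervals lie within a single epoch already contribute $\Omega(\log n)$ levels. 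Each level yields $\Omega(En)$ expected cell probes, with entropy $\Theta(\log U)=\Theta(\log n)$ bits per query handled as in the original. The total is $\Omega(En\log n)$.
\end{enumerate}
Las Vegas randomization is handled as in the original: expected running time and the lower bound on the expected number of probes via fixing random coins (Yao's principle). The requirement $E\le n^{\fO(1)}$ ensures that the word size stays $\Theta(\log n)$ and that the ratio $\log(En)/\log n$ is constant.

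The main obstacle is step 2/3: verifying that the P{\u{a}}tra{\c{s}}cu--Demaine lemma bounding the information transfer at a node holds when the left half contains several updates to the same index across epochs. Later updates overwrite earlier ones, so the entropy of the right-half queries conditioned on everything outside must be recomputed. My approach is to restrict to nodes whose interval lies within one epoch, where indices are distinct and the original computation applies verbatim. Those nodes already account for the $\log n$ factor.
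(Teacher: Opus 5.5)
Your route is correct, but it differs from the paper's.

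The paper never reopens the information-transfer argument. It takes P{\u{a}}tra{\c{s}}cu's single-epoch bound (\cref{t:orig-patrascu-lb}), which holds for an arbitrary permutation, as a black box. It then builds a single-epoch instance on an array of length $n' = En$, cut into $n$ blocks of length $E$ and governed by $\pi'((p-1)n+q) = (\pi(q)-1)E+p$ for a random $\pi$. The total interleaving of $\pi'$ is $E$ times that of $\pi$, hence $\Omega(n'\log n')$. This instance is simulated by the epoch-based structure on block sums, using $c$ base-$U$ digit copies because block sums can reach $(U-1)E$, plus trivial per-block prefix-sum structures.

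You instead rerun the P{\u{a}}tra{\c{s}}cu--Demaine argument directly on the concatenation of $E$ bit-reversal epochs. This works, and the reasons are worth stating explicitly:
\begin{itemize}
\item The encoding at a node conditions on all updates outside its left interval $L$, so earlier epochs are just part of the state the decoder reconstructs by simulation.
\item Updates overwrite, and the indices updated in $L$ are distinct within one epoch, so the conditional entropy of the answers in the right interval is exactly the single-epoch quantity.
\item Each read is charged only to the lowest common ancestor of its time and the time of the last write to that cell, so the contributions of different epochs add up within one global time tree.
\end{itemize}
The issue you flag for nodes straddling epochs is actually harmless: one would count only the last $L$-update to each index. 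Still, discarding those nodes is simplest.

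As for what each route buys: yours avoids the block and base-$U$ machinery, and needs $E \le n^{\fO(1)}$ essentially only to fix the word size. The paper's route avoids re-verifying the encoding argument, including the Las Vegas handling and the $\fO(1)$ additive loss per node (which here totals $\fO(En)$ and is dominated).

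One small detail: the bit-reversal permutation needs $n$ to be a power of two, and the time tree needs subtrees aligned with epochs. You can either place the operations on the remaining indices at the end of each epoch and ignore them in the analysis, or glue $E$ per-epoch trees under a common top, as the paper does for $\mathcal{T}'$.
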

	We remark that \cref{t:adapted-prefix-sum-lb} is not explicitly stated in \cite{PatrascuDemaine2006}, but it can be derived from their techniques; we provide a proof in \cref{sec:adapted-prefix-sum-lb}.
	
	Towards \cref{p:sumtree-sum-weights-with-cut-lb}, assume we have a~data structure $X$ supporting \alg{subtree-sum} and \alg{cut}.
	Using $X$, we will construct a data structure $Y$ for \alg{partial-sum} from \cref{t:adapted-prefix-sum-lb}, as follows.
	Suppose $Y$ is initialized with an~array $A$ of length $n'$ containing integers from $0$ to $n' - 1$.
	Let $n = n'(9n' + 2)$.
	We construct a weighted forest $F$ comprising two trees: $T_+$ on $n_+ = n'(8n' + 1)$ vertices and $T_-$ on $n_- = n'(n' + 1)$ vertices.
	$T_+$ contains a~path $v^+_1 v^+_2 \ldots v^+_{n'}$ of $n'$ vertices, where each $v^+_i$ has weight $0$.
	For each $i \in [n']$, we create $8n'$ additional vertices, denoted $u_{i,1}, u_{i,2}, \ldots, u_{i,8n'}$, where $u_{i,j}$ has weight $j$.
	We connect each $u_{i,j}$ and $v^+_i$ with an edge.
	$T_-$ is constructed similarly, with a~path $v^-_1 v^-_2 \ldots v^-_{n'}$ of $n'$ vertices of weight $0$, and for each $i \in [n']$, $n'$ additional vertices $w_{i,1}, w_{i,2}, \ldots, w_{i,n'}$ of weight $1$ connected to $v^-_i$.
	We root $T_+$ at $v^+_{n'}$ and $T_-$ at $v^-_{n'}$, so that $\alg{subtree-sum}(v^+_i)$ (respectively, $\alg{subtree-sum}(v^-_i)$) returns the sum of weights of all vertices $u_{i',j}$ (resp., $w_{i',j}$) with $i' \le i$.
	We remark that $F$ contains no auxiliary vertices.
	The data structure $X$ is initialized with $F$ and the given weights.
	
	We will now show how to use $X$ to process an~epoch-based sequence of operations on $A$ comprising $E = n'$ epochs, with a total of $\Theta((n')^2) = \Theta(n)$ operations.
	Suppose an~\alg{update} is performed on $A$ during the $p$th epoch, replacing $A[i]$ with a~new value from range $[0, n' - 1]$ and thus increasing $A[i]$ by some $\Delta \in [-n' + 1, n' - 1]$.
	Then $X$ performs three \alg{cut}s: it cuts $u_{i,j_1}$ and $u_{i,j_2}$ from $T_+$, where $j_1$ and $j_2$ are chosen such that $j_1 + j_2 = 7n' - \Delta$ and both $u_{i,j_1}$ and $u_{i,j_2}$ are still connected to $v^+_i$, and it cuts $w_{i,p}$ from $T_-$.
	Later we will show that such $j_1$ and $j_2$ always exist and can be found by a~randomized algorithm in expected constant time.
	Observe that after these cuts, the value of the expression
	\[
	\alg{subtree-sum}(v^+_k) - 7n' \cdot \alg{subtree-sum}(v^-_k)
	\]
	remains unchanged for $k < i$, and increases by $\Delta$ for $k \ge i$.
	Thus, given suitable preprocessing, we can compute any partial sum of $A$ in constant time using two \alg{subtree-sum} queries -- one in each of $T_+$ and $T_-$.
	The details follow.
	
	At the time of the initialization, we construct a table $B$ of size $n'$, so that
	\begin{equation}
		\sum_{i=1}^k A[i] = \alg{subtree-sum}(v^+_k) - 7n' \cdot \alg{subtree-sum}(v^-_k) + B[k]
		\quad \text{for all } k \in [n'].
		\label{eq:prefix-sum-via-subtree-sum}
	\end{equation}
	Note that initially, $\alg{subtree-sum}(v^+_k) = 4kn'(8n' + 1)$ and $\alg{subtree-sum}(v^-_k) = k n'$ for all $k \in [n']$, so $B$ can be computed in $\fO(n')$ time.
	The invariant \eqref{eq:prefix-sum-via-subtree-sum} will be maintained throughout the execution of the data structure.
	We will also preserve the following invariant: after $p$ \alg{update}s to the value of $A[i]$, exactly $2p$ vertices $u_{i,j}$ have been cut from $T_+$, and exactly vertices $w_{i,p'}$ for $p' \le p$ have been cut from $T_-$.
	
	Consider now an operation during the $p$th epoch.
	A~\alg{partial-sum} query for some $k \in [n']$ can be answered by computing the right-hand side of \eqref{eq:prefix-sum-via-subtree-sum}, which requires two \alg{subtree-sum} queries and a constant amount of additional work.
	Now consider an~\alg{update} changing $A[i]$ by $\Delta \in [-n' + 1, n' - 1]$, and define $s = 7n' - \Delta \in [6n' + 1, 8n' - 1]$.
	
	\begin{lemma}
		There exist $j_1, j_2 \in [8n']$ such that $j_1 + j_2 = s$ and both $u_{i,j_1}$ and $u_{i,j_2}$ are still connected to $v^+_i$.
		Furthermore, such $j_1$ and $j_2$ can be found by a~Las Vegas randomized algorithm in expected constant time.
	\end{lemma}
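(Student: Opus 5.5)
We argue by counting pairs. Let $S_i \subseteq [8n']$ be the set of indices $j$ such that $u_{i,j}$ is still attached to $v^+_i$ at the time of the update. Since we are in epoch $p \le n'$ and, by the invariant, exactly $2(p-1)$ vertices $u_{i,\cdot}$ have been cut before this update of $A[i]$, we have $|[8n'] \setminus S_i| \le 2n' - 2 < 2n'$. Consider the pairs $\{j, s-j\}$ for $j$ ranging over $[1, s-1]$ with $j < s - j$; both elements lie in $[1, s-1] \subseteq [8n']$ because $s \le 8n'-1$. Since $s \ge 6n'+1$, there are $\lfloor (s-1)/2 \rfloor \ge 3n'$ such disjoint pairs with distinct elements. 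Each removed index destroys at most one pair, so at least $3n' - 2n' = n'$ pairs survive entirely within $S_i$. This proves existence, and in fact shows that a constant fraction (at least $1/3$) of the candidate pairs are good.

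For the algorithm, we maintain for each $i$ a boolean array $\mathrm{alive}_i[1..8n']$; it is initialized in $\fO(n)$ total time and updated in $\fO(1)$ time per cut. To handle an update, we repeatedly sample $j$ uniformly from $\{1, \dots, \lfloor (s-1)/2 \rfloor\}$ and check whether $\mathrm{alive}_i[j]$ and $\mathrm{alive}_i[s-j]$ both hold. Each trial succeeds with probability at least $n'/(3n') = 1/3$, by the counting bound above (each trial considers at most $\lfloor (s-1)/2\rfloor \le 4n'$ pairs, of which at least $n'$ are good, so the success probability is even at least $1/4$ in the worst case). The number of trials is therefore geometric, the expected time is constant, and the output is always correct, so the algorithm is Las Vegas.

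The only delicate step is the counting. First, we must confirm that $j_1 \ne j_2$, which holds because we restrict to $j < s-j$; equality would require two cuts of the same vertex. Second, the cut budget must be strictly smaller than the number of disjoint pairs. This is exactly why the leaf range $8n'$ and the offset $7n'$ were chosen: over all $n'$ epochs, the total number of cuts, $2n'$, stays below the at least $3n'$ available pairs.
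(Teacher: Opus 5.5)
Your proof is correct and follows essentially the paper's argument. The paper likewise counts disjoint pairs $(j, s-j)$ (restricting to $j \in [3n']$, where you take all $\lfloor (s-1)/2 \rfloor$ pairs with $j < s-j$), uses the invariant that fewer than $2n'$ leaves of $v^+_i$ have been cut to leave at least $n'$ intact pairs, and finds one by rejection sampling.

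One cosmetic point on the sampling step. The per-trial success probability is best stated as $1 - (2n'-2)/\lfloor (s-1)/2 \rfloor \ge 1/3$. Your ``$n'/(3n')$'' reasoning only covers the case of exactly $3n'$ candidates, and your parenthetical $1/4$ is a weaker bound, not a stronger one. Either constant suffices for expected constant time.
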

	\begin{proof}
		Consider the $3n'$ pairs of integers $(j, s - j)$ for $j \in [3n']$.
		Since $s \in [6n' + 1, 8n' - 1]$, all $6n'$ integers in these pairs are distinct and belong to $[8n']$.
		Since $p < n'$, by the invariant at most $2(p-1) < 2n'$ of these integers correspond to vertices $u_{i,j}$ that have already been cut from $T_+$.
		Therefore, at least $n'$ pairs $(j, s - j)$ remain such that both $u_{i,j}$ and $u_{i,s - j}$ are still connected to $v^+_i$.
		Thus, the sought pair $(j_1, j_2)$ exists, and it can be found by randomly sampling pairs as above until a valid one is found, which takes expected constant time.
	\end{proof}
	
	After finding such $j_1$ and $j_2$, $X$ performs $\alg{cut}(u_{i,j_1})$, $\alg{cut}(u_{i,j_2})$, and $\alg{cut}(w_{i,p})$.
	This preserves all the invariants, as argued above, completing the description of the data structure $Y$.
	However, by \cref{t:adapted-prefix-sum-lb}, $Y$ requires $\Omega(E n' \log n') = \Omega(n \log n)$ time to process $E = n'$ epochs of operations.
	$Y$ manages to perform the entire sequence of operations via $\Theta((n')^2) = \Theta(n)$ operations on $X$ and a~total of $\fO((n')^2) = \fO(n)$ additional work (for maintaining $B$ and finding $j_1, j_2$).
	Thus, $X$ must require $\Omega(n \log n)$ time to process the $\Theta(n)$ operations, proving \cref{p:sumtree-sum-weights-with-cut-lb}.

	We also observe that this proof can be rewritten to also apply in the group model setting:
	
	\begin{corollary}
		Each data structure that maintains a weighted forest under $n$ vertices under the operations \alg{subtree-sum} and \alg{cut} requires $\Omega( n \log n)$ time for $\Theta(n)$ operations in the group model of computation.
	\end{corollary}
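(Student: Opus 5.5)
We follow the reduction just given and change only the model. Fix a commutative group $G$ and a group-model data structure $X$ for \alg{subtree-sum} and \alg{cut}. From $X$ we build a group-model data structure $Y$ for the partial-sum problem over $G$. The construction keeps the spine-with-leaves shape of $T_+$, but the leaves now carry opaque group elements instead of the integers $j$. We then invoke the group-model lower bound of Fredman--Saks~\cite{FredmanSaks1989} (or the group-model version of P\u{a}tra\c{s}cu--Demaine~\cite{PatrascuDemaine2006}): $\Omega(n'\log n')$ per epoch of $n'$ updates and queries on an array of length $n'$ over a group.

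\emph{Construction.} $Y$ cannot choose leaves whose weights sum to a prescribed integer, because $G$ is opaque. So $T_+$ should encode each update directly. Let $E$ be the number of epochs (say $E=n'$). For every $i\in[n']$ and $p\in[E]$, attach to the spine vertex $v_i$ one leaf $u_{i,p}$ and one helper vertex $z_{i,p}$. Let $u_{i,p}$ be a child of $z_{i,p}$, and let $z_{i,p}$ be a child of $v_i$. In other words, each $v_i$ carries $E$ pendant paths of length two.

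In the group model the weights are fixed at initialization and no \alg{update-weight} is allowed. We therefore treat the whole update sequence as part of the input that fixes the initial weights. The group-model lower bound is a counting bound over the linear forms computed, and it holds against an adversary who fixes the values in advance. Concretely, let $x_{i,p}$ be the value written to $A[i]$ in epoch $p$, and set $x_{i,0}=0$. Initialize $w(u_{i,p}) = x_{i,p-1} - x_{i,p}$ and $w(z_{i,p}) = x_{i,p}-x_{i,p-1}$. Initially every pendant path has weight zero, so $\alg{subtree-sum}(v_k)=0$.

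\emph{Simulation.} An update of $A[i]$ in epoch $p$ becomes $\alg{cut}(u_{i,p})$. After this cut, $z_{i,p}$ contributes $x_{i,p}-x_{i,p-1}$ to every subtree-sum of $v_k$ with $k\ge i$. These contributions telescope over $p$, so $\alg{subtree-sum}(v_k)=\sum_{i\le k}A[i]$ exactly. A query is then a single \alg{subtree-sum}, and no integer table $B$ is needed.

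There is a gap here: the adversary's weights are chosen in advance, but Fredman--Saks assumes online updates with arbitrary values. The cleaner fix is to show that the lower bound's hard distribution is oblivious, i.e.\ a fixed sequence of update values drawn at random and queries at fixed times. The group-model argument then concerns the linear forms $X$ must produce over the formal variables $x_{i,p}$. Every query answer of $Y$ is the answer of $X$, and $X$ is a group-model algorithm over those weights, so $Y$'s additions equal $X$'s. This step, checking that the epoch-based group-model lower bound holds for a non-adaptive fixed sequence with formal-variable weights, is the one I expect to be the main obstacle. I would handle it by reproving the bound through \cref{t:adapted-prefix-sum-lb}'s information-transfer argument, in which the update values are independent uniform random variables. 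That argument carries over to the group model because each query output is a fixed $\pm$-combination of inputs, which identifies the group elements involved.

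\emph{Accounting.} The forest has $n=\Theta(n'E)$ vertices, and $Y$ performs $\Theta(n'E)$ operations on $X$ with $\fO(1)$ overhead each. With $E=n'$, the partial-sum lower bound of $\Omega(En'\log n')=\Omega(n\log n)$ transfers to $X$.
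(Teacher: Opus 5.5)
There is a genuine gap, and it is the step you flag yourself. By putting every future value $x_{i,p}$ into the initial weights, your $Y$ must know the whole update sequence at initialization. So $Y$ is not an online partial-sum structure, and neither \cref{t:adapted-prefix-sum-lb} nor the P\u{a}tra\c{s}cu--Demaine bound applies to it. Your proposed fix does not work either. The information-transfer argument, and likewise the group-model rank argument, get their bound from one fact: an update's value is revealed only at update time. Hence any cell carrying it must be written after that time and read by a later query. Here $X$ receives all $x_{i,p}$ at time zero. Cells written during initialization can already hold arbitrary linear forms in these values, so nothing forces many cells written in a left interval to be read in a right interval. What remains is a lower bound for a different problem: values known in advance, with only the activation order online. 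You neither prove nor cite such a bound, and the existing theorems do not give it. (Also, Fredman--Saks is a cell-probe bound of $\Omega(\log n/\log\log n)$ per operation, not the $\Omega(\log n)$ group-model bound you invoke.)

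The missing idea is that the reduction itself need not be a group-model algorithm; only $X$ is. Because $X$ must be correct over every commutative group and never inspects group elements, you may run it over $\mathbb{Z}_{nB}$, where $B = o(n)$ strictly exceeds every integer weight of the hard instance already built for \cref{p:sumtree-sum-weights-with-cut-lb}. Every true subtree sum lies in $[0, nB)$, so the answers modulo $nB$ equal the integer answers. Each group operation in $\mathbb{Z}_{nB}$ costs $\fO(1)$ word operations; working modulo $nB$ rather than in $\mathbb{Z}$ keeps intermediate linear combinations from overflowing a word. This turns $X$ into a Word RAM data structure with the same running time on that integer instance. The existing reduction, with its correction table and randomly chosen leaves $u_{i,j_1}, u_{i,j_2}$, then applies unchanged. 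The obstacle you started from (that one cannot pick leaves summing to a prescribed value) arises only because you imposed opacity on the reduction side, where it is not required.
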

	\begin{proof}
		Let $B = o(n)$ be such that the weights of the vertices in the forest $F$ above are strictly smaller than $B$, and let $D$ be the considered data structure in the group model of computation.
		Since $D$ works correctly in any commutative group, it can report the correct subtree sums in the group $\mathbb{Z}_{nB}$ of integers modulo $nB$; these are equal in value to the subtree sums in $\mathbb{Z}$, as all subtree sums are strictly smaller than $nB$.
		Meanwhile, all operations in $\mathbb{Z}_{nB}$ can be simulated by a Word RAM machine in constant time; therefore there exists a~data structure $D'$ in the Word RAM model of computation with the same complexity guarantees that maintains $F$ under \alg{subtree-sum} and \alg{cut}.
		\Cref{p:sumtree-sum-weights-with-cut-lb} applied to $D'$ finishes the proof.
	\end{proof}

	\section{Subtree sizes}\label{sec:subtree-size}
	
	In this section, we consider the \alg{subtree-sum} problem on \emph{unweighted} and \emph{0-1-weighted} forests, where \alg{update-weight} is not available and $\alg{subtree-sum}(v)$ returns the number of vertices (resp., the number of vertices with weight $1$) in the subtree rooted at $v$. Our algorithms (\cref{sec:subtree-size-ub}) support auxiliary vertices. We determine the complexity of the problem to be $\Theta(n \frac{\log n}{\log\log n})$ for $\Theta(n)$ operations. %

	\subsection{Lower bound}\label{sec:subtree-size-lb}
	
	We prove the lower bound (\cref{p:subtree-size-lb}) below:
	
	\restateSubtreeSizeLB*
	
	Similarly to \cref{p:sumtree-sum-weights-with-cut-lb}, we reduce from the \alg{partial-sum} problem; however, we find it more convenient to invoke a~result from an~earlier work by Fredman and Saks~\cite{FredmanSaks1989} that concerns the \alg{partial-sum-parity} problem.
	In \alg{partial-sum-parity}, we are given an~array $A$ of length $n$ with values from $\{0,1\}$, and we should support two operations: \alg{update} (flipping the value of $A[i]$) and \alg{partial-sum-parity} (returning the parity of the sum of the first $i$ values of $A$).
	
	\begin{theorem}[{\cite[Theorem 3]{FredmanSaks1989}}]\label{t:fredman-saks-lb}
		Consider any cell-probe data structure for the problem of \alg{partial-sum-parity} on an~array $A$ of length $n$ that may use amortization and Las Vegas randomization.
		The data structure requires $\Omega(n \tfrac{\log n}{\log\log n})$ time to process $\Theta(n)$ operations, even if each element of the array is \alg{update}d at most once.
	\end{theorem}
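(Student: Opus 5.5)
Since this is the lower bound of Fredman and Saks, I would reprove it with their \emph{chronogram} (time-epoch) method. I would do it in the form of an encoding argument, in the style of P{\u{a}}tra{\c{s}}cu--Demaine, so that the ``each element updated at most once'' restriction is built into the hard distribution.

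\textbf{Hard distribution.} By Yao's principle, Las Vegas randomization reduces to deterministic data structures on a random input. Amortization is harmless because I only bound the total cost. Fix a ratio $r = \log^{c} n$ for a suitable constant $c$. The operation sequence is built backwards in time from epochs $\ell, \ell-1, \dots, 1$, where epoch $i$ consists of $r^i$ updates and $\ell = \Theta(\log n/\log r) = \Theta(\log n/\log\log n)$. Choose $\sum_i r^i \le n/2$, so that every update can flip a \emph{distinct} array position. Concretely, the positions of epoch $i$ form a set $S_i$ of $r^i$ evenly spread indices, disjoint across epochs. Each position in $S_i$ is flipped independently with probability $1/2$; a ``no flip'' is implemented as an update of a dummy padding position. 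This makes the $r^i$ bits of epoch $i$ uniformly random, and no position is updated twice. After all updates, I append $\Theta(n)$ queries at uniformly random prefixes.

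\textbf{Core lemma.} For each epoch $i$, the expected number of cells probed by a random query that were last written during epoch $i$ is $\Omega(1)$. Summing over the $\ell$ epochs gives $\Omega(\log n/\log\log n)$ per query, hence $\Omega(n\log n/\log\log n)$ in total.

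To prove the lemma, suppose it fails for some $i$. Then I encode the $r^i$ random bits of epoch $i$ with fewer than $r^i$ bits, which is a contradiction.
\begin{enumerate}
\item The encoder gives the decoder everything from the older epochs $>i$ (independent of epoch $i$, so it can be shared for free).
\item It also sends the full contents and addresses of all cells written in the newer epochs $<i$. These are at most $t\cdot\sum_{j<i} r^j \cdot O(\log n) = O(t\, r^{i-1}\log n)$ bits, which is $o(r^i)$ once $r \gg t\log n$. If the amortized time $t$ is already $\ge \log n$ there is nothing to prove, so I may assume $t \le \log n$; this is why $r=\log^{c} n$ with a large $c$ suffices.
\item Pick $\Theta(r^i)$ query prefixes, one just after each position of $S_i$. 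Consecutive prefix parities reveal every epoch-$i$ bit, because the differences of consecutive prefix sums isolate single positions once the bits of the other epochs are known.
\item The encoder sends addresses and contents of the few epoch-$i$ cells these queries actually read. By assumption there are $o(r^i)$ such cells, so this costs $o(r^i)\cdot O(\log n)$ bits. With $c$ large the small constant is absorbed, or one uses the standard trick of sending only a subset plus a bitmask of which probes hit epoch $i$.
\item The decoder simulates the queries. Any cell it lacks defaults to its pre-epoch-$i$ value, which it can compute from the older epochs.
\end{enumerate}

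\textbf{Main obstacle.} The delicate part is turning ``few epoch-$i$ probes \emph{on average}'' into a short encoding for the \emph{specific} set of queries spread over $S_i$. The issue is that a query whose probes happen to miss epoch $i$ still reads stale values correctly only if the decoder knows which probes are stale.

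First I would try the P{\u{a}}tra{\c{s}}cu--Demaine approach. The encoder sends the set of epoch-$i$ cells read by all the chosen queries together, named via a subset of the cells written in epoch $i$. Its size is $\binom{t r^i}{k}$ with $k$ small, so it costs $O(k\log(t r^i/k))$ bits, and that is $o(r^i)$ when $k=o(r^i/\log\log n)$.

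If this loses too much, I would fall back to Fredman--Saks' original probabilistic argument. It says that with small probe counts into epoch $i$, a random query's parity is almost independent of the epoch-$i$ bits, since it is nearly determined by a low-entropy function. This contradicts exact correctness, because each query answer depends on the epoch-$i$ bits via an odd number of random flips.
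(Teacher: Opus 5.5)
The paper gives no proof of this theorem. It is quoted from Fredman and Saks and used only as a black box in the reduction proving \cref{p:subtree-size-lb}, so your proposal has to stand on its own. Your chronogram skeleton is the right one: ratio $r=\log^{c} n$, $\ell=\Theta(\log n/\log\log n)$ epochs, and cheap transmission of the newer epochs. However, two steps fail as written.

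\textbf{Amortization is not harmless with your distribution.} You append all $\Theta(n)$ queries after all updates, and then a correct structure with total cost $O(n)$ exists. Store the bit array explicitly and keep a table of all prefix parities with a ``dirty'' flag. Rebuild the table in $O(n)$ time at the first query following an update, and answer every other query in $O(1)$. Your core lemma can only concern a query run from the post-update memory. Every later query runs on memory rewritten by earlier queries, and those writes lie in no update epoch. So ``$\Omega(\ell)$ per query, hence $\Omega(n\ell)$ in total'' does not follow. For the same reason, step~2's bound $t\sum_{j<i}r^j$ on cells written in the newest epochs is not implied by an amortized bound when the window is fixed at the end of the sequence. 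The standard repair has three parts. Interleave queries with updates. Define the epochs backwards from a uniformly random query time $\tau$, counting every cell written in a window, including writes by queries. Then average over $\tau$ to bound the expected number of cells written in the last $L$ operations by $O(L\cdot T/N)$. You must also place the update positions so that the backward windows are well spread for a random $\tau$, not only for the end of the sequence.

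\textbf{The encoding in step~4 is too expensive by a factor $\Theta(\log n)$.} Epoch $i$ carries only $r^i$ bits of entropy, but each transmitted epoch-$i$ cell costs $b=\Theta(\log n)$ bits of contents on top of its name. In your P\u{a}tra\c{s}cu--Demaine variant you count only the $O(k\log(tr^i/k))$ naming bits and omit the $kb$ content bits. So you get a contradiction only when $k<r^i/b$, which yields $\Omega(1/\log n)$ epoch-$i$ probes per query and nothing after summing. Enlarging $c$ does not help, since it only shrinks the newer-epoch term.

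The missing idea, which is why Fredman--Saks works for a \emph{Boolean} problem, is never to send epoch-$i$ contents. The decoder runs every chosen query on a hybrid memory: the state before epoch $i$, overwritten by the transmitted newer-epoch cells. Any query that probes no epoch-$i$ cell then returns its correct answer. The encoder sends only the set of chosen queries whose simulated answer is wrong, together with their correct bits. If at most $\epsilon r^i$ chosen queries touch epoch $i$, this costs about $(H(\epsilon)+\epsilon)r^i$ bits. Adding $O(tr^{i-1}b)$ bits for the newer cells keeps the total below $r^i$. Hence a query reads an epoch-$i$ cell with probability $\Omega(1)$.

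Two smaller points. First, the fixed query set ``just after each position of $S_i$'' is not a random query. You need a random-shift averaging, which works with evenly spaced $S_i$, to transfer the average-case assumption to this set. Second, your fallback about a single parity being ``almost independent'' of epoch $i$ gives no contradiction by itself, since one bit can well be determined by a low-entropy function. The contradiction comes from the $\Theta(r^i)$ answers taken jointly.
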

	
	\begin{proof}[Proof of \cref{p:subtree-size-lb}]
		Let $X$ be a~data structure for \alg{subtree-sum} on unweighted forests.
		Using $X$, we will construct a~data structure $Y$ for \alg{partial-sum-parity} on an~array $A$ of length $n'$.
		On initialization of $Y$ with an~array $A$, we construct a~tree $T$ of size $n \leq 3n'$ as follows.
		Let $v_1v_2 \ldots v_{n'}$ be a~path of $n'$ vertices.
		For each $i \in [n']$, if $A[i] = 0$, we create one additional vertex $u_{i,1}$ connected to $v_i$; if $A[i] = 1$, we create two additional vertices $u_{i,1}$ and $u_{i,2}$ connected to $v_i$.
		We root $T$ at $v_{n'}$.
		This construction ensures the following invariant: For each $i \in [n']$, the parities of $\alg{subtree-sum}(v_i)$ and $\sum_{j=1}^i A[j]$ are the same.
		The data structure $X$ is initialized with $T$.
		Then, to process an~\alg{update} of $A$ flipping the parity of $A[i]$, we simply invoke $\alg{cut}(u_{i,1})$; this is legal (since each element of $A$ is \alg{update}d at most once) and preserves the invariant.
		Therefore, a~\alg{partial-sum-parity} query for some $k \in [n']$ can be answered by simply invoking $\alg{subtree-sum}(v_k)$ and returning its parity.
		
		By \cref{t:fredman-saks-lb}, $Y$ requires $\Omega(n' \tfrac{\log n'}{\log\log n'})$ time to process $\Theta(n')$ operations.
		On the other hand, our implementation of $Y$ performs $\Theta(n')$ operations on $X$ and a~total of $\fO(n')$ additional work.
		Since $n' = \Theta(n)$, this finishes the proof.
	\end{proof}
	
	\subsection{Upper bound}\label{sec:subtree-size-ub}
	
	We start with a data structure for forests that are very small relative to the size of a machine word, which we denote by $b$.
	This is somewhat similar to \cref{p:tree-size-small}. However, here we actually need to maintain small \emph{weighted} forests, which means that we need a different approach.
	The weights are assumed to fit in one word each, i.e., to be an integer in $[0,2^b-1]$. We only allow a very limited way of changing weights: the operation $\alg{decrement-weight}(v)$, which decreases the weight $w(v)$ of $v$ by one (only allowed if $w(v) \ge 1$).
	Note that this data structure is merely used as a subroutine; in the end, we give a data structure for \emph{unweighted} forests of size $n = \Theta(2^b)$. %
	The proof can be thought of as an adaptation of Dietz's data structure for \emph{Subset Rank}~\cite{Dietz1989} -- essentially, partial sums of 0-1-arrays -- to forests.
	
	\begin{lemma}\label{p:subtree-size-small}
		Fix a word size $b$, and let $k \le \tfrac{1}{4}\sqrt{b}$ be a power of two. There is a data structure maintaining a weighted forest with $k$ vertices under the operations \alg{subtree-sum}, \alg{cut}, and \alg{decrement-weight}, provided that all weights are non-negative integers, and the sum $W$ of all weights in the initial forest is at most $2^b-1$.
		
		The data structure requires a certain global table only depending on $k$, which can be computed in time $\fO((3k)^k)$.
		After precomputing that table, initialization takes $\fO(k)$ time, each \alg{subtree-sum} operation takes constant time, each \alg{cut} operation takes $\fO(k)$ time, and all \alg{decrement-weight} operations together take $\fO(W)$ time.
	\end{lemma}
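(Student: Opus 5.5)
Our plan adapts Dietz's delayed-update technique. We split the true weight of each vertex into a ``stale'' part stored in a word-packed table and a small ``pending'' decrement counter. Concretely, we maintain for each vertex $v$ an integer $s(v)$ equal to $\alg{subtree-sum}(v)$ at the time of the last \emph{rebuild}, stored as a full word in an array. We also maintain, for each vertex $u$, a pending counter $c(u)\in[0,k)$ recording the number of \alg{decrement-weight}$(u)$ calls since the last rebuild. The forest shape is encoded as in \cref{p:tree-size-small}, with each parent stored in $\log k+1$ bits. The pending counters use $\log k$ bits each. Together this is $\fO(k\log k)\le b$ bits, so the encoding $x$ of (shape, counters) fits in one word; indeed $k\le\tfrac14\sqrt b$ leaves plenty of slack.

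\textbf{Queries and the global table.} The true value is $\alg{subtree-sum}(v)=s(v)-\sum_{u\in F_v}c(u)$. The correction term depends only on $x$ and $v$, so the global table stores, for every valid encoding $x$ and every $v$, the value $\sum_{u\in F_v}c(u)$. It also stores the successor encoding for each ``increment $c(v)$'' step and for each ``cut $v$'' step. The number of encodings is at most $k^{k}\cdot k^{k}\cdot 2^{\fO(k)}$; we need this to be $(3k)^k$, and I would tune the encoding (e.g.\ encode the forest shape compactly with $\fO(k)$ bits via parenthesis sequences plus a permutation, or count more carefully) to fit the stated bound. The precise counting here is routine but fiddly. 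Each table entry has $\fO(k)$ content computable in $\fO(k)$ time. A query is then one table lookup plus one subtraction: constant time.

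\textbf{Decrements and rebuilds.} $\alg{decrement-weight}(v)$ increments $c(v)$ via the table in $\fO(1)$ time. When $c(v)$ would reach $k$ (the counter overflows), we perform a rebuild. A rebuild recomputes all $s(\cdot)$ exactly by a bottom-up traversal in $\fO(k)$ time, using stored current weights (we keep the true $w(v)$ explicitly, updated in $\fO(1)$ per decrement). It then resets all counters to zero. A rebuild happens only after $k$ decrements to a single vertex, so its $\fO(k)$ cost is charged to those decrements. Hence all decrements cost $\fO(W)$ in total, since there are at most $W$ decrements overall.

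\textbf{Cuts.} $\alg{cut}(v)$ updates the shape bits of $x$ and recomputes $s$ for the ancestors of $v$ (or simply rebuilds) in $\fO(k)$ time, which is allowed. Initialization builds $x$ and $s$ in $\fO(k)$ time.

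The main obstacle I expect is purely the bookkeeping of the encoding: ensuring it fits in a word and that the table count gives $\fO((3k)^k)$. If the naive count overshoots, I would shrink the counter range (e.g.\ overflow at a smaller threshold such as $2$ per vertex, i.e.\ a single bit, charging the $\fO(k)$ rebuild only after $\Omega(k)$ total decrements across the forest via a global decrement counter). This gives $2^k$ counter states and keeps the total table size within the bound.
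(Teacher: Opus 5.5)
There is a genuine gap in the step you deferred as ``routine but fiddly'': with your design the global table cannot be built in $\fO((3k)^k)$ time. The rest of your scheme is sound, namely the stale sums $s(v)$ with packed pending counters, an $\fO(k)$ rebuild charged to $k$ decrements, and a full rebuild on \alg{cut}.

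The trouble is that you index the table by the pair (labeled forest shape, counter vector), and both components are large.
\begin{itemize}
\item There are $(k+1)^{k-1}$ labeled rooted forests on $[k]$. The correction $\sum_{u\in F_v}c(u)$ depends on the labeled descendant sets, so compressing the shape encoding (parentheses plus a permutation) cannot help.
\item Your counters in $[0,k)$ contribute another $k^k$ states.
\end{itemize}
That gives on the order of $k^{2k-1}$ table rows, far above $(3k)^k$, and careful counting will not change this.

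Your fallback also fails. With one-bit per-vertex counters, a second \alg{decrement-weight} on the same vertex before the global threshold cannot be recorded. You must then rebuild, or push the decrement into $s(\cdot)$ of all ancestors, at cost $\Theta(k)$ on a path. Repeatedly decrementing one deep vertex of large weight then costs $\Theta(kW)$ in total instead of $\fO(W)$.

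The missing idea is to take the shape out of the table index. The correction for $v$ depends only on the counter vector and on the \emph{set} $V(F_v)$. The paper therefore does the following:
\begin{itemize}
\item It stores the descendant set of every vertex as a $k$-bit mask in a separate packed word $C$ of $k^2\le b$ bits.
\item It uses the shape-independent table $Q[B',U]=\sum_{i\in U}B'[i]$ over all $B'\in[0,k]^k$ and $U\subseteq[k]$. This table has $(k+1)^k2^k$ entries, each computable in $\fO(k)$ time, so $\fO((3k)^k)$ in total.
\item A query is answered as $A[v]-Q[B,C[v]]$.
\item A \alg{cut} flushes and recomputes $A$ and $C$ directly in $\fO(k)$ time, so the table needs no cut transitions at all.
\end{itemize}
The paper flushes after every $k$ decrements in total. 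Your per-vertex overflow rule would also work once the table no longer depends on the shape.
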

	\begin{proof}
		Let $(F,w)$ denote the current weighted forest, and say $V(F) = [k]$.

		We follow Dietz~\cite{Dietz1989} in using two arrays $A$ and $B$ of length $k$. $A$ stores the actual subtree-sum for each vertex in $F$, but is only updated every $k$ operations (such an update is called a \emph{flush}). $B$ stores for each vertex the number of decrements since the last flush. Since this number can be no more than $k$, the entire array $B$ fits in $k (1 + \log k) \le b$ bits, so it fits into a single machine word.
		
		To adapt the Dietz's idea to trees with \alg{cut}s, we need the following two extra data structures. First, we want to store, for each vertex $v$, the set of vertices $V(F_v)$ in the subtree rooted at $v$ (i.e., the ones whose weights count towards $\alg{subtree-sum}(v)$). We store the mapping $v \rightarrow V(F_v)$ in an array $C$. Each $V(F_v)$ is represented as a bitstring, so $C$ has size $k^2 \le b$ and fits into a single machine word.
		
		Second, to actually obtain the sum $w(F_v)$, we need the \emph{global table} mentioned in the statement of the lemma. We denote it by $Q$ and it maps a pair $(B',U)$ to an integer. Here $B'$ is an array of length $k$, storing $k$ numbers in $[0,k]$ (as $B$ above), and $U$ is a subset of $V(F) = [k]$, stored as a bitstring. Both $B'$ and $U$ fit into single machine words. The value $Q[B',U]$ is $\sum_{i \in U} B'[i]$.
		Note that $Q$ has $(k+1)^k \cdot 2^k$ entries, each entry fits into a machine word, and $Q$ can be computed in time $|Q| \cdot \fO(k) = \fO((3k)^k)$.
		Observe that $Q$ only depends on $k$, as required.
		
		We now show how to implement the operations. We maintain the invariant that $A$ is correct for a previous point in time (the last \emph{flush}), and since then at most $k-1$ \alg{decrement-weight} and no \alg{cut} operations have been performed. In turn, $B[v]$ is the number of $\alg{decrement-weight}(v)$ operations since the last flush.
		
		\begin{itemize}
			\item Initially, we compute $A$ and $C$ in a bottom-up fashion, in $\fO(k)$ time. We initialize $B$ as an all-zero vector, and assume $Q$ has been already precomputed.
			
			\item To answer a $\alg{subtree-sum}(v)$ query, we calculate $A[v] - Q[B, C[v]]$, in constant time.
			
			\item To perform $\alg{decrement-weight}(v)$, we first increment $B[v]$. Then, if $k$ \alg{decrement-weight} operations have been performed since the last flush, we perform a flush as follows. For each $v \in V(F)$, set $A[v] \gets A[v] - Q[B, C[v]]$, and then reset $B$ to the all-zero array. The amortized time for the flushes is clearly constant per \alg{decrement-weight} operation. Since we can perform no more than $W$ \alg{decrement-weight} operations, the total time is at most $\fO(W)$.
			
			\item To perform $\alg{cut}(v)$, we first perform a flush, then recover all vertex weights from $A$, then delete the edge from $v$ to its parent and recompute $A$ and $C$. This requires $\fO(k)$ time, as desired.\qedhere
		\end{itemize}
	\end{proof}
	
	We now show how to iterate \cref{p:subtree-size-small}, again using a cluster decomposition (though with different cluster size than before).

	We introduce a new operation that simplifies the analysis. When maintaining a weighted forest which is initially a tree, the parameterless operation $\alg{root-sum}()$ returns the sum of weights of vertices still connected to the root of the \emph{initial} tree. Note that \alg{root-sum} can and will be implemented with a single call $\alg{subtree-sum}(v)$, where $v$ is the initial tree root (which can be stored at the beginning).

	We present a sequence of data structures $X_1, X_2, \dots$, which support increasingly larger inputs, but also have successively worse running time. For now, we assume the input is a binary tree. For technical reasons related to binarization (\cref{p:binarize}), we consider 0-1-weighted forests instead of unweighted forest. Below, big-$\fO$ notation only hides constants not depending on the values $n$, $b$, or $t$. %
	
	\begin{lemma}\label{p:subtree-size-param}
		Fix a word size $b$ and let $\ell$ be a power of two with $\ell \le \frac{1}{48}\sqrt{b}$. For each parameter $t \in \N_+$, there is a data structure maintaining an 0-1-weighted binary forest, initially a tree, with $n \le \min\{\ell^t,2^b-1\}$ vertices under the operations \alg{cut} and \alg{subtree-sum}.
		
		The data structure requires a certain global table only depending on $\ell$, which can be computed in time $\fO((36\ell)^{12\ell})$.
		After precomputing that table, the initialization takes $\fO(tn)$ time, each \alg{subtree-sum} operation takes $\fO(t)$ time, each \alg{root-sum} operation takes $\fO(1)$ time, and all \alg{cut} operations together take $\fO( t ( n + \ell^2) )$ time.
	\end{lemma}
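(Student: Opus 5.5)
The proof goes by induction on $t$. The base case uses \cref{p:subtree-size-small} directly, and each induction step stacks one level of cluster decomposition with cluster size about $n/\ell$ on top of $X_{t-1}$.

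\textbf{Base case and global table.} For $t=1$ we have $n\le \ell$, so we can use \cref{p:subtree-size-small} with $k=\ell$, padding with isolated zero-weight vertices. Its weights are the $0$-$1$ weights, so $W\le n$ and $\alg{decrement-weight}$ is never needed. A \alg{cut} then costs $\fO(\ell)$, and there are at most $n-1\le \ell$ cuts, giving $\fO(\ell^2)$ in total. The global table should be the one for $k=12\ell$, which is the size of the cluster trees below. This $k$ satisfies $12\ell\le \tfrac14\sqrt b$, and the table costs $\fO((36\ell)^{12\ell})$, matching the statement. For smaller forests we simply pad up to this size.

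\textbf{Induction step.} For $t\ge 2$, apply \cref{p:cluster-decomp} to the binary tree $T$ with cluster size $\lceil n/\ell\rceil\le \ell^{t-1}$. This gives at most $6\ell$ clusters, so the cluster tree $S$ has at most $12\ell$ vertices. Each cluster $C$, which is a binary tree, gets an instance $X_C$ of the level-$(t-1)$ structure. The cluster tree gets an instance of \cref{p:subtree-size-small}. In that instance, $\ub(C)$ carries the weight of the vertices of $C$ still connected to $\ub(C)$ but not lying below $\lb(C)$ within $C$. The vertex $\lb(C)$ carries its own $0$-$1$ weight, and we treat the part of $C$ hanging below $\lb(C)$ inside $C$ consistently with how the cluster tree orders children. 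The total weight is at most $n\le 2^b-1$.

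\textbf{Answering queries.} For $\alg{subtree-sum}(v)$ with $v$ in cluster $C$, first ask $X_C.\alg{subtree-sum}(v)$. If $v$ is an ancestor of $\lb(C)$ in the current forest, which we can test with \cref{p:roots}, add the cluster-tree subtree sums of the children of $\lb(C)$ in $S$. Those children are upper boundary vertices of child clusters, and there are at most two of them since $T$ is binary. Each such cluster-tree query costs $\fO(1)$, so the query takes $\fO(t)$ time by induction. $\alg{root-sum}$ is a single cluster-tree query at the root, costing $\fO(1)$.

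\textbf{Cuts and the cost bound.} A cut between two clusters is a cut in the cluster tree, costing $\fO(\ell)$; there are at most $12\ell$ of these, for $\fO(\ell^2)$ in total. A cut inside $C$ is passed to $X_C$. If it separates $\ub(C)$ from $\lb(C)$, we also cut the corresponding edge in the cluster tree, which happens at most once per cluster. We then decrease the weight of $\ub(C)$ by the number of weight-$1$ vertices that just left $\ub(C)$'s component. We obtain this number from $X_C.\alg{root-sum}$ before and after the cut, which takes $\fO(1)$, and apply that many $\alg{decrement-weight}$ operations. Over the whole run these total at most $n$, so they cost $\fO(n)$. The total cost of cuts is therefore $\sum_C \fO((t-1)(|C|+\ell^2)) + \fO(n+\ell^2)$. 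Charging $\ell^2$ per cluster looks too much, since $6\ell\cdot\ell^2$ is not $\fO(\ell^2)$. This is fine whenever $|C|\ge \ell^2$, and it can be avoided by absorbing the $\ell^2$ term into $n$ once $n\ge\ell^3$ (so that $\ell^2\le n/\ell$). I expect this accounting, together with making the weight bookkeeping at $\lb(C)$ consistent, to be the main obstacle. The first thing I would try is to strengthen the induction hypothesis to a total cut cost of $\fO(tn)$ for $n\ge \ell^2$, and handle the small levels separately.
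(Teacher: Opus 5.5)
Your construction is essentially the paper's: at most $6\ell$ clusters of size at most $\lceil n/\ell\rceil$, the cluster forest kept in \cref{p:subtree-size-small} with $k=12\ell$, a level-$(t-1)$ structure per cluster, and decrement amounts read off from $X_C.\alg{root-sum}()$.

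The $\lb(C)$ bookkeeping issue disappears if, as in the paper, you build $X_C$ on $C\setminus\{\lb(C)\}$. Then $X_C.\alg{root-sum}()$ is exactly $w'(\ub(C))$. A query at a non-boundary $v$ is $X_C.\alg{subtree-sum}(v)$, plus $D.\alg{subtree-sum}(\lb(C))$ when $v$ is an ancestor of $\lb(C)$. Note that no vertex of $C$ lies below $\lb(C)$ at all. Also, in your variant you may only add children of $\lb(C)$ that are still attached in the current cluster forest, not all its children in $S$.

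The genuine gap is the cut accounting you flag, and your proposed repair cannot work: no bound of the form $\fO(tn)$, for $n\ge\ell^2$ or otherwise, holds for this construction. Take $t=2$, let $T$ be a path on $n=\ell^2$ vertices, and cut every edge. There are about $\ell$ clusters of about $\ell$ vertices each. Each is handled by a base structure of \cref{p:subtree-size-small} in which every \alg{cut} costs $\Theta(k)=\Theta(\ell)$, since it does a flush and a rebuild of its packed arrays. So the cuts cost $\Theta(\ell^3)$, whereas the statement promises $\fO(t(n+\ell^2))=\fO(\ell^2)$. More generally, the bottom level alone costs $\Theta(\ell n)$.

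The paper's proof has the same hole. Its closing ``by induction'' step sums $\fO((t-1)(|C|+\ell^2))$ over the clusters as if the $\ell^2$ term were paid once, when it is paid once per cluster. So the stated cut bound does not follow from the paper's argument either.

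What the construction does give is a total cut cost of $\fO((t+\ell)n)$. Charge every instance $I$ of the recursion, with $s_I$ vertices, only its own work:
\begin{itemize}
\item $\fO(s_I)$ for per-cut overhead and \alg{decrement-weight} calls;
\item at most $\min\{2s_I,12\ell\}$ cuts in its small structure at $\fO(\ell)$ each (for a base instance, at most $s_I$ cuts at $\fO(\ell)$ each).
\end{itemize}
Instances at a fixed recursion depth are vertex-disjoint, so the first part sums to $\fO(tn)$. The second part is at most $\fO(\ell^2)\le\fO(s_I)$ whenever $s_I\ge\ell^2$.

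Every child of an instance of size $s$ has size at most $\lceil s/\ell\rceil$. Hence each vertex lies in at most two instances with $2\le s_I<\ell^2$. These contribute $\fO(\ell s_I)$ each, so $\fO(\ell n)$ overall.

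In the proof of \cref{p:subtree-size} we have $\ell=\fO(\sqrt{\log n})=\fO(t)$, so that theorem is unaffected. The lemma's cut bound, however, has to be weakened to $\fO((t+\ell)n)$.
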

	\begin{proof}
		The case $t = 1$ follows from \cref{p:subtree-size-small}. We thus focus on the case $t \ge 2$.
		
		Let $(T^0,w_0)$ be the initial 0-1-weighted tree.
		At the start, use \cref{p:cluster-decomp} to compute a cluster decomposition $\fC$ of $(T^0,w_0)$ with $|\fC| \le 6 \ell$, where each cluster has size at most~$\tfrac{n}{\ell}$.
		
		As in \cref{p:tree-sum-red}, we keep the cluster decomposition the same throughout the run of the data structure. At any point, let $(F,w)$ denote the current weighted forest, and let $G$ denote the cluster forest induced by $\fC$. We again maintain $G$ using \cref{p:maintain-cluster-forest}, and also keep the helper data structure of \cref{p:roots} on $F$.
		
		We also introduce a weight function $w'$ on $G$, which is slightly different from the one used in \cref{p:tree-sum-red}. For each cluster $C$, we let $w'(\ub(C))$ be the total weight of all vertices in $C \setminus \{\lb(C)\}$ that are connected to $\ub(C)$. The weight of $\lb(C)$ is always $w(\lb(C))$. Observe that, if $v$ is a boundary vertex of some cluster, then the subtree-sum of $v$ w.r.t.\ $(F,w)$ is precisely the same as the subtree-sum of $v$ w.r.t.\ $(G,w')$.

		The weighted cluster forest $(G,w')$ is maintained using an instance $D$ of the data structure of \cref{p:subtree-size-small} with parameter $k = 12\ell$. To check that all conditions of \cref{p:subtree-size-small} are met, first observe that $G$ has at most $2 |\fC| \le 12 \ell = k \le \frac14 \sqrt{b}$ vertices. Second, the sum of weights $w'(G)$ is precisely the sum of weights $w(F)$, which is at most $|V(F)| = n \le 2^b-1$, as required.
		
		As in the proof of \cref{p:tree-sum-param}, we maintain a \emph{cluster object} for each cluster $C$, which is referenced by a pointer from each of its vertices. The cluster object stores pointers to $\ub(C)$, to $\lb(C)$, and additionally a subtree-size data structure $X_C$ on the subtree $(T^0,w_0)$ induced by $C \setminus \{\lb(C)\}$. This data structure is constructed recursively with parameter $t-1$. This is possible because each cluster has size at most $\tfrac{n}{\ell} \le \ell^{t-1}$.
		
		We now describe how to implement the three operations.
		\begin{itemize}
			\item Consider a call $\alg{subtree-sum}(v)$. If $v$ is a boundary vertex, we return $D.\alg{subtree-sum}(v)$, as explained above. Otherwise, determine the cluster $C$ containing $v$, and check whether $v$ is an ancestor of $\lb(C)$. %
			If $\lb(C) = \bot$ or $v$ is not an ancestor of $\lb(C)$, then we can simply return $X_C.\alg{subtree-sum}(v)$, since all descendants of $v$ are contained in $C$. Otherwise, we return $X_C.\alg{subtree-sum}(v) + D.\alg{subtree-sum}(\lb(C))$. It is easy to see that this is correct.
			
			\item $\alg{root-sum}()$ is implemented simply by calling $D.\alg{subtree-sum}(u)$ on the initial tree root $u$ (which can be stored at the start). Observe that $u$ is a boundary vertex, and thus indeed contained in $D$.
			
			\item Finally, consider a call $\alg{cut}(v)$. First update $G$ using \cref{p:maintain-cluster-forest}; if this results in an edge removed from $G$, mirror this with $D.\alg{cut}$. Let $C$ be the cluster containing~$v$. If $v = \ub(C)$ or $v = \lb(C)$, then no change to $X_C$ is needed. Otherwise, we call $X_C.\alg{cut}(v)$. This may disconnect some vertices of $C \setminus \{\lb(C)\}$ previously connected to $u = \ub(C)$. Thus, we may need to change $w'(u)$. Since $u$ is the root of $T^0[C]$, we can determine the new value with $X_C.\alg{root-sum}()$. Observe that $w'(u)$ can only decrease; thus, we can update $w'(u)$ using an appropriate number of calls to $D.\alg{decrement-weight}(u)$.
		\end{itemize}
		
		We now analyze the running time. The global table needs only be computed once (for $D$ and all the recursive data structures, since $\ell$ is fixed).
		
		The initialization takes $\fO(n)$ time for everything besides the data structures $X_C$. By induction, the total time is $\fO(tn)$.
		
		For \alg{subtree-sum}, observe that we perform constant work plus (possibly) one recursive call to $X_C.\alg{subtree-sum}$. Again by induction, this takes $\fO(t)$ time.
		Then \alg{root-sum} works like \alg{subtree-sum}, but the $X_C.\alg{subtree-sum}$ call is always omitted, so the time is constant.
		
		For \alg{cut}, we perform constant work, plus at most one $X_C.\alg{root-sum}$ call, one $X_C.\alg{cut}$ call, one $D.\alg{cut}$ call, and some number of $D.\alg{decrement-weight}$ calls. The $X_C.\alg{root-sum}$ takes constant time by induction. By \cref{p:subtree-size-small}, each $D.\alg{cut}$ call takes $\fO(k) = \fO(\ell)$ time, and there are at most $|\fC|-1 \le k - 1 \le \fO(\ell)$ such calls, so the total running time is $\fO(\ell^2)$. The total running time of $D.\alg{decrement-weight}$ is $\fO(W)$, where $W$ is the total weight of vertices in $G$, which is at most $n$ by definition.
		
		This means the total time for all \alg{cut}s is $\fO(n + \ell^2)$ plus the time for all $X_C.\alg{cut}$ calls. Since the total number of all vertices in all clusters is precisely $n$, we obtain the bound $\fO( t(n + \ell^2) )$ by induction.
	\end{proof}
	
	We are now ready to prove:
	
	\restateSubtreeSize*
	\begin{proof}
		We assume the machine word size is at least $b = 1 + \lceil \log(n+1) \rceil$.\footnote{The assumption that $b \ge \Omega(\log n)$ is standard, and if it is a constant factor less than necessary, we use the usual simulation of $b$-bit-word operations with a constant number of words.}
		Let $b' = \lceil \log n \rceil \le b$.
		As usual, we can assume the initial weighted forest is a tree $(T^0,w_0)$. First, binarize the tree with \cref{p:binarize} to obtain $(T^1, w_1)$. Now use \cref{p:subtree-size-param} with parameters $\ell = \lfloor \tfrac{1}{48} \sqrt{b'} \rfloor$ and $t = \lceil \log_\ell (2n) \rceil$. Note that $|V(T^1)| \le 2n < 2^b-1$ and $|V(T^1)| \le \ell^t$, so the conditions for \cref{p:subtree-size-param} are satisfied. The total running time is $\fO( t (m+n + \ell^2) )$. Observe that $\ell^2 \le b' \le n$. Moreover, we have $t \le \fO(\tfrac{\log n}{\log\log n})$ by an easy calculation. The statement follows.
	\end{proof}
	
	\bibliographystyle{alphaurl}
	\bibliography{tree-sum}
	
	\appendix
	
	\crefalias{section}{appendix}

	\section{Optimal data structures for small trees}\label{sec:opt-small}

	In this section, we give the details on our procedure to efficiently compute (asymptotically) optimal tree-sum data structures for very small forests.
	Recall that $\OPT(F,m)$ is the minimum number of group operations required by any data structure to process $m$ operations with the initial forest $F$, in the worst case taken over all initial weight assignments and operation sequences.
	
	As in \cref{sec:tree-sum-optimal}, we say that a \emph{tree-sum data structure} supports \alg{cut}, \alg{update-weight}, \alg{tree-sum}, and allows for auxiliary vertices.
	It is crucial here that any single data structure is required to work with any given group, it does not know the group, and cannot make decisions based on memory cells that store group elements. For example, comparisons with zero are not allowed; c.f.\ Preliminaries (\cref{sec:prelims}). We will prove:

	\restateOptSmall*
	
	\subsection{Computation trees}
	
	It will be useful to define the following alternative model of computation, similar to decision trees. Fix an initial forest $F$ with $n$ vertices, potentially containing some auxiliary vertices. A \emph{computation tree} for $F$ is a (rooted) tree $C$. Each edge of $C$ is labeled with an \emph{operation}, which is $\alg{cut}(v)$, $\alg{tree-sum}(v)$, or $\alg{update-weight}(v)$ for some vertex $v \in V(F)$. Note the missing weight parameter for \alg{update-weight}.
	
	Each node $x$ is labeled with a sequence $I_x$ of \emph{instructions}. Informally, each instruction is an addition or subtraction of weights or memory registers, the result of which is stored in a specified memory register.
	More precisely, each instruction has one of the following forms. Here $X$ and $Y$ each represent either $R[j]$ for some $j \in \N_+$ or $W[v]$ for some $v \in V(F)$.
	\begin{itemize}
		\item $R[i] \gets X + Y$ for $i \in \N_+$
		\item $R[i] \gets X - Y$ for $i \in \N_+$
	\end{itemize}
	(Sequences of this form are sometimes called \emph{straight-line programs} in the literature.)
	
	If $x$ is not the root and the edge to its parent is labeled $\alg{tree-sum}(v)$ for some $v$, then $x$ is additionally labeled with a \emph{return value} $E_x$, which is again either $R[j]$ for some $j \in \N_+$ or $W[v]$ for some $v \in V(F)$.
	
	Informally, the semantics of this model are as follows.
	The $R[i]$ are freely usable registers which store group elements, and each $W[v]$ represents the current weight of a vertex $v$.
	The computation starts at the root $r$ of $C$ and initially executes the sequence $I_r$ of instructions (potentially storing some values in registers $R[i]$).
	When an operation is performed, follow the respective edge to a child node $y$, and execute $I_y$. If the operation is $\alg{update-weight}(v,t)$, first update $W[v] \gets t$ and then follow the edge $\alg{update-weight}(v)$ and execute $I_y$. If the operation is \alg{tree-sum}, return $E_y$ at the end.
	
	Note that a computation tree is specific to a forest, and moreover can only handle a finite number of operations.
		
	We now formally define the correctness of a computation tree. First, we say that $C$ is \emph{valid} if:
	\begin{itemize}
		\item If an~edge $e$ is labeled $\alg{cut}(v)$, then $v$ is not auxiliary nor a~root of $F$, and no ancestor edge of $e$ is labeled $\alg{cut}(v)$.
		\item For each node $x \in V(C)$ of depth at most $\height(C)-1$, and each $v \in V(F)$, there are edges labeled $\alg{cut}(v)$, $\alg{tree-sum}(v)$, and $\alg{update-weight}(v)$ from $x$ to a child, if allowed by the previous condition.
		\item No two edges from the same parent share a label.
	\end{itemize}
	Note that this implies that all leaves of $C$ have the same depth.
	
	Fix a valid computation tree $C$ of height $m$ on $F$, a group $G$, an initial weight function $w \colon V(F) \rightarrow G$, and a sequence $\sigma$ of at most $m$ operations.
	The \emph{output} of $C$ on $(G,w,\sigma)$ is a sequence of values $g_1, g_2, \dots, g_k \in G$, one for each \alg{tree-sum} operation in $\sigma$, computed as follows.
	\begin{itemize}
		\item At the start, initialize arrays $R$ and $W$ with $R[i] \gets 0$ for each $i \in \N_+$, and $W[v] \gets w(v)$.
		Also initialize a pointer $x$ to the root of $C$.
		Execute the instructions $I_x$ in the obvious way.
		\item For each operation, first follow the corresponding edge from $x$ to a child~$y$ (observe that such an edge exists by definition). If the operation is $\alg{update-weight}(v,t)$, set $W[v] \gets t$. Then, execute $I_y$. if the operation is \alg{tree-sum}, add the value $E_y$ to the output. Finally, set $x \gets y$.
	\end{itemize}
	
	We call $C$ \emph{correct for $(G,w,\sigma)$} if applying the operations $\sigma$ to $(F,w)$ in a tree-sum data structure yields the same output. Further, we call $C$ \emph{correct} if $C$ is correct for each $(G,w,\sigma)$ where $|\sigma| \le m = \height(C)$.
	
	We now define an analog of the time complexity of a computation tree $C$.
	Let the \emph{instruction depth} $i_x$ of $x$ be $\sum_{y \in Y} |I_y|$, where $Y$ is the set of ancestors of $x$, including $x$ itself. The \emph{maximum instruction depth} of $C$ is $\MID(C) = \max_{x \in V(C)} i_x$.
	
	For a forest $F$ and $m \in \N_+$, define $\OPTCT(F,m)$ as the minimum $\MID(C)$ among all correct computation trees for $F$ of height $m$.
	
	We call $C$ \emph{succinct} if all indices $i,j$ of the registers used in instructions of $I_x$ are at most the instruction depth $i_x$ of $x$. Observe that every computation tree can be converted into a succinct computation tree with the same time complexity, by reassigning register indices based on their first use.
	
	With the next two lemmas, we show that (succinct) computation trees are equivalent to tree-sum data structures in terms of time complexity, at least when we restrict ourselves to a specific initial forest $F$. In particular, we will show that $\OPTCT(F,m) = \OPT(F,m)$.
	Below, we say a tree-sum data structure is \emph{for} a forest $F$ if it only accepts $F$ as the initial forest (with arbitrary weights).
	
	\begin{lemma}\label{p:bf:tree-to-ds}
		Let $C$ be a correct succinct computation tree of depth $m$ for a forest $F$ of size $n$ with time complexity $t$. Then, there exists a tree-sum data structure for $F$ that performs $m$ operations in time $\fO(n+m+\MID(C))$, using $\MID(C)$ additions and subtractions.
	\end{lemma}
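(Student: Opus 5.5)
The plan is to simulate $C$ directly as a pointer-based automaton. The data structure stores $C$ explicitly, which is allowed since a data structure \emph{for} $F$ may have $C$ hard-coded. Each node $x$ stores its instruction list $I_x$, its return value $E_x$ if present, and a child table indexed by operation label. Labels are $\alg{cut}(v)$, $\alg{tree-sum}(v)$, $\alg{update-weight}(v)$ for $v \in V(F)$, so a table of $3n$ entries per node, indexed by operation type and vertex index, gives constant-time child lookup. Since this structure is hard-coded, its size does not count toward the running time.

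At initialization, the data structure sets up the arrays $W$ and $R$, initializes a current pointer $x$ to the root of $C$, and executes $I_{\text{root}}$.

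\begin{itemize}
\item For $W$, it copies the $n$ initial weights into $W[v]$, costing $\fO(n)$ time and no group operations.
\item For $R$, it allocates an array of $\MID(C)$ cells. This is the key point where succinctness is used: every register index appearing in any $I_y$ on the execution path is at most $i_y \le \MID(C)$, so the register file fits in this array and all accesses are direct.
\item It never initializes the cells of $R$ to zero, since writing a group zero would be an issue in the group model. Every register is read either after being written or with its initial zero value. To avoid the latter, I would first preprocess $C$ so that reads of never-written registers are replaced: an instruction $R[i] \gets X + R[j]$ with $R[j]$ unwritten becomes a copy, which the footnote in the proof of \cref{p:optz-superadd} says can be simulated, and a return of an unwritten register returns the zero element via, say, $W[v]-W[v]$. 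This adds only $\fO(1)$ operations per affected instruction and does not increase the number of instructions. Alternatively, one can simply assume access to the zero element, as the paper does elsewhere.
\end{itemize}

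Each operation is handled as follows.
\begin{enumerate}
\item Look up the child $y$ of $x$ for the operation's label.
\item For $\alg{update-weight}(v,t)$, set $W[v] \gets t$.
\item Execute $I_y$ instruction by instruction; each instruction is one addition or subtraction plus $\fO(1)$ bookkeeping.
\item For $\alg{tree-sum}$, return the value of $E_y$.
\item Set $x \gets y$.
\end{enumerate}

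Correctness is immediate. The run follows exactly the semantics that define the output of $C$ on $(G,w,\sigma)$, and $C$ is correct for all sequences of length at most $m$.

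For cost, the instructions executed over the whole run are those in the $I_y$ along one root-to-node path of $C$, so their number is at most $\MID(C)$. This bounds the additions and subtractions by $\MID(C)$. Total time is then $\fO(n)$ for initialization, plus $\fO(1)$ per operation for lookup and pointer moves, plus $\fO(\MID(C))$ for instructions, giving $\fO(n+m+\MID(C))$.

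The only delicate step is the register model: ensuring register addresses are bounded, which is exactly what succinctness provides, and handling implicit zero initial values in the group model. I expect the zero-handling preprocessing to be the point needing the most care, though it is routine.
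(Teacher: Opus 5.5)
Your proposal is correct and follows the paper's proof, which simply observes that the output semantics of $C$ already describe the data structure: one root-to-node path is followed, at most $\MID(C)$ instructions are executed, and there is $\fO(1)$ overhead per operation and per instruction, with succinctness keeping the register file of size $\MID(C)$ as you note. The zero-handling preprocessing is unnecessary, since the paper's semantics initialize every $R[i]$ to $0$, and as stated it is slightly off: an instruction $R[i] \gets R[j] - X$ with $R[j]$ unwritten yields $-X$, which is not a copy and costs extra subtractions, so your fallback of assuming access to the zero element is the option that preserves the exact count of $\MID(C)$ additions and subtractions.
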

	\begin{proof}
		When defining the \emph{output} of a computation tree above, we essentially already constructed a tree-sum data structure based on $C$.
		The number of additions and subtractions is clearly $\MID(C)$.
		The time overhead is linear for initialization and constant for each operation or instruction.
	\end{proof}
	
	\begin{lemma}\label{p:bf:ds-to-tree}
		Let $F$ be a forest of size $n$ and let $D$ be a tree-sum data structure for $F$. Then, there exists a computation tree $C$ for $F$ with $\MID(C) \le \Time(D,F,m)$.
	\end{lemma}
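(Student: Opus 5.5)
We build $C$ by simulating $D$ symbolically, and take as the height of $C$ the number $m$ of operations fixed in $\Time(D,F,m)$. The key point is that in the group model, the control flow of $D$ cannot depend on any group element: $D$ may not compare group elements or inspect their memory representation. Everything $D$ branches on is integer data, and that data is a deterministic function of the initial forest $F$ (fixed), the sequence of operation names and vertex arguments, and nothing else. In particular, the weight parameters $x$ of $\alg{update-weight}(v,x)$ are group elements and so cannot influence control flow either. Hence for a fixed prefix of labeled operations $\alg{cut}(v)$, $\alg{tree-sum}(v)$, $\alg{update-weight}(v)$, the execution path of $D$ is identical for every group $G$, every initial weight function and every choice of update values.

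The first step is to construct the tree. Let $C$ be the valid computation tree of height $m$ whose root-to-node paths are exactly the legal label sequences of length at most $m$. Legality depends only on $F$ and on which \alg{cut}s have occurred, so the validity conditions hold by construction.

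The second step is to assign instructions to each node $x$, reached by the label sequence $\sigma_x$. We run $D$ on $F$ with the operations of $\sigma_x$, giving each update an abstract value. Let $I_x$ be the list of group additions and subtractions $D$ executes while processing the last operation of $\sigma_x$; for the root, this is the initialization phase. Group-valued memory cells of $D$ are mapped to registers $R[\cdot]$, and reads of the current weight are mapped to $W[v]$. One subtlety is that $D$ may copy or move group elements without performing arithmetic, for instance by storing an update value somewhere other than the weight cell. We handle this by tracking, for each of $D$'s group-valued cells, which symbolic location ($R[j]$ or $W[v]$) currently holds its content. Copies are then free renamings rather than instructions, and only genuine $+$ and $-$ operations produce instructions. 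Correspondingly, we interpret $W[v]$ in $C$ as the latest value passed to $\alg{update-weight}(v,\cdot)$, which is exactly what the semantics of $C$ provide. If the last operation is $\alg{tree-sum}$, the return value $E_x$ is the symbolic location of the cell $D$ returns. We must also make sure that stale values are not confused with current ones. An overwritten $W[v]$ in $C$ corresponds to an old value $D$ may still hold elsewhere, so whenever $D$ retains an old weight we remap that cell to a fresh register. This remapping is done through the copy-tracking above, which is the one place where care is needed. It is cost-free only if moving a group value into a register is not counted as an instruction; if it were counted, we could instead record it as $R[i]\gets X+R[0]$ with $R[0]$ never written, at no additional arithmetic cost.

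The third step is correctness and the cost bound. By induction along any path, the symbolic contents of the cells of $D$ and the registers of $C$ evaluate to the same group elements for every $(G,w,\sigma)$, so $C$ produces exactly the outputs of $D$, and $C$ is correct because $D$ is. The instruction depth $i_x$ equals the number of additions and subtractions $D$ performs on $\sigma_x$. This is at most $\Time(D,F,|\sigma_x|)\le\Time(D,F,m)$, where the last inequality holds because prefixes of longer legal sequences cost no more. Taking the maximum over $x$ gives $\MID(C)\le\Time(D,F,m)$. The main obstacle is the argument in the first paragraph that the branching of $D$ is independent of the weights; it follows directly from the group-model restrictions stated in the Preliminaries, so I would state it explicitly and build the rest of the proof on it.
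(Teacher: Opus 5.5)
Your construction is the paper's: the unique valid computation tree of height $m$, a simulation of $D$ along each edge from the parent's recorded state, the additions and subtractions of that step recorded as $I_y$, the returned cell as $E_y$, and the group-model restriction that makes control flow independent of the weights. The gap is in the one step you add, the remapping of retained stale weights.

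\textbf{The remap is not free.} The computation-tree model has only $+$ and $-$ instructions, and $\MID$ counts every instruction. So a register copy such as $R[i]\gets X+R[z]$, with $R[z]$ never written, adds one to $i_x$. (Also, $R[0]$ is not a register, since indices lie in $\N_+$.) It is therefore not ``at no additional arithmetic cost''. In the free-copy model you set up, $\Time(D,F,m)$ does not charge $D$ for its copy. Your third-step identity, ``$i_x$ equals the number of additions and subtractions $D$ performs on $\sigma_x$'', fails whenever a remap is triggered.

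\textbf{The timing is also unspecified.} The copy must be emitted at a strict ancestor of the node whose incoming edge overwrites $W[v]$. In your model, $D$ may read the pre-update weight inside $\alg{update-weight}(v,x)$ itself. The structure of \cref{p:tree-sum-logn} does exactly this when it computes $S[r]+x-y$. The node preceding an update edge is shared by the update edges of all vertices. Hence $C$ must keep a register copy of every current weight in advance: up to $n$ instructions at the root, plus one per update. What your argument actually establishes is $\MID(C)\le\Time(D,F,m)+n+m$, not the stated inequality.

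\textbf{How to repair it.} The weaker bound would still suffice downstream, since \cref{p:bf-opt-single-m} and \cref{p:opt-small} only need $\OPTCT(F,m)\le\OPT(F,m)+\fO(n+m)$. To prove the lemma as stated, adopt the convention the paper's proof uses implicitly; that proof turns each recorded addition or subtraction of $D$ into exactly one instruction. The convention has two parts. First, every group value in $D$'s memory other than the current weights arises from an addition or subtraction, so a copy is an addition with zero and is charged. Second, $\alg{update-weight}(v,x)$ places $x$ into $W[v]$ before $D$ acts. Under this convention, each arithmetic step of $D$ maps to one instruction on registers that updates never touch, and no stale values arise. Your copy-tracking paragraph can then be dropped, and the rest of your argument goes through as written.
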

	\begin{proof}
		The crucial observation is that, for a fixed forest $F$ and a fixed sequence of operations, the data structure $D$ performs precisely the same Word RAM operations; only the contents of the memory cells holding group elements depend on the initial weights and weight updates.
		
		We now proceed with the formal proof. First, construct a valid computation tree $C$ for $F$ of height $m$, ignoring the instruction sequences $I_x$ and return values $E_x$ for now.
		Observe that there is a unique such computation tree.
		
		Construct each $I_x$ and $E_x$ as follows. Start $D$ with some dummy weights, and record all additions and subtractions performed during preprocessing. Put all these additions and subtractions into $I_r$ for the root $r$ of $C$. Then, record the state of the algorithm (excluding contents of special registers with group elements), and (temporarily) store it as $s_r$.
		
		Now, for every possible edge labeled $\ell$ from $x$ to a child $y$, we do the following: Make a copy of $D$ with state $s_x$.
		On this copy, perform the operation $\ell$; if it is \alg{update-weight}, use a dummy weight as the second parameter. Again, record all additions and subtractions. Build the corresponding instructions, put them into $I_y$, and store the final state as $s_y$.
		If $\ell = \alg{tree-sum}(v)$, also record the memory cell returned by $D$ and store it as $E_y$.
		
		It is easy to see that $C$ is correct if $D$ is correct.
		Moreover, the maximum instruction depth of $C$ is precisely the maximum number of additions and subtractions that $D$ performs for a sequence of $m$ operations. That is, $\MID(C) \le \Time(D,F,m)$, as desired.
	\end{proof}
	
	\begin{corollary}
		For every forest $F$ and every $m \in \N_+$, we have $\OPTCT(F,m) = \OPT(F,m)$.
	\end{corollary}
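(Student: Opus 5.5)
The corollary follows by combining the two preceding lemmas, taking minima on each side. For the inequality $\OPTCT(F,m) \le \OPT(F,m)$, I take a tree-sum data structure $D$ for $F$ attaining $\OPT(F,m) = \Time(D,F,m)$. A general tree-sum data structure can be restricted to accept only $F$ as initial forest without changing $\Time(D,F,m)$, so \cref{p:bf:ds-to-tree} applies and yields a computation tree $C$ of height $m$ with $\MID(C) \le \Time(D,F,m) = \OPT(F,m)$. Correctness of $C$ is inherited from correctness of $D$, as noted in that lemma. Since $\OPTCT(F,m)$ is the minimum of $\MID$ over correct computation trees of height $m$, the inequality follows.

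For the reverse inequality $\OPT(F,m) \le \OPTCT(F,m)$, I take a correct computation tree $C$ of height $m$ with $\MID(C) = \OPTCT(F,m)$. I first make it succinct by renumbering registers in order of first use along each root-to-leaf path, which leaves $\MID$ unchanged. I then apply \cref{p:bf:tree-to-ds} to get a data structure $D$ for $F$ that handles any $m$ operations using at most $\MID(C)$ additions and subtractions. To obtain a data structure in the sense of the definition of $\OPT$, $D$ can be extended arbitrarily, for instance by a correct fallback, on initial forests other than $F$ and on sequences longer than $m$. This extension does not affect $\Time(D,F,m)$, which is the only quantity $\OPT(F,m)$ measures. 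Hence $\OPT(F,m) \le \Time(D,F,m) \le \MID(C)$.

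Two points need care. The first is that the minima exist. For $\OPTCT$, $\MID$ takes values in $\N_0$, so a minimum over a nonempty set is attained; the set is nonempty because \cref{p:bf:ds-to-tree} applied to any correct data structure (for example, one obtained from \cref{p:tree-sum-logn}) produces a correct tree. For $\OPT$, the same integrality argument applies. The second point is the bookkeeping that both notions count exactly the additions and subtractions performed over a worst-case sequence of at most $m$ operations, including the initialization instructions $I_r$. The only minor subtlety is matching "sequences of length $m$" with "at most $m$"; since every internal node of a valid tree has all its children, shorter sequences correspond to prefixes and cost no more.
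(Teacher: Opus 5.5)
Your proposal is correct and follows the paper, which gives no separate proof: it obtains the corollary by combining \cref{p:bf:ds-to-tree} and \cref{p:bf:tree-to-ds}, using the remark that any computation tree can be made succinct by renumbering registers by first use. Your remarks on why the minima are attained, on restricting or extending data structures to the fixed forest $F$, and on matching ``exactly $m$'' with ``at most $m$'' operations make explicit what the paper leaves implicit.
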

	
	\subsection{Computing optimal computation trees}\label{sec:opt-small-compute}
	
	We now show how to find an optimal computation tree for a given forest $F$ and number $m$ of operations by brute force.
	First, we bound the number of computation trees that we need to consider.
	
	\begin{lemma}\label{p:bf:tree-enum}
		Let $F$ be a forest of size $n \ge 2$ and let $m, d \in \N_+$.
		The number of valid and succinct (but not necessarily correct) computation trees for $F$ with height $m$ and maximum instruction depth $d$ is at most $2^{n^{\fO(m)} \cdot d}$,
		and they can be enumerated within the same time bound.
	\end{lemma}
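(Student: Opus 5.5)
The plan is a direct counting argument that reads off, from the definitions, exactly what data specifies a valid succinct computation tree. The edge structure is forced. By validity, every node of depth less than $m$ has one child for each legal operation label. There are at most $3n$ labels: $\alg{cut}(v)$, $\alg{tree-sum}(v)$ and $\alg{update-weight}(v)$ for each $v \in V(F)$. Hence the underlying labeled tree is unique up to the legality constraints, which are themselves determined by $F$. It has at most $\sum_{j=0}^{m} (3n)^j \le (3n)^{m+1} = n^{\fO(m)}$ nodes, using $n \ge 2$. So it remains to count the choices of instruction sequences $I_x$ and return values $E_x$ per node.

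Fix a node $x$. Since $\MID(C) \le d$, we have $|I_x| \le i_x \le d$. By succinctness, every register index appearing in $I_x$ is at most $i_x \le d$. Each instruction is therefore specified by three choices:
\begin{itemize}[nosep]
\item a target register index in $[d]$;
\item an operator in $\{+,-\}$;
\item two operands, each among the at most $d + n$ symbols $R[1],\dots,R[d],W[v]$.
\end{itemize}
This gives at most $2d(d+n)^2$ possible instructions. A sequence of length at most $d$ thus has at most $(d+1)\,(2d(d+n)^2)^{d} \le 2^{\fO(d \log(nd))}$ choices. The return value $E_x$ likewise has at most $d+n$ choices.

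Multiplying over all $n^{\fO(m)}$ nodes gives at most $2^{n^{\fO(m)} \cdot d \log(nd)}$ trees. The remaining issue is the $\log d$ factor. My first attempt would be to argue it is harmless. If $d \le 2^{n^{m}}$, then $\log d \le n^{m}$ is absorbed into $n^{\fO(m)}$. I expect only such regimes to matter, since in the application $d$ is at most the trivial upper bound on $\OPT$, which is polynomial in $n$ and $m$.

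If a clean statement for all $d$ is needed, I would instead refine the count. Charge each instruction at node $x$ to the global instruction budget along root-to-node paths, and observe that instruction depth bounds the register indices. Then bound the choices per instruction by $(n+d)^{\fO(1)}$, and sum over paths rather than nodes. This yields $\prod_{x} (n+d)^{\fO(|I_x|)}$, with $\sum_x |I_x| \le n^{\fO(m)} d$. I expect this bookkeeping to be the main obstacle, namely making the exponent exactly $n^{\fO(m)} \cdot d$ rather than $n^{\fO(m)} d \log d$. I would try to settle it by noting that $\log(n+d) \le n^{\fO(m)}$ whenever $d \le 2^{n^{\fO(m)}}$, and that for larger $d$ the claimed bound $2^{n^{\fO(m)} d}$ already dominates. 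This holds because $d \log d \le d^2$ and the count is monotone in $d$. I would handle the constants via a case split on whether $d \ge n^m$.

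For enumeration, we iterate over nodes in a fixed order (say BFS), and for each node over all admissible instruction sequences and return values. We discard combinations where some register index exceeds the running instruction depth, or where the instruction depth exceeds $d$. Each candidate is produced in time polynomial in its description size, which is $n^{\fO(m)} d$. The total time is therefore $2^{n^{\fO(m)} d}$ times a factor that is absorbed into the exponent.
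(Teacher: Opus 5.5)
Your core count is the same as the paper's. Validity forces the edge structure of $C$, so $|V(C)| \le n^{\fO(m)}$. Every $I_x$ has at most $d$ instructions, and succinctness caps register indices at $d$, so each instruction costs $\fO(\log(n+d))$ bits. Enumeration is brute force over encodings. The paper then simply writes $n^{\fO(m)} \cdot d \cdot \log\max\{n,d\} = n^{\fO(m)} \cdot d$, which tacitly assumes $\log d \le n^{\fO(m)}$. Your first resolution, restricting to $d \le 2^{n^{m}}$, makes this assumption explicit and is correct. It also covers the only use of the lemma (\cref{p:bf-opt-single-m}, where $d = \fO(m + n\log n)$).

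Your fallback for arbitrary $d$ does not work, and you should drop it. The ``refined'' count $\prod_x (n+d)^{\fO(|I_x|)}$ with $\sum_x |I_x| \le n^{\fO(m)} d$ is exactly the bound you already had, so the $\log(n+d)$ factor remains. The claim that for larger $d$ the bound $2^{n^{\fO(m)} d}$ ``already dominates'' because $d \log d \le d^2$ does not follow, since $d^2 \gg n^{\fO(m)} d$ in exactly that regime.

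No argument can cover that regime, because the bound is false there. Let $I_r$ consist of $d$ instructions $R[i] \gets W[v] + W[v]$ with arbitrary $i \in [d]$; succinctness allows this since $i_r = d$. Let all other $I_x$ be empty, and fix all return values to $W[v]$. This gives at least $d^d = 2^{d \log d}$ valid succinct computation trees of height $m$ with $\MID(C) = d$. That exceeds $2^{n^{Cm} d}$ once $\log d > n^{Cm}$, for any fixed constant $C$. So the statement genuinely needs the hypothesis $d \le 2^{n^{\fO(m)}}$, or else the weaker bound $2^{n^{\fO(m)} \cdot d \log(n+d)}$. With that hypothesis stated, your proof is complete.
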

	\begin{proof}
		Let $C$ be a succinct computation tree for $F$. There are $3n$ possible edge labels (three operation types applied to $n$ different vertices), so each node in $C$ has at most $3n$ children. Since the height of $C$ is precisely $m$, we have $|V(C)| \le n^{\fO(m)}$.
		
		Since the maximum instruction depth is at most $d$ by assumption, in particular each instruction sequence $I_x$ has length at most $d$. Thus, the total number of instructions in $C$ is $|V(C)| \cdot d$. Since $F$ is succinct, it only uses register indices up to $d$. The weight array $W$ is accessed with vertex indices at most $n$. Thus, each instruction can be encoded using at most $2 \log \max\{n,d\}$ bits, so the number of bits to encode $C$ is
		\[ n^{\fO(m)} \cdot d \cdot \log \max\{n,d\} = n^{\fO(m)} \cdot d. \]
		To enumerate $C$, simply enumerate all bitstrings of this length, and check for each whether it corresponds to a valid and succinct computation tree. The time to do the latter is clearly polynomial in the bitstring length, and thus negligible.
	\end{proof}
	
	Second, we show how to check the correctness of a computation tree and compute its maximum instruction depth (in order to select the best one later on).
	
	\begin{lemma}\label{p:bf:tree-check}
		Let $F$ be a forest of size $n \ge 2$, and let $m \in \N_+$.
		Let $C$ be a valid and succinct computation tree for $F$ of height $m$.
		Then, we can compute $\MID(C)$ and check whether $C$ is correct in time $n^{\fO(m)} \cdot \MID(C)$.
	\end{lemma}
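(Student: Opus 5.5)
The plan is to handle the two tasks separately. First compute $\MID(C)$ by a single traversal, then decide correctness by \emph{symbolic execution}: every group element appearing during the computation is tracked as a formal integer linear combination of variables.

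\textbf{Computing $\MID(C)$.} We do a DFS from the root of $C$, maintaining the running value $i_x=i_{\mathrm{parent}(x)}+|I_x|$, and take the maximum over all nodes.
\begin{itemize}[nosep]
\item Each node has at most $3n$ children and the height is $m$, so $|V(C)|\le n^{\fO(m)}$.
\item Every $|I_x|$ is at most the maximum instruction depth, so reading all instruction sequences costs $n^{\fO(m)}\cdot\MID(C)$.
\end{itemize}
Strictly, we do not know $\MID(C)$ in advance. The traversal itself reads each $I_x$ exactly once, so its total cost is $\sum_x |I_x| \le |V(C)|\cdot\MID(C)$.

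\textbf{Reducing correctness to a formal identity.} Fix a root-to-node path. Introduce one variable $a_v$ for the initial weight of each $v\in V(F)$, and one variable $b_j$ for the weight given by the $j$-th \alg{update-weight} on the path (at most $m$ of these). Along the path we maintain, for each register $R[i]$ and each $W[v]$, a coefficient vector in $\mathbb{Z}^{n+m}$:
\begin{itemize}[nosep]
\item registers start at $0$ and $W[v]$ starts at $a_v$;
\item the $j$-th update of $v$ sets $W[v]$ to $b_j$;
\item each instruction adds or subtracts two vectors.
\end{itemize}
At a node $y$ reached by $\alg{tree-sum}(v)$, the correct answer is itself a formal vector. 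Its support consists of the current weight variables of the vertices in $v$'s component of $F$ minus the cuts on the path. That component is computable in $\fO(n)$ time, and auxiliary vertices contribute their zero weights.

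We claim $C$ is correct if and only if the vector of $E_y$ equals this target at every such $y$.
\begin{itemize}[nosep]
\item If the identities hold, then for any group $G$, weights $w$ and sequence $\sigma$, substituting the actual values is a homomorphism from $\mathbb{Z}^{n+m}$ into $G$. Hence the output is correct.
\item Conversely, correctness for $G=\mathbb{Z}^{n+m}$, with the variables assigned to the unit vectors, forces the identities.
\end{itemize}
Sequences shorter than $m$ correspond to path prefixes, so they are covered automatically by checking every node. Validity guarantees that every operation sequence has a corresponding path.

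\textbf{Running the check.} We perform this simulation by DFS, copying the register state when branching or undoing changes on backtrack. Succinctness means only registers with indices up to $i_x$ are live, so the state has at most $\MID(C)+n$ vectors. Each instruction costs $\fO(n+m)$ vector-entry operations, and each comparison at a \alg{tree-sum} node costs $\fO(n+m)$ more. Over all $n^{\fO(m)}$ nodes this totals $n^{\fO(m)}\cdot\MID(C)$ entry operations.

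\textbf{Main obstacle: coefficient size.} Repeated doubling such as $R[1]\gets R[1]+R[1]$ can make entries as large as $2^{\MID(C)}$, so entries need not fit in a machine word. My first attempt would be to argue that large entries are never needed for a verdict. An entry of a returned vector must equal $0$ or $1$, so the question is whether intermediate growth can be cut off. However, intermediate values may legitimately be large and later cancel, so this does not simply work.

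The fallback is to note that all entries have bit length $\fO(\MID(C))$. I would then account for big-integer arithmetic on such numbers as $\fO(1)$ word operations, in the model in which the enumeration of \cref{p:bf:tree-enum} and the bound $2^{2^{n^{\fO(1)}}}$ of \cref{p:opt-small} are stated. Alternatively, one can check the identities modulo a fixed prime exceeding $2^{\MID(C)+1}$, which preserves both directions of the equivalence. Only this step is at risk of costing an extra polynomial factor in $\MID(C)$ if a strict Word RAM cost is insisted upon. The rest of the argument gives $n^{\fO(m)}\cdot\MID(C)$ directly.
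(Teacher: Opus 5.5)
Your proposal is correct and matches the paper's proof: simulate $C$ symbolically over the free group $\mathbb{Z}^{n'+m}$, with unit vectors for the initial weights of the non-auxiliary vertices and for each successive \alg{update-weight}, compare every \alg{tree-sum} output with the correct 0-1 vector, and pay $\fO(n+m)$ per instruction over $n^{\fO(m)}$ nodes. The coefficient-growth issue you raise is one the paper passes over by treating integer entries as unit-cost; reducing modulo a prime above $2^{\MID(C)+1}$ does not shorten the numbers, but in the only use of the lemma (\cref{p:bf-opt-single-m}) one has $\MID(C) \le \fO(m+n\log n) \le n^{\fO(m)}$, so the big-integer overhead is absorbed into $n^{\fO(m)}$ anyway.
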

	\begin{proof}
		Computing $\MID(C)$ is trivially done in time proportional to the total number of vertices and instructions in $C$, which is $|V(C)| \cdot \MID(C) \le n^{\fO(m)} \cdot \MID(C)$.
		
		Let $n'$ be the number of non-auxiliary vertices in $F$.
		To perform correctness check for operation sequences of length at most $m$, we use the commutative group $G = \mathbb{Z}^{n'+m}$; let $e_i = (0, 0, \ldots, 0, 1, 0, \ldots, 0) \in G$ for $i \in [n'+m]$ denote the canonical base vector that contains a~single value $1$ at position $i$.
		Assume that the set of non-auxiliary vertices in $F$ is indexed by $[n']$ and set the initial weight of vertex $i \in [n']$ to $e_i$.
		The $i$th call $\alg{update-weight}(\cdot, w_i)$ updates the weight of a~specified vertex to $w_i = e_{n' + i}$.
		Whenever the computation of $C$ returns a sum corresponding to some subgraph $T$ of the initial forest $F$, we check that it is a vector with only zeros and ones, where all ones correspond precisely to the current weights of the vertices in $V(T)$.
		Thus, the simulation of $C$ on a~given operation sequence can be done in time polynomial in $n$ and $m$.
		
		If any of tested sequences fails the simulation (i.e., it returns an~incorrect result for any query), then $C$ is clearly incorrect.
		Otherwise, it is easy to see that the data structure is correct for any commutative group $G$.

		Observe that we test $C$ on at most $n^{\fO(m)}$ operation sequences. Since a single addition or subtraction in $C$ can be computed in $\fO(m+n)$ time, the time to test a single operation sequence is at most $(m+n) \cdot \MID(C)$.
		The overall time is thus $n^{\fO(m)} \cdot (m+n) \cdot \MID(C) = n^{\fO(m)} \cdot \MID(C)$, as desired.
	\end{proof}
	
	\begin{corollary}\label{p:bf-opt-single-m}
		Let $F$ be a forest of size $n \ge 2$, and let $m \in \N_+$. Then, we can compute a correct computation tree $C$ for $F$ with height $m$ and $\MID(C) = \OPT(F,m)$ in time $2^{n^{\fO(m)}}$.
	\end{corollary}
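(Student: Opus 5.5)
The approach is exhaustive search over candidate computation trees in order of increasing instruction depth, stopping at the first correct one. We are given $F$ with $n \ge 2$ vertices and $m \in \N_+$, and we run $d = 0, 1, 2, \dots$ in turn. For each $d$ we use \cref{p:bf:tree-enum} to enumerate all valid and succinct computation trees for $F$ of height $m$ and maximum instruction depth at most $d$. For each enumerated tree $C$ we compute $\MID(C)$ and test correctness with \cref{p:bf:tree-check}. The first $d$ for which some correct tree with $\MID(C) \le d$ exists yields a tree with $\MID(C) = d$. Since every computation tree can be made succinct without changing its instruction depths, this $d$ equals $\OPTCT(F,m)$, and the preceding corollary gives $\OPTCT(F,m) = \OPT(F,m)$.

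The key remaining step is an a priori upper bound $d^* \le n^{\fO(m)}$ on $\OPT(F,m)$. With such a bound the search stops after at most $d^*$ rounds. Round $d$ costs at most $2^{n^{\fO(m)}\cdot d}$ for the enumeration, plus $2^{n^{\fO(m)} d}\cdot n^{\fO(m)} d$ for the checks. Summing over $d \le d^* \le n^{\fO(m)}$ gives $2^{n^{\fO(m)}\cdot n^{\fO(m)}} = 2^{n^{\fO(m)}}$, which is the claimed time.

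To obtain the bound on $d^*$, I will exhibit an explicit naive data structure. It performs no group operations during initialization, \alg{cut}, or \alg{update-weight}. On each \alg{tree-sum}, it sums by brute force the current weights $W[v]$ over the current component, which takes at most $n-1$ additions. Zero-sum edge cases, such as a component consisting only of auxiliary vertices, are handled by returning $W[v]$ of an auxiliary vertex, which is zero. This structure uses at most $m(n-1)$ additions in total. \Cref{p:bf:ds-to-tree} then gives a computation tree with $\MID \le mn$, and making it succinct preserves this value, so $d^* \le mn$.

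The main subtlety I expect is the legitimacy of this naive upper bound and the fact that the search terminates at exactly $\OPT$. The zero element must be expressible, which the paper's footnoted convention permits, or alternatively the return value $W[u]$ for an auxiliary or zero-weight vertex. I also need that $\MID$ as computed by \cref{p:bf:tree-check} is a maximum over root-to-leaf paths, so that \enquote{maximum instruction depth at most $d$} is exactly the enumeration parameter of \cref{p:bf:tree-enum}.
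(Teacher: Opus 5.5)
Your proof is correct and follows the paper's route: bound $\OPT(F,m)$ a priori, enumerate the valid succinct computation trees up to that instruction depth via \cref{p:bf:tree-enum}, filter them with \cref{p:bf:tree-check}, and use $\OPTCT(F,m)=\OPT(F,m)$. The differences are cosmetic and both stay within $2^{n^{\fO(m)}}$ because $mn \le n^{\fO(m)}$ for $n\ge 2$: you bound $\OPT(F,m)\le m(n-1)$ by brute-force summation where the paper uses $\fO(m+n\log n)$ from \cref{p:tree-sum-logn}, and you search by increasing $d$ where the paper minimizes over all trees up to the bound.
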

	\begin{proof}
		From \cref{p:tree-sum-logn,p:bf:ds-to-tree}, it follows that $\OPT(F,m) \le \fO(m + n \log n)$.
		Enumerating all possible valid and succinct computation trees with maximum instruction depth $d \le \fO(m + n \log n)$ can be done within the time bound above using \cref{p:bf:tree-enum}.
		We can filter out incorrect computation trees and select one $C$ with $\MID(C) = \OPTCT(F,m) = \OPT(F,m)$ using \cref{p:bf:tree-check}. The overall running time is as stated.
	\end{proof}
	
	By now, we have shown how to compute optimal computation trees when the number $m$ of operations is known. We now proceed to build an (asymptotically) optimal data structure for arbitrary, initially unknown $m$, thereby proving \cref{p:opt-small}.
	
	The idea is to start with an optimal data structure $C_m$ for some constant $m$. As soon as we receive the $(m+1)$th instruction, we switch to an optimal data structure $C_{m'}$ with a larger $m'$, and so on. It is tempting to just double $m$ in every step, and try to apply a geometric sum argument.
	However, we do not actually know enough about the function $\OPT(F,m)$ to make this work.
	Instead, we start with a constant \emph{time budget} $t$, compute the largest $m$ such that $\OPT(F,m) \le t$, and use $C_m$. Whenever we go over the limit of $m$ operations, we double the time budget, and recompute $m$. With this, the geometric sum argument works.
	
	We remark that Pettie and Ramachandran use first strategy of doubling $m$ in their optimal data structure for the \emph{split-findmin} problem~\cite[Appendix B]{PettieRamachandran2005}. It appears that the same problem (and solution) applies to that problem, so their proof is erroneous, but fixable.
	
	\restateOptSmall*
	\begin{proof}
		For each $1 \le m \le n^2$, compute a correct computation tree $C_{m}$ of height $m$ and with $\MID(C_m) = \OPT(F,m)$, using \cref{p:bf-opt-single-m}.
		Note that we can at the same time compute and store the value $t(m) \coloneq \MID(C_m) + n + m$ for each $m$.
		Observe that $t_1 \le 2n+1 \le 3n$, since $t_1 \le \OPT(F,1) \le n$ (a single query can be answered by adding all required vertex weights by brute force).
		Turn each $C_{m}$ into a data structure $D_{m}$ using \cref{p:bf:tree-to-ds}; the data structure has total running time $\fO(\OPT(F,m) + n+m) = \fO(t(m))$ for any sequence of at most $m$ operations.
		
		Suppose for now that the (initially unknown) number of operations is at most $n^2$.
		Our data structure now works as follows:
		Set $t^* = 3n$, and set $m^*$ to the maximum $m \leq n^2$ such that $t(m) \le t^*$. Observe that $m^*$ is well defined, since $t_1 \le 3n$, as shown above.
		Also store the list $\sigma$ of operations received so far, and initialize the data structure $D \gets D_{m^*}$.
		
		As long as $|\sigma| \le m^*$, we delegate all operations to $D$.
		Now suppose we receive the $(m^*+1)$th instruction. Then, we set $t^* \gets 3t^*$, recompute $m^*$ to be again the maximum $m$ such that $t(m) \le t^*$, reinitialize a new data structure $D \gets D_{m^*}$, and then replay all operations so far on $D$. We continue delegating to $D$, while occasionally rebuilding.
		
		Let now $m$ be the actual number of operations, and let $m^*_i$ be the number of operations before the $i$th update was triggered.
		The number of rebuildings we have performed is $q = \log_3 \tfrac{t^*}{3n}$; let $m^*_{q+1}$ denote the final value of $m^*$.
		Observe that $m^*_1 \leq \ldots \leq m^*_{q+1}$.
		Moreover, $m^*_q < m \leq m^*_{q+1}$ by construction.
		If $q = 0$, then $t(m) \leq t(m^*_1) \leq t^* = 3n$ and so the total running time of $D$ is $\fO(n)$.

		Now assume $q \geq 1$. Then $m > m^*_q$, so in particular $t(m) > \tfrac13 t^*$ since $m^*_q$ is maximal such that $t(m^*_q) \leq \tfrac13 t^*$.
		Thus $t^* < 3t(m) \leq 3(\OPT(F, m) + n + m)$.
		Also $t(m^*_{i}) \leq 3^{i-q-1} t^*$ for all $i \in [q + 1]$ by construction.

		The overall running time of our data structure is clearly $\sum_{i=1}^{q+1} \fO( t(m^*_i))$.
		The required bound on the running time follows from a simple calculation:

		\[
			\sum_{i=1}^{q+1} t(m^*_i) \leq \sum_{i=1}^{q+1} 3^{i-q-1} t^* \leq \tfrac43 t^* < 4(\OPT(F, m) + n + m).
		\]

		We now discuss how to handle the case where the final number of operations $m$ is greater than $n^2$.
		For the first $T$ operations, we proceed as above; this takes time $\fO(\OPT(F, n^2) + n^2)$.
		Observe that by \cref{p:tree-sum-logn} we have in particular $\OPT(F, n^2) \leq \fO(n \log n + n^2) = \fO(n^2)$.
		Therefore, the first $n^2$ operations are processed in total time $\fO(n^2)$.
		After $n^2$ operations, we reinitialize $D$ to the data structure of \cref{p:tree-sum-logn}, replaying the list $\sigma$ of received operations one final time; and afterwards, we relay all subsequent operations to that data structure.
		The data structure takes time $\fO(n \log n + m) = \fO(m)$ to process all $m$ queries.
		Adding these two terms yields the running time of $\fO(m) \le \fO(\OPT(F,m) + n + m)$.
	\end{proof}

	\section{Proof of \cref{t:adapted-prefix-sum-lb}}
	\label{sec:adapted-prefix-sum-lb}

	For completeness, we include the proof of \cref{t:adapted-prefix-sum-lb}.
	We recall the statement below:
	\restateAdaptedPatrascuDemaine*
	
	While the proof can be deduced from the techniques of the original work of P{\u{a}}tra{\c{s}}cu and Demaine~\cite{PatrascuDemaine2006}, we find it more convenient to use the framework laid out by P{\u{a}}tra{\c{s}}cu's thesis~\cite[Chapter 3]{Patrascu2008}.
	His lower bound assumes that a~data structure maintains an array $A$ of $n$ integers from $0$ to $U - 1$, where $U \geq n^{\Omega(1)}$, and the sequence of operations to $A$ is governed by a~permutation $\pi$ on $[n]$.
	The sequence of operations is as follows: For each $i \in [n]$, the \alg{partial-sum} of initial $\pi(i)$ elements of $A$ is queried first, and then $A[\pi(i)]$ is \alg{update}d to some new value from $0$ to $U - 1$, chosen uniformly at random.
	Let us call such a~sequence of operations \emph{single-epoch-based}, and each data structure executing such a~sequence a~\emph{single-epoch-based data structure}.
	
	In order to give a~lower bound on the number of cell probes required to execute this sequence of operations, we introduce a~\emph{lower-bound tree} $\mathcal{T}$: a~binary tree with $n$ leaves.
	Each non-leaf vertex has a~left and a~right child.
	The leaves of $\mathcal{T}$ are identified with the indices $1, 2, \ldots, n$ of the array $A$, from left to right.
	For each internal vertex $v$ of $\mathcal{T}$, we define the \emph{interleaving} at $v$, denoted $\mathrm{IL}(v)$, as follows.
	Let $L_v$ be the set of values $\pi(i)$ for all leaves $i$ in the left subtree of $v$, and $R_v$ be the analogous set for the right subtree of $v$.
	The interleaving at $v$ is defined as the number of pairs $(x,y)$ with $x \in L_v$, $y \in R_v$, where $x$ comes immediately before $y$ in the sorted sequence of $L_v \cup R_v$.
	The \emph{total interleaving} of $\pi$ (with respect to $\mathcal{T}$) is defined as the sum of interleavings at all internal vertices of $\mathcal{T}$.
	We then have:
	
	\begin{theorem}[{\cite[Theorem 3.6]{Patrascu2008}}]
		\label{t:orig-patrascu-lb}
		Consider any cell-probe single-epoch-based data structure for \alg{partial-sum} on a~length-$n$ array with entries from $0$ to $U - 1$ ($U \geq n^{\Omega(1)}$) that may use Las Vegas randomization.
		The structure executes the sequence of operations governed by a~permutation $\pi$ on $[n]$.
		The expected number of cell probes performed by the data structure is at least $\Theta\left(\sum_{v \in \mathcal{T}} \mathrm{IL}(v) - n\right)$.
	\end{theorem}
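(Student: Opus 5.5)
The plan is the standard information-transfer argument of P{\u{a}}tra{\c{s}}cu and Demaine, organised along the lower-bound tree $\mathcal{T}$. Think of the leaves of $\mathcal{T}$ as time steps $1,\dots,n$: at time $i$ the structure answers the query on prefix $\pi(i)$ and then performs the random update of $A[\pi(i)]$. For each internal node $v$, let the \emph{information transfer} $\mathrm{IT}(v)$ be the set of cells $c$ with two properties. First, $c$ is read during the time interval of the right subtree of $v$. Second, the last write to $c$ before that read happened during the time interval of the left subtree of $v$.

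\textbf{Charging.} I would charge every probe to exactly one node. Take a read at time $j$ of a cell last written at time $i<j$. It belongs to $\mathrm{IT}(v)$ only for $v$ the lowest common ancestor of the leaves $i$ and $j$. Hence the total number of probes is at least $\sum_v |\mathrm{IT}(v)|$, and it suffices to show $\mathbb{E}|\mathrm{IT}(v)| \ge \Omega(\mathrm{IL}(v)) - \fO(1)$ for every node. Summing the additive $\fO(1)$ loss over the $n-1$ internal nodes gives the ``$-n$'' term in the statement.

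\textbf{Encoding argument at a node $v$.} Fix all update values outside the left subtree of $v$, and fix the random coins of the structure. The only remaining randomness is then the $|L_v|$ update values written during the left interval, each uniform in $[0,U-1]$. Take an interleaving pair $x<y$ with $x\in L_v$, $y\in R_v$ consecutive in $L_v\cup R_v$. Compare the query on prefix $y$ (answered during the right interval) with the query at the previous element of $R_v$ in sorted order. Their difference contains the sum of the left-interval updates at positions between them, and this sum includes the fresh value written at $x$. Choosing one such position per interleaving pair, the right-interval answers determine $\mathrm{IL}(v)$ independent uniform quantities. So these answers carry at least $\mathrm{IL}(v)\log U - \fO(1)$ bits of entropy.

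On the other hand, an encoder that knows everything can transmit these answers in few bits. It sends the addresses and contents of the cells in $\mathrm{IT}(v)$, using $\fO(\log n + b) = \fO(\log n)$ bits per cell, plus $\fO(1)$ bits so that the decoder can recognise which reads hit $\mathrm{IT}(v)$. The decoder knows the fixed past, so it can simulate the structure up to the start of the left interval. It then simulates the right interval directly, skipping the left interval. Any read of a cell listed in $\mathrm{IT}(v)$ is served from the message. Any read of a cell written during the right interval itself is served from the decoder's own simulation. Every other read is served from the pre-left memory state. From the simulation the decoder recovers all right-interval query answers. Comparing message length with entropy gives $\mathbb{E}|\mathrm{IT}(v)|\cdot \fO(\log n) \ge \mathrm{IL}(v)\log U - \fO(1)$. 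Since $\log U = \Omega(\log n)$, this yields the desired $\Omega(\mathrm{IL}(v))$. Las Vegas randomness is handled by giving encoder and decoder shared coins and averaging over them.

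\textbf{Main obstacle.} The delicate step is making the decoder's simulation faithful. It must distinguish, for each read in the right interval, whether the cell was last written in the left interval (and is therefore in the message), or earlier (pre-left state), or during the right interval. Encoding this membership naively per probe would cost $\fO(1)$ bits per right-interval probe, not per transferred cell, and would destroy the bound. My first attempt is P{\u{a}}tra{\c{s}}cu's trick: also send the set of addresses \emph{written} during the left interval, encoded relative to the read set via a retrieval or membership structure, so that the overhead stays proportional to $|\mathrm{IT}(v)|$ plus the cost of the probes being charged.
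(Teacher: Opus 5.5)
The paper does not prove this statement; it imports it from P{\u{a}}tra{\c{s}}cu's thesis~\cite{Patrascu2008}. Your approach is the standard argument behind it: charge each transferred cell to the LCA of its write and read times, and prove $\mathbb{E}|\mathrm{IT}(v)| \ge \Omega(\mathrm{IL}(v)) - \fO(1)$ at each node by an encoding argument. Your charging and encoding paragraphs are sound. The problem is your last paragraph: the obstacle it raises does not exist, and the remedy it proposes would break the proof.

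The decoder needs no per-probe information to classify a read of address $a$ in the right interval. It applies three checks in order:
\begin{enumerate}[nosep]
\item If $a$ was written earlier in its own simulation of the right interval, use that value.
\item Otherwise, if $a$ occurs among the addresses in the message, use the transmitted content.
\item Otherwise, use the pre-left memory state.
\end{enumerate}
The third case is correct. If $a$ had been written during the left interval, then, since the first check failed, the last write before this read lies in the left interval. That puts $a$ in $\mathrm{IT}(v)$, so $a$ would be in the message.

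The order matters. A cell of $\mathrm{IT}(v)$ that is overwritten in the right interval and read again must be served from the simulation, not the message, but your list states these cases in the opposite order.

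So the message is just $\fO(b)$ bits per cell of $\mathrm{IT}(v)$, plus $\fO(\log n)$ bits to delimit the list. It is this delimiter, not membership testing, that gives the additive $-\fO(1)$ per node and hence the $-n$ in the statement. No cleverer address encoding is needed here, since $b = \fO(\log n)$ and $\log U = \Omega(\log n)$.

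Sending the set of cells written during the left interval, by contrast, costs bits proportional to the number of left-interval writes. Each write lies in the left subtree of up to $\Theta(\log n)$ ancestors. After summing over nodes you would only get that the total number of probes times $\log n$ is $\Omega(\sum_v \mathrm{IL}(v))$, which is trivial.

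A minor precision point: the right-interval answers determine $\mathrm{IL}(v)$ sums over pairwise disjoint sets of left-interval updates, not the chosen values themselves. Either also condition on the unchosen left-interval updates, or note that each such sum has entropy at least $\log U$ and the sums are independent.
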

	
	Note that \cref{t:orig-patrascu-lb} holds for any permutation $\pi$ and lower-bound tree~$\mathcal{T}$.
	P{\u{a}}tra{\c{s}}cu then concludes his lower bound proof by showing that a~random permutation $\pi$ has a~large expected total interleaving:
	
	\begin{lemma}[{\cite[Claim 3.7]{Patrascu2008}}]
		\label{l:random-perm-interleaving}
		Let $\pi$ be a~permutation on $[n]$ chosen uniformly at random.
		Then, the expected total interleaving of $\pi$ (with respect to some fixed lower-bound tree $\mathcal{T}$ with $n$ leaves) is $\Omega(n \log n)$.
	\end{lemma}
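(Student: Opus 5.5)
We need to prove that for a uniformly random permutation $\pi$ on $[n]$, the expected total interleaving with respect to a fixed lower-bound tree $\mathcal{T}$ is $\Omega(n\log n)$. The argument is local: for each internal vertex $v$ we lower-bound the expected interleaving $\mathbb{E}[\mathrm{IL}(v)]$ in terms of the sizes $a = |L_v|$ and $b = |R_v|$. The key observation is that, conditioned on the set $L_v \cup R_v$, the split of that set into the $a$ values coming from the left subtree and the $b$ values coming from the right subtree is uniformly random. So $\mathrm{IL}(v)$ has the same distribution as the number of positions in a uniformly random binary string with $a$ letters $L$ and $b$ letters $R$ where an $L$ is immediately followed by an $R$.

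By linearity of expectation over the $a+b-1$ adjacent pairs of the sorted sequence, each pair is an ``$L$ then $R$'' pair with probability $\tfrac{a}{a+b}\cdot\tfrac{b}{a+b-1}$. Hence
\[
\mathbb{E}[\mathrm{IL}(v)] = \frac{ab}{a+b} \ge \tfrac12 \min\{a,b\}.
\]
The claim is therefore reduced to a purely combinatorial statement about binary trees: for any binary tree with $n$ leaves, $\sum_{v} \min\{|L_v|,|R_v|\} = \Omega(n\log n)$.

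This combinatorial statement is the main obstacle, because it is false for arbitrary trees. For a caterpillar (path-like) tree every $\min$ equals $1$, giving only $\Theta(n)$. The statement therefore has to use the specific tree $\mathcal{T}$, and the lemma's phrase ``some fixed lower-bound tree'' should be read as P\u{a}tra\c{s}cu's choice of a perfectly balanced tree; I would make this explicit, taking $n$ a power of two, or padding otherwise. With a balanced tree the bound is immediate. At depth $d$ there are $2^d$ internal vertices, each with $|L_v| = |R_v| = n/2^{d+1}$, so each level contributes $\sum_v \tfrac12\min\{a,b\} = n/4$. Summing over the $\log n$ levels gives $\Omega(n\log n)$.

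If $n$ is not a power of two, I would use a tree whose two subtrees always have sizes $\lfloor s/2\rfloor$ and $\lceil s/2\rceil$. There are $\lfloor\log n\rfloor$ levels, each full level contributes $\Omega(n)$, and the same bound follows.
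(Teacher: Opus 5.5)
Your proof is correct and is essentially the standard argument behind P\u{a}tra\c{s}cu's Claim~3.7, which the paper cites without proof. On a balanced lower-bound tree, the left/right labels of the sorted values below each node form a uniformly random string, so each node contributes expected interleaving proportional to its subtree size (you compute it exactly as $\frac{ab}{a+b}$); this gives $\Omega(n)$ per level and $\Omega(n\log n)$ in total. Your remark that the tree must be chosen suitably is also correct, since a caterpillar yields only $O(n)$, and it is consistent with how the paper later uses the lemma, taking ``the lower-bound tree from'' this lemma.
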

	
	We now turn to the proof of \cref{t:adapted-prefix-sum-lb}.
	Consider a~cell-probe data structure $D$ processing epoch-based sequences of operations as defined in the theorem statement.
	We construct a~single-epoch-based data structure $D'$ as follows.
	Suppose $D'$ is initialized with an array $A'$ of size $n' = En$ with initial values from $0$ to $U$.
	For $i \in [n]$, let the $i$th \emph{block} of $A'$ be the subarray $A'[(i-1)E + 1 \ldots iE]$.
	Intuitively, the $i$th block of $A'$ will be compressed to the element $A[i]$ of the array $A$ of size $n$ used by $D$; and during each of $E$ epochs of $D$, one element of each block will be updated.
	
	We now choose a~permutation $\pi'$ on $[n']$ governing the sequence of operations to $D'$.
	Let $\pi$ be a~permutation on $[n]$ chosen uniformly at random.
	Then:
	\[
	\pi'((p - 1)n + q) = (\pi(q) - 1)E + p \quad \text{for all } p \in [E], q \in [n];
	\]
	that is, in the $q$th operation of the $p$th epoch, we \alg{update} the $p$th element of the $\pi(q)$th block of $A'$.
	Hence the first epoch \alg{update}s the first element of each block, the second epoch \alg{update}s the second element of each block, and so on.
	The order of blocks being updated in each epoch is identical and determined by the random permutation $\pi$.
	We observe that $\pi'$ has large expected interleaving:
	
	\begin{lemma}
		\label{l:adapted-random-perm-interleaving}
		The expected total interleaving of $\pi'$ (with respect to some fixed lower-bound tree $\mathcal{T}'$ with $n'$ leaves) is $\Omega(n' \log n')$.
	\end{lemma}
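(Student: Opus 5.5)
I would prove \cref{l:adapted-random-perm-interleaving} by choosing the lower-bound tree $\mathcal{T}'$ so that it reflects the epoch structure. Then each epoch behaves exactly like a copy of the random permutation $\pi$, and \cref{l:random-perm-interleaving} applies to each copy. Concretely, fix any balanced binary tree $\mathcal{T}$ with $n$ leaves and any binary tree $\mathcal{E}$ with $E$ leaves. Let $\mathcal{T}'$ be obtained from $\mathcal{E}$ by replacing its $p$th leaf (from the left) with a copy $\mathcal{T}_p$ of $\mathcal{T}$. Then $\mathcal{T}'$ has $En = n'$ leaves. Since leaves are numbered from left to right, the leaves of $\mathcal{T}_p$ are exactly the indices $(p-1)n+1,\dots,pn$, i.e.\ the operations of the $p$th epoch. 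Its $q$th leaf carries the value $\pi'((p-1)n+q) = (\pi(q)-1)E+p$.

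The key step is the claim that for every internal vertex $v$ of $\mathcal{T}_p$, the interleaving $\mathrm{IL}(v)$ with respect to $\pi'$ equals the interleaving of the corresponding vertex of $\mathcal{T}$ with respect to $\pi$. This holds because the map $\pi(q)\mapsto(\pi(q)-1)E+p$ is strictly increasing for fixed $p$. Hence the sorted order of $L_v\cup R_v$ and its left/right labels are the same as for $\pi$, and so are the counts of adjacent (left, right) pairs. Interleavings are nonnegative, so we may discard the contributions of the internal vertices of $\mathcal{E}$. Summing over $p$ and applying \cref{l:random-perm-interleaving} to each copy (the same $\pi$ is used in every epoch, but linearity of expectation only needs each copy's expectation), we get
\[ \mathbb{E}\Bigl[\textstyle\sum_{v\in\mathcal{T}'}\mathrm{IL}(v)\Bigr] \ge E\cdot\Omega(n\log n). \]

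It remains to convert this to $\Omega(n'\log n')$. Since $E\le n^{\fO(1)}$, we have $\log n' = \log E+\log n = \fO(\log n)$, so $En\log n = \Omega(n'\log n')$. One degenerate case needs care. If $n$ is constant, \cref{l:random-perm-interleaving} is vacuous, but the theorem is presumably only needed for growing $n$; if necessary, one can note that the top vertices of $\mathcal{E}$ each have interleaving at least about $n$, which handles small cases. The only step that takes real thought is arranging $\mathcal{T}'$ so that every epoch occupies a whole subtree. After that, the monotonicity observation makes everything routine, so I do not expect a serious obstacle.
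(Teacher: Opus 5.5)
Your proposal is correct and follows essentially the same route as the paper. In both, $\mathcal{T}'$ is built from $E$ copies of $\mathcal{T}$ placed on consecutive leaf ranges, one per epoch; each copy reproduces the interleavings of $\pi$ because $x \mapsto (x-1)E+p$ is increasing for fixed $p$; and $E \le n^{\fO(1)}$ turns $En\log n$ into $\Omega(n'\log n')$. Your ``at least'' is slightly more accurate than the paper's ``equal'', since each linking vertex contributes exactly $n$, and your degenerate-case aside is unnecessary because $E \le n^{\fO(1)}$ keeps $n'$ bounded whenever $n$ is.
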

	\begin{proof}
		Let $\mathcal{T}$ be the lower-bound tree with $n$ leaves from \cref{l:random-perm-interleaving} used to analyze~$\pi$.
		We construct $\mathcal{T}'$ by creating $E$ copies $\mathcal{T}_1, \ldots, \mathcal{T}_E$ of $\mathcal{T}$, and relabeling the leaf $q$ of $\mathcal{T}_p$ with $(p - 1)n + q$; then, we link all the copies of $\mathcal{T}'$ to form a~new binary tree $\mathcal{T}$ containing the subtrees $\mathcal{T}_1, \ldots, \mathcal{T}_E$ in the left-to-right order.
		
		By the definition of $\pi'$, the value $\sum_{v \in \mathcal{T}_p} \mathrm{IL}(v)$ is equal to the total interleaving of $\pi$ with respect to $\mathcal{T}$ for each $p \in [E]$; this is because within each epoch, the order of blocks being updated is identical and determined by $\pi$.
		Therefore, the total interleaving of $\pi'$ with respect to $\mathcal{T}'$ is equal to $E$ times the total interleaving of $\pi$ with respect to $\mathcal{T}$.
		The lemma now follows from \cref{l:random-perm-interleaving} and the fact that $\Omega(En \log n) = \Omega(n' \log n')$; note that here we use that $E \leq n^{\fO(1)}$.
	\end{proof}
	
	\cref{t:orig-patrascu-lb,l:adapted-random-perm-interleaving} now imply that the expected number of cell probes performed by $D'$ is at least $\Omega(n' \log n')$.
	It remains to implement $D'$ using $D$.
	To this end, we first notice that the \alg{partial-sum} queries \emph{within each block} can be implemented efficiently:
	\begin{lemma}
		\label{l:prefix-sum-blocks}
		Suppose $B$ is a~dynamic integer array of length $s$ undergoing $s$ \alg{update}s, where the $p$th \alg{update} modifies $B[p]$.
		A~data structure reporting partial sums of $B$ can be initialized in $\fO(s)$ time and perform each query in $\fO(1)$ time.
	\end{lemma}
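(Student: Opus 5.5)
The key observation is that updates arrive in index order: the $p$th \alg{update} modifies $B[p]$. So at any moment there is a \emph{frontier} $p$ (the number of updates performed so far). Every position $i \le p$ holds its updated value, and every position $i > p$ still holds its initial value. I will maintain two prefix-sum arrays, one for each of these two regimes, and combine them at query time.

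At initialization, in $\fO(s)$ time, I compute the prefix sums of the initial array, $P_0[i] = \sum_{j \le i} B_0[j]$ for $i \in [0,s]$, and store them in a table. I also allocate a second table $P_1$ of length $s+1$ with $P_1[0] = 0$, together with a counter $p \gets 0$.

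When the $p$th \alg{update} sets $B[p] \gets x$, I set $P_1[p] \gets P_1[p-1] + x$ and increment the counter. This takes $\fO(1)$ time and maintains the invariant that $P_1[i] = \sum_{j \le i} B[j]$ for all $i \le p$. That invariant holds because the entries $B[1..i]$ for $i \le p$ never change again after their single update.

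A query $\alg{partial-sum}(k)$ is answered by a case split:
\begin{itemize}
\item If $k \le p$, return $P_1[k]$.
\item If $k > p$, return $P_1[p] + P_0[k] - P_0[p]$. This is correct because positions $p+1,\dots,k$ still hold their initial values.
\end{itemize}
Each case is a constant number of table lookups and arithmetic operations, so queries take $\fO(1)$ time.

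There is no real obstacle here. The only point requiring care is that the invariant on $P_1$ relies on positions being updated in increasing order and never revisited, which is exactly the hypothesis of the lemma. If the model also counts cell probes, the argument is unchanged, since each operation touches $\fO(1)$ cells.
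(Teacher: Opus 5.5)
Your proposal is correct and matches the paper's proof essentially step for step. The paper likewise stores prefix sums of a frozen copy $B_0$ of the initial array, extends an array $B_1$ of updated prefix sums by one entry per update while keeping a counter $p$, and answers a query at $i$ with $B_1[i]$ if $i \le p$ and with $B_1[p] + B_0[p+1 \dots i]$ otherwise; your sentinel $P_1[0]=0$ also covers the case $p=0$, which the paper leaves implicit.
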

	\begin{proof}
		On initialization, we copy $B$ to $B_0$ and compute all partial sums of $B_0$ in $\fO(s)$ time; this allows us to determine the sum of any interval $B_0[i \ldots j]$ in $\fO(1)$ time.
		We also incrementally maintain \emph{updated} partial sums: Let $B_1$ be an initially zeroed array of length $s$.
		After $p$ \alg{update}s, each $B_1[i]$ for $i \in [p]$ contains the current sum of $i$ first elements of $B$.
		When $B[p]$ is \alg{update}d, $B_1[p]$ is computed from $B_1[p-1]$ (if it exists) and the new value of $B[p]$ in $\fO(1)$ time.
		Finally, we keep a counter $p$ of the number of \alg{update}s performed so far. 
		A~\alg{partial-sum} query on $B$ at index $i$ can be computed in $\fO(1)$ time as follows: If $i \le p$, return $B_1[i]$. Otherwise, return $B_1[p] + B_0[p+1 \dots i]$.
	\end{proof}
	
	Now, on initialization of $D'$, we instantiate a~data structure $X_1, \ldots, X_n$ of \cref{l:prefix-sum-blocks} for each of the $n$ blocks of $A'$.
	We would like to initialize an array $A$ of size $n$ for $D$, where $A[i]$ contains the sum of elements in the $i$th block of $A'$.
	However, this poses an~issue: the elements of $A$ may be as large as $(U - 1) \cdot E$, while $D$ only supports elements up to $U - 1$.
	To resolve this, we represent each element of $A$ in base $U$. %
	More precisely, let $c \in \N$ be a~constant such that $U^c > (U - 1) E$.
	We actually initialize $c$ arrays $A_0, A_1, \ldots, A_{c-1}$ of size $n$ for $D$, preserving the following invariant: for each $i \in [n]$, the sum of elements in the $i$th block of $A'$ is precisely $\sum_{j=0}^{c-1} A_j[i] \cdot U^j$.
	For convenience, in the following description, we let $A$ denote the collection of arrays $A_0, A_1, \ldots, A_{c-1}$ (where each array is handled by a~separate instance of $D$), and whenever we speak of updating an element of $A$ or querying a~partial sum of $A$, we actually mean updating or querying all arrays $A_0, A_1, \ldots, A_{c-1}$ appropriately. %
	
	To process the $((p - 1)n + q)$th \alg{update} of $D'$, which changes the $p$th element of the $\pi(q)$th block of $A'$, we first perform the \alg{update} on $X_{\pi(q)}$, and then replace $A[\pi(q)]$ with the altered sum of elements in the $\pi(q)$th block of $A'$.
	This involves one \alg{update} to $X_{\pi(q)}$, $c$ \alg{update}s to the arrays in $A$, and additional constant overhead.
	To process the $((p - 1)n + q)$th \alg{partial-sum} query of $D'$, which requests the sum of the first $(\pi(q) - 1)E + p$ elements of $A'$, we first query $A$ for the sum of the first $\pi(q) - 1$ elements (i.e., the total sum of initial $\pi(q) - 1$ blocks of $A'$), and then query $X_{\pi(q)}$ for the sum of the first $p$ elements of the $\pi(q)$th block, adding these to the result.
	Similarly, this involves one query in $X_{\pi(q)}$ and $c$ queries in $A$, plus constant overhead.
	
	Summing up, all $\Theta(n')$ operations of $D'$ can be reduced to:
	\begin{itemize}
		\item $\Theta(cn') = \Theta(n')$ operations on $D$,
		\item $\Theta(n')$ operations on the data structures $X_1, \ldots, X_n$ of \cref{l:prefix-sum-blocks}, each taking $\fO(1)$ time, and
		\item additional $\Theta(n')$ overhead.
	\end{itemize}
	
	Hence, by the lower bound on $D'$, the total number of cell probes performed by $D$ is at least $\Omega(n' \log n' - n') = \Omega(n' \log n')$.
	This concludes the proof.

\end{document}